%% file: main.tex
\documentclass{article}
\usepackage{amsmath,amssymb,amsthm,amsfonts,latexsym,mathtools,mathdots,bbm,graphicx,float}
\usepackage{bbold}
\usepackage[letterpaper,margin=1in]{geometry}
\usepackage[backref,colorlinks,citecolor=blue,bookmarks=true]{hyperref}
\usepackage[nameinlink]{cleveref}
\usepackage{mdframed}
\usepackage[dvipsnames]{xcolor}
\usepackage{thmtools}
\usepackage{thm-restate}
\usepackage[most]{tcolorbox}
\usepackage{relsize}%

\usepackage{tikz}
\usetikzlibrary{fit}
\usepackage{caption}
\usepackage{subcaption}

\title{Amortized Relaxed Locally Decodable Codes}
\author{Jeremiah Blocki and Justin Zhang}
\date{}

\begin{document}

\include{macros}
\newtheoremstyle{break}
  {\topsep}{\topsep}%
  {\itshape}{}%
  {\bfseries}{}%
  {\newline}{}%
\theoremstyle{break}
\newtheorem{theorem}{Theorem}
\newtheorem*{remark}{Remark}
\newtheorem{lemma}[theorem]{Lemma}
\newtheorem{corollary}[theorem]{Corollary}
\newtheorem{definition}[theorem]{Definition}
\newtheorem{proposition}[theorem]{Proposition}
\newtheorem{construction}{Construction}

\maketitle

\begin{abstract}
 \input{sections/abstract}   
\end{abstract}

\section{Introduction}
\input{sections/introduction}

\section{Technical Overview}

\input{sections/technicalOverview}

\section{Preliminaries}\label{sect:Prelim}
\input{sections/preliminaries}

\section{Fuzzy Block Decoding Codes}\label{sect:FBDC}
\input{sections/fbdc}

\section{Amortized Correctability to Amortized Decodability}\label{sect:LCCtoLDC}
\input{sections/alccToAldc}

\section{All Together: Instantiating the Ideal Construction}\label{sect:AllTogether}
\input{sections/constconstruction}

\section{Amortized Locality Less Than  \texorpdfstring{$2$}{1}}\label{sect:constrApproach1}
\input{sections/approach1construction}

\bibliographystyle{IEEEtran}
\bibliography{abbrev0,crypto,other}

\appendix

\input{sections/appendix}

\section{Additional Discussion of Related Work}\label{sect:PriorWork}

\input{sections/priorwork}

\end{document}

%% file: macros.tex
\newcommand{\todo}[1]{\textcolor{red}{\underline{Todo}: #1}}
\newcommand{\jadd}[1]{\textcolor{PineGreen} {#1}}
\newcommand{\justin}[1]{\textcolor{blue} {\underline{Justin says:} #1}}
\newcommand{\jrem}[1]{\textcolor{purple} {\underline{Justin removed:} #1}}

\newcommand{\new}[1]{\textcolor{MidnightBlue}{#1}}
\newcommand{\jeremiah}[1]{\textcolor{red}{Jeremiah: #1}}

\newcommand{\msf}[1]{\ensuremath{\mathsf{#1}}}
\newcommand{\mc}{\mathcal}

\newcommand{\calA}{\mathcal{A}}
\newcommand{\calB}{\mathcal{B}}
\newcommand{\calC}{\mathcal{C}}
\newcommand{\calD}{\mathcal{D}}
\newcommand{\calE}{\mathcal{E}}
\newcommand{\calF}{\mathcal{F}}
\newcommand{\calG}{\mathcal{G}}
\newcommand{\calH}{\mathcal{H}}
\newcommand{\calI}{\mathcal{I}}
\newcommand{\calJ}{\mathcal{J}}
\newcommand{\calK}{\mathcal{K}}
\newcommand{\calL}{\mathcal{L}}
\newcommand{\calM}{\mathcal{M}}
\newcommand{\calN}{\mathcal{N}}
\newcommand{\calO}{\mathcal{O}}
\newcommand{\calP}{\mathcal{P}}
\newcommand{\calQ}{\mathcal{Q}}
\newcommand{\calR}{\mathcal{R}}
\newcommand{\calS}{\mathcal{S}}
\newcommand{\calT}{\mathcal{T}}
\newcommand{\calU}{\mathcal{U}}
\newcommand{\calV}{\mathcal{V}}
\newcommand{\calW}{\mathcal{W}}
\newcommand{\calX}{\mathcal{X}}
\newcommand{\calY}{\mathcal{Y}}
\newcommand{\calZ}{\mathcal{Z}}

\newcommand{\R}{\mathbbm R}
\newcommand{\C}{\mathbbm C}
\newcommand{\N}{\mathbbm N}
\newcommand{\Z}{\mathbbm Z}
\newcommand{\F}{\mathbbm F}
\newcommand{\Q}{\mathbbm Q}

\newcommand{\E}{\mathbbm E}

\newcommand{\bone}{\boldsymbol{1}}
\newcommand{\bbeta}{\boldsymbol{\beta}}
\newcommand{\bdelta}{\boldsymbol{\delta}}
\newcommand{\bepsilon}{\boldsymbol{\epsilon}}
\newcommand{\blambda}{\boldsymbol{\lambda}}
\newcommand{\bomega}{\boldsymbol{\omega}}
\newcommand{\bpi}{\boldsymbol{\pi}}
\newcommand{\bphi}{\boldsymbol{\phi}}
\newcommand{\bvphi}{\boldsymbol{\varphi}}
\newcommand{\bpsi}{\boldsymbol{\psi}}
\newcommand{\bsigma}{\boldsymbol{\sigma}}
\newcommand{\btheta}{\boldsymbol{\theta}}
\newcommand{\btau}{\boldsymbol{\tau}}
\newcommand{\ba}{\boldsymbol{a}}
\newcommand{\bb}{\boldsymbol{b}}
\newcommand{\bc}{\boldsymbol{c}}
\newcommand{\bd}{\boldsymbol{d}}
\newcommand{\be}{\boldsymbol{e}}
\newcommand{\boldf}{\boldsymbol{f}}
\newcommand{\bg}{\boldsymbol{g}}
\newcommand{\bh}{\boldsymbol{h}}
\newcommand{\bi}{\boldsymbol{i}}
\newcommand{\bj}{\boldsymbol{j}}
\newcommand{\bk}{\boldsymbol{k}}
\newcommand{\bell}{\boldsymbol{\ell}}
\newcommand{\bm}{\boldsymbol{m}}
\newcommand{\bn}{\boldsymbol{n}}
\newcommand{\bo}{\boldsymbol{o}}
\newcommand{\bp}{\boldsymbol{p}}
\newcommand{\bq}{\boldsymbol{q}}
\newcommand{\br}{\boldsymbol{r}}
\newcommand{\bs}{\boldsymbol{s}}
\newcommand{\bt}{\boldsymbol{t}}
\newcommand{\bu}{\boldsymbol{u}}
\newcommand{\bv}{\boldsymbol{v}}
\newcommand{\bw}{\boldsymbol{w}}
\newcommand{\bx}{{\boldsymbol{x}}}
\newcommand{\by}{\boldsymbol{y}}
\newcommand{\bz}{\boldsymbol{z}}
\newcommand{\bA}{\boldsymbol{A}}
\newcommand{\bB}{\boldsymbol{B}}
\newcommand{\bC}{\boldsymbol{C}}
\newcommand{\bD}{\boldsymbol{D}}
\newcommand{\bE}{\boldsymbol{E}}
\newcommand{\bF}{\boldsymbol{F}}
\newcommand{\bG}{\boldsymbol{G}}
\newcommand{\bH}{\boldsymbol{H}}
\newcommand{\bI}{\boldsymbol{I}}
\newcommand{\bJ}{\boldsymbol{J}}
\newcommand{\bK}{\boldsymbol{K}}
\newcommand{\bL}{\boldsymbol{L}}
\newcommand{\bM}{\boldsymbol{M}}
\newcommand{\bN}{\boldsymbol{N}}
\newcommand{\bP}{\boldsymbol{P}}
\newcommand{\bQ}{\boldsymbol{Q}}
\newcommand{\bR}{\boldsymbol{R}}
\newcommand{\bS}{\boldsymbol{S}}
\newcommand{\bT}{\boldsymbol{T}}
\newcommand{\bU}{\boldsymbol{U}}
\newcommand{\bV}{\boldsymbol{V}}
\newcommand{\bW}{\boldsymbol{W}}
\newcommand{\bX}{\boldsymbol{X}}
\newcommand{\bY}{\boldsymbol{Y}}
\newcommand{\bZ}{\boldsymbol{Z}}

\newcommand{\unif}{\overset{{\scriptscriptstyle\$}}{\leftarrow}}

\newcommand{\poly}{\text{poly}}
\newcommand{\polylog}{\text{polylog}}
\newcommand{\negl}{\text{negl}}

\newcommand{\Gen}{\ensuremath{\msf{Gen}}}
\newcommand{\Enc}{\ensuremath{\msf{Enc}}}
\newcommand{\Dec}{\ensuremath{\msf{Dec}}}
\newcommand{\sk}{\ensuremath{\msf{sk}}}
\newcommand{\pk}{\ensuremath{\msf{pk}}}

\makeatletter
\newcommand{\subalign}[1]{%
  \vcenter{%
    \Let@ \restore@math@cr \default@tag
    \baselineskip\fontdimen10 \scriptfont\tw@
    \advance\baselineskip\fontdimen12 \scriptfont\tw@
    \lineskip\thr@@\fontdimen8 \scriptfont\thr@@
    \lineskiplimit\lineskip
    \ialign{\hfil$\m@th\scriptstyle##$&$\m@th\scriptstyle{}##$\hfil\crcr
      #1\crcr
    }%
  }%
}
\makeatother

\renewcommand{\divideontimes}{\Cap}

  \newtcolorbox{notebox}[3][]{
  lower separated=false,
  top=0.2cm,
  grow to left by=-.25cm,
  grow to right by=-.25cm,
  colback=white,
  sharp corners,
  colframe={#3},
  fonttitle=\bfseries,
  colbacktitle=white,
  coltitle=black,
  enhanced,
  attach boxed title to top left={yshift=-0.25cm, xshift=0.6cm},
  boxed title style={boxrule=0pt, colframe=white, left=1pt, right=1pt},
  breakable,
  title={#2},
  {#1}
}

\newenvironment{betterFrame}[1]{
  \begin{notebox}{#1}{Aquamarine}
}{\end{notebox}}

\newenvironment{weirdFrame}[1]
  {\mdfsetup{
    frametitle={\colorbox{white}{\space#1\space}},
    innertopmargin=0pt,
    skipabove=\topskip, skipbelow=\topskip,
    frametitleaboveskip=-\ht\strutbox,
    frametitlealignment=\center
    }
  \begin{mdframed}%
  }
  {\end{mdframed}}

\newenvironment{weirdoFrame}[1]
  {\mdfsetup{
    frametitle={\colorbox{white}{\space#1\space}},
    innertopmargin=0pt,
    skipabove=\topskip, skipbelow=\topskip,
    nobreak=true,
    frametitleaboveskip=-\ht\strutbox,
    frametitlealignment=\center
    }
  \begin{mdframed}%
  }
  {\end{mdframed}}
\newcommand{\eps}{\varepsilon}
\newcommand{\inprod}[1]{\left\langle #1 \right\rangle}
\newcommand{\m}[1]{\begin{bmatrix}#1\end{bmatrix}}

\newcommand{\LDC}{\ensuremath{\msf{LDC}}}
\newcommand{\aLDC}{\ensuremath{\msf{aLDC}}}
\newcommand{\pLDC}{\ensuremath{\msf{pLDC}}}
\newcommand{\LCC}{\ensuremath{\msf{LCC}}}
\newcommand{\ham}{\msf{Ham}}
\newcommand{\Ham}{\ham}

\newcommand{\ed}{\msf{ED}}
\newcommand{\ED}{\ed}

\newcommand{\aldc}{\msf{aLDC}}
\newcommand{\paldc}{\msf{paLDC}}
\newcommand{\raldc}{\msf{raLDC}}
\newcommand{\paLDC}{\paldc}
\newcommand{\paldcgame}{\texttt{paLDC-Sec-Game}}
\newcommand{\raldcgame}{\texttt{raLDC-Game}}

\newcommand{\out}[1]{#1_{\text{out}}}
\newcommand{\inn}[1]{#1_{\text{in}}}
\newcommand{\insdel}[1]{#1_{\text{insdel}}}
\newcommand{\h}[1]{#1_{\text{h}}}
\newcommand{\p}[1]{#1_{\msf{p}}}
\newcommand{\res}[1]{#1_{\msf{r}}}
\newcommand{\calCout}{\out \calC}
\newcommand{\calCin}{\inn \calC}
\newcommand{\alphaout}{\out \alpha}
\newcommand{\alphain}{\inn \alpha}
\newcommand{\kappaout}{\out \kappa}
\newcommand{\kappain}{\inn \kappa}
\newcommand{\deltain}{\inn \delta}
\newcommand{\deltaout}{\out \delta}
\newcommand{\epsout}{\out \eps}
\newcommand{\epsin}{\inn \eps}
\newcommand{\betain}{\inn \beta}

\newcommand{\sz}[1]{#1_{\text{sz}}}
\newcommand{\calCsz}{\sz \calC}
\newcommand{\Encsz}{\sz \Enc}
\newcommand{\Decsz}{\sz \Dec}

\newcommand{\buf}{\text{buf}}
\newcommand{\NBS}{\texttt{Noisy-Binary-Search}}
\newcommand{\IntDec}{\texttt{Interval-Decode}}
\newcommand{\BlockDec}{\msf{Block}\text{-}\msf{Decode}}
\newcommand{\bufs}{\buf s}

\newcommand{\BufFind}{\texttt{Buff-Find}}

\newcommand{\Compile}{\texttt{Compile}}
\newcommand{\EncCompile}{\msf{EncCompile}}
\newcommand{\Extract}{\msf{Extract}}
\newcommand{\DecCompile}{\texttt{DecCompile}}

\newcommand{\RecoverBlock}{\texttt{RecoverBlock}}
\newcommand{\FindBlock}{\texttt{RecoverBlock}}
\newcommand{\FindBlocks}{\texttt{RecoverBlocks}}
\newcommand{\RecoverBlocks}{\texttt{RecoverBlocks}}
\newcommand{\Search}{\texttt{Search}}
\newcommand{\MarkGood}{\texttt{MarkGood}}
\newcommand{\BuffFind}{\texttt{BuffFind}}

\newcommand{\Good}{\msf{Good}}

\newcommand{\uDec}{\ensuremath{\msf{uDec}}}
\newcommand{\LDCwUGame}{\texttt{LDCwU-Game}}
\newcommand{\LDCwU}{\msf{LDCwU}}

\newcommand{\pLDCwUGame}{\texttt{pLDC-w-Uncert-Game}}
\newcommand{\pLDCwU}{\msf{pLDCwU}}

\newcommand{\findGood}{\texttt{findGood}}

\newcommand{\BI}{\calB\calI}

\newcommand{\size}[1]{\left| #1 \right|}
\newcommand{\simu}[1]{#1_{\msf{sim}}}

\newcommand{\simuu}[1]{#1_{\msf{sim,u}}}

\newcommand{\uDecGame}{\texttt{UDec-Game}}

\newcommand{\Bad}{\texttt{Bad}}

\newcommand{\reduce}{\texttt{reduce}}

\newcommand{\psk}{\msf{p},\sk}

\newcommand{\puzzgen}{\msf{PuzzGen}}
\newcommand{\PuzzGen}{\msf{PuzzGen}}
\newcommand{\puzzsolve}{\msf{PuzzSolve}}
\newcommand{\PuzzSolve}{\msf{PuzzSolve}}
\newcommand{\Sol}{\msf{PuzzSolve}}
\newcommand{\Puzz}{\msf{Puzz}}

\newcommand{\LdcStar}{\ensuremath{\LDC^*}}

\newcommand{\FBDC}{\ensuremath{\msf{FBDC}}}

\newcommand{\reaLDC}{\ensuremath{\msf{arLDC}}}
\newcommand{\arLDC}{\reaLDC}
\newcommand{\reaLCC}{\ensuremath{\msf{arLCC}}}
\newcommand{\realdc}{\reaLDC}
\newcommand{\realcc}{\reaLCC}
\newcommand{\reaLdc}{\reaLDC}
\newcommand{\reaLcc}{\reaLCC}
\newcommand{\rLdc}{\msf{rLDC}}
\newcommand{\rLcc}{\msf{rLCC}}
\newcommand{\rldc}{\rLdc}
\newcommand{\rLDC}{\rldc}
\newcommand{\rlcc}{\rLcc} 
\newcommand{\rLcC}{\rlcc}
\newcommand{\LTC}{\ensuremath{\msf{LTC}}}
\newcommand{\Ltc}{\LTC}
\newcommand{\ltc}{\LTC}

\newcommand{\Test}{\ensuremath{\msf{Test}}}
\newcommand{\Accept}{\ensuremath{\msf{T}}}
\newcommand{\rdist}{\ensuremath{\msf{reldist}}}
\newcommand{\reldist}{\rdist}

\newcommand{\vltc}{\ensuremath{\msf{VLTC}}}
\newcommand{\VLTC}{\vltc}

\newcommand{\Rate}[1]{\msf{R}_{#1}}
\newcommand{\rate}[1]{\Rate{#1}}

%% file: sections/abstract.tex
Locally decodable codes (LDCs) enable recovery of any message symbol by probing only a small number of positions in a possibly corrupted codeword. The central parameters of an LDC are its rate, locality, and error tolerance. Ideally, one would like all three parameters to be constant. However, classical lower bounds show that such codes cannot exist. A recent line of work introduced amortized locally decodable codes (aLDCs), in which the decoder is tasked with recovering an entire block of consecutive message symbols rather than a single symbol. The amortized locality is defined as the total number of codeword queries divided by the number of recovered symbols. While prior work obtained ideal aLDCs with constant rate, constant error tolerance, and constant amortized locality, those constructions relied on either shared randomness hidden from the channel or computational assumptions restricting the channel. 

Another well-studied relaxation is the notion of a relaxed locally decodable code (RLDC), in which the decoder may output a special failure symbol $\bot$ rather than risk decoding incorrectly. Even with recent breakthroughs, the best known RLDCs achieve only polylogarithmic locality, though this is dramatically better than what is known for standard LDCs without relaxed decoding; achieving constant locality with constant rate is provably impossible. In this work, we introduce the notion of an amortized relaxed locally decodable code (aRLDC), combining amortized decoding with the relaxed decoding paradigm. Unlike prior ideal aLDC constructions, our model is fully information-theoretic and makes no assumptions about shared randomness or computational limitations of the adversarial channel. We construct the first aRLDC with constant rate, constant error tolerance, and constant amortized locality. Moreover, for any block of length $\Omega(\mathrm{polylog}(k))$, our decoder achieves amortized locality $1+\delta^{1 - o(1)}$, where $\delta$ is the error tolerance parameter. Thus, asymptotically, recovering a long block requires essentially less than two codeword probe per message symbol recovered. By contrast, without amortization no RLDC can simultaneously achieve constant rate, constant error tolerance, and constant locality. 

%% file: sections/introduction.tex
A locally decodable code (\LDC{}) is an error-correcting code equipped with a decoding procedure that can recover any desired message symbol while reading only a small number of positions of a possibly corrupted codeword. Formally, an \LDC{} consists of an encoding procedure $\Enc:\{0,1\}^k\rightarrow \{0,1\}^n$ and a randomized local decoder $\Dec^{\bw}(i)$ that, given an index $i\in[k]$ and oracle access to a corrupted codeword $\bw$, outputs the message symbol $x_i$ with a sufficiently large constant success probability (e.g., at least $2/3$) while making at most $\ell$ queries. The central parameters of an \LDC{} are its rate $k/n$, its error tolerance $\delta$, and its locality $\ell$.

Ideally, one would like all three parameters to be constant with locality $\ell$ as small as possible. Unfortunately, this goal is provably unattainable. Katz and Trevisan showed that any \LDC{} with constant error tolerance $\delta=\Omega(1)$ and locality $\ell \geq 2$ must have codeword length $n \in \Omega(k^{\ell/(\ell-1)})$ \cite{STOC:KatTre00} --- they additionally showed that it is impossible to achieve locality $\ell < 2$  regardless of rate. Moreover, the best-known constructions are even farther from our ideal goal. For example, matching vector codes achieve constant locality, but require super-polynomial codeword length \cite{LDCSURVEY,STOC:Efremenko09}. The Hadamard code gives a $2$-query \LDC{}, but it has exponential codeword length. In fact, {\em any} $2$-query \LDC{} must have exponential codeword length \cite{STOC:KerdWo03}. Despite decades of work, the best known \LDC{}s with constant rate and constant error tolerance achieve locality only $\ell = \exp(O(\sqrt{\log n \log\log n}))$, which is subpolynomial but still super-polylogarithmic \cite{STOC:KMRS16}.

Several relaxations of the \LDC{} model have been proposed to circumvent these impossibility results and technical barriers, yielding substantial progress but still falling short of our ideal goal. For example, if the encoding and decoding procedures are allowed to share a secret key or the error channel is assumed to be resource-bounded (e.g., computationally bounded, space-time bounded, etc.), then there are constructions achieving constant rate, constant error tolerance, and polylogarithmic locality \cite{ICALP:OstPanSah07,C:HemOst08,HemOstStraWoo11,ISIT:BloBlo21,SCN:AmeBloBlo22,ITC:BloKulZho20}. These constructions substantially improve on the best known tradeoffs for classical \LDC{}s \cite{STOC:KMRS16}, but still fall short of the ideal goal of constant locality. 

Another well-studied relaxation is the notion of a {\em relaxed Locally Decodable Code} (\rldc{}) \cite{STOC:BGHSV04}, in which the decoder is permitted to output a special failure symbol $\bot$ if it detects corruption. Unlike relaxations based on shared randomness or resource-bounded channels, \rldc{}s place no restrictions on the adversarial channel and do not require shared randomness. Thus, \rldc{}s remain fully information-theoretic making them an attractive relaxation to study. 

Similarly, the best-known \rldc{} constructions also achieve constant rate, constant error-tolerance and polylogarithmic locality  \cite{ITCS:GurRamRot18,STOC:BGHSV04,SODA:ChiGurShi20,STOC:KumMon24,CCC:CohYan24}. A major breakthrough by Kumar and Mon introduced a recursive nesting framework based on locally testable codes, yielding the first \rldc{} with constant rate, constant error tolerance, and polylogarithmic locality $O(\log^{69} n)$ \cite{STOC:KumMon24}. Subsequently, Cohen and Yankovitz refined this framework using expander codes to obtain locality $\log^{2+o(1)} n$ while maintaining constant rate and error tolerance \cite{CCC:CohYan24}. Despite this dramatic progress, these constructions still fall short of the ideal goal of constant locality. Indeed, Gur and Lachish \cite{SIAM:GurLac21} and Dall'Agnon, Gur, and Lachish \cite{SIAM:DGL23} showed that ideal \rldc{}s do not exist: any \rldc{} with locality $\ell$ must have codeword length at least $n \geq k^{1+1/O(\ell^2\log^2\ell)}$. Furthermore, if one requires locality $\ell = 2$ and perfect completeness, then any \rldc{} must have exponential codeword length \cite{CCC:BBCGLZZ26} \footnote{Perfect completeness means that for any valid codeword $\by$, the (possibly adaptive) decoder $\Dec^{\by}(i)$ always outputs the correct message symbol $x_i$. The lower bounds of \cite{SIAM:GurLac21,SIAM:DGL23} do not require perfect completeness.}, and in fact, locality $\ell < 2$ \rldc{}s (one query \rldc{}s) with perfect completeness do not exist (see Appendix \ref{sect:onequeryRLDC}). 

To summarize, despite substantial progress, no known relaxation of the \LDC{} model achieves simultaneous constant rate, constant error tolerance, and constant locality. This raises the question of whether a fundamentally different relaxation can overcome these barriers.

Due to these impossibility results and barriers, another line of work has focused on amortizing locality across the recovery of multiple message symbols \cite{ISIT:RamWoo18,ISIT:blozha25,ITC:bloZha25,cramer2019efficient}. In the most general formulation, the decoder $\Dec^{\bw}: 2^{[k]} \rightarrow \{0,1\}^{k}$ is given a subset $S \subseteq  [k]$ of  $|S| \geq \kappa$ message coordinates as input and must recover all symbols in $\bx[S]$. The  locality is amortized over the recovered symbols: if the decoder makes $\ell$ codeword queries to recover $|S|$ message symbols, then its amortized locality is $\ell/|S|$. Cramer et al.~\cite{cramer2019efficient} demonstrated that local decoding of Reed--Muller codes can be amortized across multiple requested coordinates. However, their construction is tailored to Reed--Muller codes and does not achieve constant amortized locality in the constant rate/error tolerance regime. Thus, despite demonstrating the potential benefits of amortization, their work remains far from the ideal regime of constant rate, constant error tolerance, and constant amortized locality.

In many applications, it is natural to assume that the requested subset $S=[L,R]:=\{L,L+1,\ldots,R\}$ forms a contiguous interval, and we will work in this setting throughout the paper. Blocki and Zhang showed that \aldc{}s with constant rate, constant error tolerance, and constant amortized locality exist when the adversarial channel is resource-bounded and the requested coordinates form a contiguous interval $S = [L,R]$ \cite{ISIT:blozha25}. Thus, ideal amortized local decoding is possible under computational assumptions on the channel. However, their results do not apply in the information-theoretic setting where the adversarial channel is computationally unbounded. To the best of our knowledge, no ideal amortized local decoding construction is known in this setting. Indeed, it is not even known whether there exists an amortized \rldc{} with constant rate, constant error tolerance, and constant amortized locality.

We ask the following questions
\begin{quote}
    \begin{enumerate}
        \item Does there exist an information-theoretic amortized relaxed locally decodable code (\realdc{}) with constant rate, constant error tolerance, and constant amortized locality?
        \item Can one achieve amortized locality less than $2$ while keeping the rate and error tolerance constant, in contrast with the impossibility of $\ell < 2$ \rldc{}s? 
        Further, can we have the amortized locality approach $1?$
    \end{enumerate}
\end{quote}
In this work, we answer all questions in the affirmative. First, we construct the first information-theoretic \realdc{} with constant rate, constant error tolerance, and constant amortized locality. Second, we show how to refine this construction to achieve amortized locality $\frac{1}{1 - \delta^{1 - o(1)}} = 1 + \delta^{1 - o(1)} < 2$ while keeping constant rate and constant error tolerance.
Lastly, we show that by allowing the error tolerance to approach $0$, we can achieve amortized locality approaches $1$ with constant rate.

To state our results more formally, we call a code $\calC=(\Enc,\Dec)$ an $(\alpha,\kappa,\delta,\eps)$-\realdc{} if, for every interval $S=[L,R]\subseteq [k]$ of size $|S|\ge \kappa$, every message $\bx\in\Sigma^k$, and every corrupted codeword $\bw$ satisfying $\reldist(\bw,\by):= \frac{\left|\left\{ i: \bw[i] \neq \by[i] \right\} \right|}{n} \le \delta$ from the corresponding codeword $\by=\Enc(\bx)$, the decoder $\Dec^{\bw}(S)$ makes at most $\alpha|S|$ queries to $\bw$ and outputs $\bx[S]$ or $\bot$ with probability at least $1-\eps$. Our main results are summarized by the following informal theorems.

\begin{theorem}[Ideal $\realdc{}$, informal]\label{ithm:constantARLDC}
    For any constant $\eps < 1$, there exists constant $c = c(\eps)$ such that there exists a {$(\alpha = O(1),\kappa = c\times (\log n)^2 \times 2^{(\log\log\log n)^4},\delta = \Omega(1),\eps)$-\realdc{}} with constant rate.
\end{theorem}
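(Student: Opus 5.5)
We build the code by concatenating an outer block structure with an inner \rldc{}, and amortize the inner decoder's polylogarithmic locality over blocks of polylogarithmic length. Fix a constant-rate \rldc{} $\calC_{\mathrm{r}}$ with constant error tolerance and locality $q=\log^{2+o(1)} n$, for instance the expander-based construction of Cohen and Yankovitz \cite{CCC:CohYan24}. Its $o(1)$ term is what produces the $2^{(\log\log\log n)^4}$ factor in $\kappa$. Partition the message $\bx$ into consecutive blocks $\bx_1,\dots,\bx_{k/b}$ of length $b=\Theta(q)$.

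Each block $\bx_j$ is encoded with a constant-rate, constant-distance code $\calC_{\mathrm{blk}}$ that is efficiently decodable, giving a segment $\bs_j$ of length $O(b)$. We then append a ``certificate'' layer $\calC_{\mathrm{r}}(\bx)$, or more precisely an \rldc{} encoding of the sequence of block hashes or checksums. The hashes are short, so this layer adds only constant overhead in length. This is the role I would expect the Fuzzy Block Decoding Codes of Section~\ref{sect:FBDC} to play: they allow a block to be recovered or rejected with $O(b)$ queries. The final codeword interleaves the segments $\bs_j$ with the certificate layer, so the overall rate stays constant.

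To decode an interval $S=[L,R]$ with $|S|\ge\kappa$, the decoder does three things.
\begin{enumerate}
\item It identifies the $O(|S|/b)$ blocks that cover $S$.
\item It reads every block segment fully, which costs $O(|S|)$ queries, and decodes each one with $\calC_{\mathrm{blk}}$.
\item It runs the relaxed decoder on the certificate layer for a constant number of randomly chosen covered blocks (or on all of them), checking consistency between the decoded block and its certified value.
\end{enumerate}
If any check fails, or any relaxed decoder outputs $\bot$, the decoder outputs $\bot$. Otherwise it outputs $\bx[S]$. Each certificate check costs $O(q)=O(b)$ queries, so the total query count is $O(|S|)$, i.e.\ $\alpha=O(1)$. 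The requirement $|S|\ge\kappa$ guarantees that at least one full block's worth of queries is available to pay for each certificate check; this is why $\kappa\approx q\approx(\log n)^2 2^{(\log\log\log n)^4}$.

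Correctness requires two properties of the block-level view.
\begin{itemize}
\item \emph{Completeness.} On an uncorrupted codeword, every check passes, by the perfect or high-probability completeness of $\calC_{\mathrm{r}}$.
\item \emph{Soundness.} Suppose the decoder would output a wrong value. Then some covered block $j$ decodes to $\bx'_j\ne\bx_j$. For the checks to pass, the relaxed decoder would have to output the hash of $\bx'_j$ instead of the true hash or $\bot$, and this happens with probability at most $\eps$ by \rldc{} soundness. The total corruption is at most $\delta n$, which is also at most a constant fraction of the certificate layer, so the \rldc{} guarantee applies.
\end{itemize}
The hashes must be injective on the block code's possible outputs. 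Rather than true hashes, I would certify the block itself, encoding the concatenation of all blocks with $\calC_{\mathrm{r}}$. Checking a random position of the decoded block against a relaxed decoding of that position then catches any disagreement only with probability proportional to the fraction of disagreeing positions. To make this work, $\calC_{\mathrm{blk}}$ has to be a code in which any wrong decoding differs in a constant fraction of coordinates, which a distance argument provides.

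The main obstacle is soundness when the adversary concentrates its corruption on the few blocks covering $S$. Those segments may be almost completely corrupted, because $\delta n$ is far larger than $|S|$. Decoding each segment is then meaningless, and we must rely entirely on the certificate checks. Relaxed decoding guarantees only that each individual check returns the correct symbol or $\bot$ with probability at least $1-\eps$. So I would sample $\Theta(\log(1/\eps))$ positions per block and amplify. The remaining difficulty is a union bound over the $|S|/b$ blocks, which might be large. I would handle it by noting that we only need the event ``some corrupted block is undetected'' to have small probability. Checking one random position in each block, together with the constant-fraction distance property, makes each corrupted block undetected with probability at most a constant. The adversary still controls which blocks are corrupted, though, so the final step is to argue through a union bound weighted by per-block detection probabilities. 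If that bound is too weak, the fallback is to choose $\kappa$ large enough that checking $O(\log(1/\eps))$ positions per block still fits within the $O(|S|)$ query budget.
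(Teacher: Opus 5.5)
\textbf{Your route differs from the paper's, and it is viable, but the soundness step you single out is left open.} The paper opens up the Cohen--Yankovitz construction: nested expander codes $\to$ \FBDC{} $\to$ nesting a block code $\to$ systematic form plus bad-block filtering. You instead take a constant-rate \rldc{} as a black box and add a read--decode--spot-check layer on top. The weighted union bound you propose for soundness would fail. Summing per-block non-detection probabilities over the $\Theta(|S|/b)$ covered blocks gives a bound that grows with $|S|$, and $O(\log(1/\eps))$ checks per block do not repair that.

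\textbf{The missing observation.} The decoder reads each covered segment in full and decodes it deterministically. So the set of wrongly decoded blocks depends on $\bw$ alone and is fixed before any decoder randomness is drawn. A single failed check anywhere forces $\bot$, so a wrong output requires that \emph{every} check passes. Hence, for any fixed wrongly decoded block $j^\ast$,
$\Pr[\text{wrong output}]\le\Pr[\text{all } r \text{ checks on } j^\ast \text{ pass}]\le(1-\mu(1-\eps_r))^r$.
Here $\mu$ is the relative distance of $\calC_{\mathrm{blk}}$ and $\eps_r$ is the soundness error of $\calC_{\mathrm{r}}$. Taking $r=O(\log(1/\eps))$ gives soundness error $\eps$ with no union bound at all. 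This is exactly the role the verbatim property plays in the proof of Theorem~\ref{thm:FBDCtoARLCC}.

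\textbf{Two further fixes are needed for that bound to hold.}
\emph{Codeword-level comparison.} The per-check detection probability $\mu(1-\eps_r)$ requires the certificate to be $\calC_{\mathrm{r}}(\bs_1\circ\cdots\circ\bs_{k/b})$. You then compare random coordinates of the re-encoded block $\Enc_{\mathrm{blk}}(\bx'_j)$ against relaxed decodings of $\bs_j$. With $\calC_{\mathrm{r}}(\bx)$ and message-level comparison, the distance of $\calC_{\mathrm{blk}}$ gives nothing, since $\bx'_j$ may differ from $\bx_j$ in a single symbol.
\emph{Check every block.} Every covered block must be checked. Checking a constant number of randomly chosen blocks fails against an adversary who corrupts only one.

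\textbf{Comparison with the paper.} With these repairs you get a clean generic reduction. Any \rldc{} with locality $q$ and perfect completeness yields an \realdc{} with $\alpha=O(1)$ and $\kappa=O(q\log(1/\eps))$. Applied to the Cohen--Yankovitz code, this gives the theorem. Because your message blocks are laid out explicitly, it also avoids the paper's message-smoothness filtering entirely. The paper's white-box construction instead stores the data only once and builds the block checks into the nested code itself. That integrated construction is what the paper later tunes to push the amortized locality below $2$.
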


\begin{theorem}[Ideal $\realdc{}$ with Amortized Locality Below $2$ (or Approaching $1$), informal]\label{ithm:approachingOneARLDC} 
    For any constants $\nu >0, \eps < 1$, there exist $\delta = \delta(\nu)$ and constant $c = c(\nu,\eps)$ such that there exists a ${(\alpha = 1+\nu, \kappa = \frac{c}{\nu} \times (\log n)^2 \times 2^{(\log\log\log n)^4 }, \delta, \eps)}$ \realdc{} with rate at least $1 - \delta\times2^{O((\log\log\log1/\delta)^3)}.$ 
\end{theorem}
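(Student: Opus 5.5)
Theorem~\ref{ithm:approachingOneARLDC} will be proved by a concatenation built from the Theorem~\ref{ithm:constantARLDC} machinery, using a \emph{systematic} layout. The message $\bx$ is split into consecutive blocks of length $b$. The codeword is the concatenation of two parts. The first part is the message blocks written in the clear, each encoded by a high-rate inner code $\calCin$ of rate $1-\delta\cdot 2^{O((\log\log\log 1/\delta)^3)}$ and relative distance about $\delta^{1-o(1)}$. The second part is a short redundancy part of relative length $\delta^{1-o(1)}$: the constant-locality \realdc{} of Theorem~\ref{ithm:constantARLDC} (or a fuzzy block decoding code from Section~\ref{sect:FBDC}) applied to short \emph{hashes or fingerprints} of the blocks. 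The stated rate bound is then just the product of the inner-code rate and the small overhead of the redundancy part. I would choose $\delta$ as a function of $\nu$ so that the total overhead is at most $\nu$.

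The decoder, on input an interval $S=[L,R]$ with $|S|\ge\kappa$, proceeds as follows.
\begin{enumerate}
\item It reads the $\lceil |S|/b\rceil+O(1)$ inner blocks covering $S$ in full. This costs $(1+O(\delta^{1-o(1)}))|S|$ queries, plus boundary terms that are absorbed once $\kappa\gg b/\nu$. This is where the $1/\nu$ factor in $\kappa$ comes from.
\item It verifies these blocks against the redundancy part. It runs the amortized decoder of Theorem~\ref{ithm:constantARLDC} on the interval of fingerprints corresponding to the same blocks. That interval has length $|S|\cdot\delta^{1-o(1)}$ in redundancy symbols, so its $O(1)$ amortized locality contributes only $O(\delta^{1-o(1)})|S|\le \nu|S|/2$ queries.
\end{enumerate}
For this step to work, the fingerprint interval must itself have length at least the $\kappa$ of Theorem~\ref{ithm:constantARLDC}. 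That requirement dictates $\kappa=\frac{c}{\nu}(\log n)^2 2^{(\log\log\log n)^4}$.

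Correctness rests on the relaxed guarantee. With probability $1-\eps$ the inner decoder returns either $\bot$ or the correct fingerprints. If it returns $\bot$, we output $\bot$. If it returns fingerprints, we compare them with the fingerprints of the blocks we read. On full agreement we output the read blocks, which must be the true ones, since the fingerprints are injective (a distance-type code on each block). On disagreement we output $\bot$. Because the decoder outputs only verified data, a fully-read, uncorrupted-looking block can never be output wrongly, and this gives the required soundness.

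The main obstacle is the accounting when the adversary concentrates its $\delta n$ corruptions inside the queried region. In that case many read blocks are corrupted, and the fingerprints detect this, but we are only permitted to output $\bot$ with bounded frequency. The definition allows outputting $\bot$, so the real difficulty is making sure that we always \emph{detect} corruption rather than output a wrong string. I would handle this by making each fingerprint a full-information summary of its block: a codeword of a distance-$\Omega(1)$ code compressed through a random-linear or algebraic hash with the \emph{seed stored inside the redundancy}. Honest outputs then require agreement on a code with distance large relative to the corruption. I would first try to get exact injectivity by letting the redundancy symbols carry $\delta^{1-o(1)}b$ parity checks, as in a systematic code of distance $\delta^{1-o(1)}$. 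With this choice, any disagreement of at most the distance is caught and no hashing randomness is needed.
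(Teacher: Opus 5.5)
There is a genuine gap, and it lies in the code itself, not just in the analysis. A fingerprint of a length-$b$ block that occupies only about $\delta^{1-o(1)}b<b$ symbols cannot be injective. Parity checks of a systematic code of distance $\delta^{1-o(1)}$ only guarantee to separate blocks that differ in fewer than about $\delta^{1-o(1)}b$ positions.

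So take messages $\bx\neq\bx'$ that agree outside block $i$ and whose $i$-th blocks have the same fingerprint; for linear parity checks, add any nonzero kernel element. Then the fingerprint sequence, and hence the entire redundancy part, is identical. Therefore $\Enc(\bx)$ and $\Enc(\bx')$ differ only inside the $i$-th inner codeword, in $O(b)$ positions. This is far below $\delta n$, since $b$ is polylogarithmic.

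With $\bw=\Enc(\bx')$ we have $\reldist(\bw,\Enc(\bx))\le\delta$. Yet perfect completeness forces $\Dec^{\bw}(S)=\bx'[S]\notin\{\bx[S],\bot\}$ for every $S$ covering block $i$. Soundness therefore fails with probability $1$, for \emph{any} decoder. In general, a \realdc{} with tolerance $\delta$ and $\eps<1$ must have relative distance exceeding $\delta$, and your layout destroys global distance.

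Storing a hash seed in the redundancy does not help. The channel is unbounded and picks $\bw$ after seeing the code and the codeword, so the hash is just another fixed compressing map. No decoder-side randomness helps either, since the argument above applies to every decoder. The global budget $\delta n$ says nothing about how many errors land in the one block you read, and $\delta n\gg b$ lets the adversary rewrite a whole block consistently.

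What you are missing is a way to turn the global bound $\reldist(\bw,\by)\le\delta$ into a \emph{local} bound on the block being read, while keeping global distance. The paper gets this from nested vicinity testability. In $\Sigma^n\divideontimes\calC_s\divideontimes\cdots\divideontimes\calC_1$, every dyadic block at every level is an expander code. The block decoder runs the amplified \VLTC{} tester on all $O(\log n)$ ancestors of the target block.

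If the target block had error fraction above $\delta_B=2\delta_{nest}$, some ancestor has error fraction in $[\delta_{nest}/2,\delta_{nest}]$ and is rejected with high probability (Proposition~\ref{prop:recursiveExpander}). A single top-level \VLTC{} repeated $O(\delta/\delta')$ times then lifts the tolerance to a constant $\delta$ (Lemma~\ref{lem:topLevelAmp}).

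Only after this localization does a high-rate bottom-level block code of distance $\Omega(\log\log n/\log n)$ (Proposition~\ref{prop:boostedInner}) suffice to correct the surviving errors. The verbatim property avoids a union bound over blocks. Taking $\kappa=hB$ with $h=\Theta(1/\nu)$ gives \realcc{} locality $(h+2)\ell/(hB)\approx 1+\nu$ (Theorem~\ref{thm:FBDCtoARLCC}). Bad-block filtering then yields the \realdc{} at a cost of a factor $1/c$ in locality (Corollary~\ref{corr:arLCCToArLDC}).

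Your systematic-plus-fingerprint layout has no analogue of this localization step, and the statement cannot be obtained without it.
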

Note that  $(\log n)^2 \times 2^{(\log\log\log n)^4} \leq \log^{2 + o(1)} n$ and $(\log n)^2 \times 2^{(\log\log\log n)^4} \geq (\log n)^2 \times (\log\log n)^{O(1)}.$
We remark that for any $\nu > 0$ we can achieve amortized locality $1 + \nu$ while tolerating $\delta = \nu\times 2^{O(\log\log\log \nu)^3)} = \nu^{1 - o(1)}$  fraction of errors.
Hence, for fixed $\nu < 1$, we can achieve amortized locality less than $2$ with constant rate and constant error tolerance. 
Further, by allowing $\nu$ to approach $0,$ we can have the amortized locality and rate approach $1$.

%% file: sections/technicalOverview.tex
In this section, we give a technical overview of our \realdc{} constructions. 
We recall the nested \rlcc{} constructions that serve as a starting point to our \realdc{} constructions, detail the challenges in amortizing the locality of prior constructions, outline our contributions which overcome the discussed challenges, and give a detailed summary on how we precisely construct our final \realdc{} constructions.

\subsection{Recalling the Nested \LCC{} Construction}\label{subsec:recallingLCC}
A natural starting point for our amortized construction is the nested code \rlcc{} constructions, devised by Kumar and Mon and improved upon by Cohen and Yankovitz \cite{STOC:KumMon24,CCC:CohYan24}. 
For ease of exposition, we discuss the construction with parameters guaranteed by the probabilistic method.
The explicit construction follows the same ideas with more complicated parameters.
Recall that \rlcc{} decoders are tasked with recovering codeword symbols, while \rldc{} decoders are tasked with recovering message symbols. However, linear \rlcc{}s may be transformed into linear \rldc{}s with the same rate, distance, and locality so the constructions of \cite{STOC:KumMon24,CCC:CohYan24} yield \rldc{}s with constant rate, distance and poly-logarithmic locality. The nested \rlcc{} constructions rely on the notion of a $\delta$-locally testable code\footnote{We follow the more restricted form of local testing used by Cohen and Yankovitz, which supposes an upperbound on the relative distance.}. Informally, a $(\ell_T,\delta,\rho)$-locally testable code admits a local testing algorithm $\Test$ which makes at most $\ell_T$ queries to a noisy codeword $\bw$. The guarantee is that if $\reldist(\bw,\by) \leq \delta$ then $\Test{}^{\bw}$ returns $\bot$ with probability at least $\rho \times \reldist(\bw,\by)$ --- if $\reldist(\bw,\by) = 0$ then $\Test{}^{\bw}$ should always accept.

Given a codeword $\by$ we can partition $\by$ into $2^j$ blocks $\by = \by^{j}_1 \circ \ldots \circ \by^{j}_{n/2^j}$ for any $j \leq s$ where $s=O(\log n)$ denotes the total number of nested levels in the construction. This forms a natural binary tree structure {of height $s$}, where each node of the binary tree on level $j \in [s]$ represents a contiguous block {of size $(n/2^j)$} that is locally testable. We illustrate this structure in Figure~\ref{fig:blockstruct}, denoting each intermediate codeword block as $\by_{j}^i$. Intuitively, the nested code construction ensures that every block $\bw^{j}_i$ is a $(\ell_T',\delta' = o(1),\rho')$-locally testable code for all blocks $i \leq n/2^j$ at all levels $j \leq s$ (As a special case, when $j=0$ we have $\bw^{0}_1 = \by$ and we require that the code is $(\ell_T,\delta,\rho)$-locally testable for a larger $\delta = \Omega(1)$ and $\rho = O(1/\delta)$). Crucially, $s$ is set such that $n/2^s < 1/\delta'$, i.e., the lowest level block have size less than $1/\delta'$ for correctness (discussed below).


\input{tikz/binaryStructure}

Suppose we are given a corrupted codeword $\bw$ with relative distance $\reldist(\bw,\by) \leq \delta$ from the original codeword $\by$. Observe that $\bw = \bw^0_1 = \bw^{1}_1 \circ \bw^1_2 = \ldots = \bw^{s}_1 \circ \ldots \circ \bw^s_{n/2^s}$. The $\rlcc{}$ decoder $\Dec^{\bw}(i)$ will begin by repeatedly running the local tester $\Test{}^{\bw}$ on the entire word $\bw = \bw^0_1$. If $\Test{}^{\bw}$ ever outputs $\bot$ then the decoder may also output $\bot$; otherwise the decoder can proceed assuming, with high confidence, that $\reldist(\bw,\by) \leq \delta'/2$ (This implies that $\reldist(\bw^1_1,\by^1_1) \leq \delta'$ and $\reldist(\bw^1_2,\by^1_2) \leq \delta'$ for both blocks $\bw^1_{1}$ and $\bw^1_{2}$ on the next level). Then, the $\rlcc{}$ decoder $\Dec^{\bw}(i)$, on input index $i \in [n]$, computes a path $(i_0=1,i_{1},i_2,\dots,i_{s})$ from the top level $0$ to the lowest level block $\bw^{s}_{i_s}$, where for each $j \leq s$ the block $\bw^{s}_{i_s}$ contains the symbol  $\bw[i]$. In particular, $i_s$ is the index of the lowest level testable block $\bw^{s}_{i_s}$ containing symbol $\bw[i]$. For each level $1 \leq j \leq s$ the decoder $\Dec^{\bw}(i)$ runs the local tester $\Test{}^{\bw^j_{i_j}}$ on the block $\bw^j_{i_j}$. If $\Test{}^{\bw^j_{i_j}} = \bot$ for any $1 \leq j \leq s$ (any local test fails) then the decoder $\Dec^{\bw}(i)$ will output $\bot$. Otherwise, if all local tests pass, the decoder will output the bit $\bw[i]$. 

Intuitively, {correctness follows from inductively guaranteeing (with high probability) that by the point we are testing block $i_j$, the fraction of errors is less than $\delta$, thereby satisfying the precondition for correctness of the local tester.}
After the last test passes on the lowest level block corresponding to index $i_s$, we guarantee that with high probability that there is at most $\delta' \times (n /2^s) < 1$ errors, i.e., there are no errors in the block $i_s$ containing $\bw[i].$
Thus, we can safely read and output $\bw[i] = \by[i].$ 
Conversely, if $\bw[i] \neq \by[i]$, then $\reldist(\by^s_{i_s},\bw^s_{i_s}) \geq \delta'$ since there $\bw^s_{i_s}$ contains at most $(n / 2^s) < 1/\delta'$ symbols by our choice of the nesting parameter $s$. In this case there must be some level $j$ where $\delta'/2 \leq \reldist(\by^j_{i_j},\bw^j_{i_j}) \leq \delta'$. In this case the local tester $\Test{}^{\bw^j_{i_j}}$ will output $\bot$ with high probability. 


\subsection{Technical Challenges} \label{subsec:challenges}
{We discuss natural ideas to amortizing this construction, and explain their shortcomings. We first observe that the described construction immediately allows us to recover {\em all} the codeword bits in the lowest level block $\bw_{i_s}^s$ after running the local testing algorithm at each level. In particular, if the local testing algorithm never outputs $\bot$ then we can conclude that there are no errors in the entire block  $\bw_{i_s}^s$.}

\paragraph{Challenge $1$: Small block size}
{
Since the total number of queries made is $\ell_T + s\times \ell_T' + 1 =  O(\log^2 n\times \log\log n)$ to do the local testing on blocks $i_0, i_1,\dots,i_s$ and to output bit $w_i$, the amortized locality $\alpha$ would be 
$\alpha = \frac{\overbrace{O(\log^2 n\times \log\log n)}^{\text{local testing queries}} + |\bw_{i_s}^i|}{|\bw_{i_s}^i|}.$
Unfortunately, we cannot set the lowest level block size $|\bw_{i_s}^i|$ to ensure that $\alpha=O(1)$ is constant. To ensure that the lowest level block $\bw_{i_s}^s$ does not have {\em any} errors we required that $|\bw_{i_s}^s | \leq 1/\delta'$. When the original nested code construction is instantiated with the best known expanders (via the probabilistic method) we have $\delta' = \Omega\left(\frac{1}{\log n \log \log n}\right)$. Then, the amortized locality $\alpha \geq \Omega(\log n)$ since $|\bw_{i_s}^s | \leq 1/\delta' \leq O\left(\log n \log \log n\right)$. Thus, a naive instantiation of the nested code construction does not achieve  constant amortized locality.
}

{Since the amortized locality is bottle-necked by the lowest level block size, the next natural idea is to increase the lowest level block sizes.
For constant amortized locality, it would suffice to set $s$ to be a smaller value such that the lowest level block size is within the order of the number of queries $O(\log^2 n \times \log \log n),$ i.e., set $s$ such that $n / 2^s \geq \Omega(\log^2 n \times \log\log n).$
Unfortunately, by gaining constant amortized locality, we lose soundness since block $\bw_{i_s}^s$ may contain errors since we increased the block size, $|\bw_{i_s}^s| \gg 1/\delta$. 
{Previously, setting the lowest level block size $|\bw_{i_s}^s| \leq 1/\delta$, allowed us to conclude that either the lowest level block was error free, or any error would have been caught by a local tester at some level $j \in [s]$. Thus, by setting $|\bw_{i_s}^s| \gg 1/\delta$, soundness is no longer guaranteed, and we only achieve super-constant amortized locality without modifying the code.}}

\paragraph{Challenge $2$: Amortized Locality Preserving \realcc{} to \realdc{} Transformation} Another challenge is the conversion of the amortized locally {\em correctable} code into an amortized locally {\em decodable} code.
It is well known that one can convert a linear \rlcc{} into a linear \rldc{} via a linear transformation on the code that embeds the message symbols into the codewords i.e., there exists indices $i_1,\ldots, i_k$ such that $\bx[j] = \by[i_j]$ for each message symbol $j \in [k]$. This transformation additionally preserves the rate, error tolerance, and locality 
(notably, the linear transformation does not change the codeword space, so the \rldc{} decoder can simply run the \rlcc{} decoder $\Dec^{\bw}(i_j)$ to recover $\bx[j]$).
However, this linear transformation may not necessarily preserve {\em amortized} locality. 

There are two main issues that may impact amortized locality. First, the transformation may not ensure that message symbols appear in consecutive order i.e., we have $i_j > i_{j+1}$. Second, the message symbols may not be uniformly distributed throughout the codeword i.e., we may have some blocks $\by^s_j$ which contain relatively few message symbols i.e., $\left| \by^s_j \cap \{i_1,\ldots, i_k\} \right| \ll n/2^s$. The first issue is not too difficult to address since we can always permute the symbols of our message $\bx$ before encoding to ensure that they appear in consecutive order after the linear transformation is applied. Addressing the second issue is a bit more challenging. In particular, when the amortized local decoder $\Dec$ is tasked with outputting all message symbols in the range $[L,R] \subseteq [k],$ we no longer have a guarantee on how many blocks of the nested codeword that the local decoder $\Dec$ will need to recover in order to extract the desired message symbols.

To summarize, without any modifications, ideal \realdc{}s are not immediate from prior work since (1) amortized locality is non-constant for correctness in the prior \rlcc{} construction, and (2) the transformation from linear \rlcc{}'s to linear \rldc{}'s does not necessarily preserve amortized locality since we cannot guarantee that each consecutive block contains a proportional number of message symbols. 

\subsection{Our Contributions}\label{subsec:contributions}

\paragraph{Achieving Constant Amortized Locality for Local Correctability} 
In this work, we first show how to increase the small block size $|\bw_{i_s}^s|$ of Cohen and Yankovitz's construction by replacing the expander code used for the lowest level blocks with a constant rate, constant error tolerant error correcting code $\calC_{\msf{block}} = (\Enc_{\msf{block}},\Dec_{\msf{block}})$.
Observe that if none of the testers fail on every other level, then we expect the error rate in each higher level block is at most $\delta'$.
This implies that the error rate for block $\bw_{i_s}^s$ is at most $\reldist(\by_{i_s}^{s},\bw_{i_s}^{s}) \leq 2\delta'$. We will ensure that that the lowest level block $\by_{i_s}^{s}$ is an error correcting code that can tolerate $2\delta'$ fraction of errors. Thus, if we are given $\bw_{i_s}^{s}$ with $\reldist(\by_{i_s}^{s},\bw_{i_s}^{s}) \leq 2\delta'$ we can recover the correct value  $\by_{i_s}^{s}$ by decoding $\bw_{i_s}^{s}$ and then re-encoding it with code $\calC_{\msf{block}}$ i.e., $\by_{i_s}^{s} = \Enc_{\msf{block}}(\Dec_{\msf{block}}(\bw_{i_s}^s))$.
Otherwise, if there are more than a $2\delta'$ fraction of errors, we can apply the same soundness argument in the prior construction to conclude that we would have outputted $\bot$ in some higher level (with high probability).

With this modification we are no longer constrained to pick $s$ such that $n/2^s < 1/\delta'$ since we can now tolerate errors in the lowest level blocks. By picking $s$ such that the lowest level block sizes to be at least $n/2^s = \Omega(\log^2n \times \log \log n)$ we can ensure that the amortized locality in our \realcc{} is constant. The local decoder works in essentially the same way until we reach the bottom layer where we will use the trick $\by_{i_s}^{s} = \Enc_{\msf{block}}(\Dec_{\msf{block}}(\bw_{i_s}^s))$ instead of running the local testing algorithm. 
This gives us an \realcc{} with constant rate, constant amortized locality, and constant error tolerance (see Theorem~\ref{thm:FBDCtoARLCC}).

\paragraph{Amortized Locality Preserving Transformation from \realcc{} to \realdc{}}
Next, we show how to preserve constant amortized locality after linearly transforming the \realcc{} into an \realdc{}. As we previously noted the primary challenge is that some of the lowest level blocks may contain relatively few message symbols. We call a block $\bw^s_i$ $c$-bad if it contains at most $cn/2^s$ message symbols. Our key observation is that we can upper bound the total number of message symbols which appear in bad blocks. Let $\Bad{} \subseteq [k]$ be the set of all message symbols which appear in any bad block. We will effectively ignore these message symbols (e.g., by hardcoding them to a fixed symbol like $0$) to obtain a new \realdc{} code for messages of length $k-|\Bad{}|$. While this approach will decrease the final rate of our code the impact will be tolerable as long as we can show that $|\Bad{}|$ is not too large.   

In a bit more detail, suppose that the original code had rate $1 - \Rate{} = k/n$. The rate $1 - \rate{}$ is the proportion of message symbols to codeword symbols, so for any constant $c \leq 1 - \rate{}$, there can only be at most $\frac{\Rate{}}{1- c}$ fraction of $c$-bad blocks (see Lemma~\ref{lem:numGoodBlocks}).
It follows that $|\Bad{}| \leq c \left(\frac{n}{2^s} \right) 2^s \left(\frac{\Rate{}}{1 - c} \right) = cn \left(\frac{\Rate{}}{1 - c} \right)$. Fixing any $c < 1 - \Rate{}$ we have $k'=k-|\Bad{}| = \Omega(k)$.
Given a message $\bx \in \Sigma^{k'}$ we can first preprocess $\bx$ to generate a message $\bx' \in \Sigma^k$ such that $\bx'\left[([k] \setminus \Bad{})\right] = \bx$ and then encode the message $\bx'$. While this lowers the original rate of the code, the new rate will still be constant as long as the original code had constant rate. 

Now we can guarantee that each block $\by^s_i$ at the lowest level either contains no message symbols (in which case we will never need to decode it) or contains {\em at least} $c|\by^s_i| = cn/2^s$ message symbols. 
It follows that if the amortized locality of the \realcc{} is $\alpha_{\realcc{}}$, then the amortized locality of the \realdc{} is $\alpha_{\realdc{}} \leq \alpha_{\realcc{}} / c.$ 
This allows us to obtain an ideal \realdc{} with constant amortized locality, constant rate, and constant error tolerance (see Theorem~\ref{thm:constantARLDC}).

\paragraph{Amortized Locality Less Than $2$ (and Approaching $1$)}   
Lastly, we show that surprisingly, we can instantiate our \realdc{} construction such that the amortized locality is less than $2$ and, for a careful selection of parameters, we can even have the amortized locality approach $1.$
In other words, we can construct an \realdc{} such that the recovery of any sufficiently large block of symbols only takes essentially one queried codeword symbol per message symbol recovered, asymptotically. 

The \realdc{} amortized locality $\alpha_{\realdc{}}$ is determined by the \realcc{} amortized locality $\alpha_{\realcc{}}$ and parameter $c \leq 1 - \rate{\realcc{}}$ in the bad block filtering step, where $\alpha_{\realdc{}} \leq \alpha_{\realcc{}} / c.$ 
Since $c$ is upperbounded by the rate of the \realcc{}, $c < 1 - \rate{\realcc{}}$, we can improve the \realdc{} amortized locality $\alpha_{\realdc{}}$ by lowering the \realcc{} amortized locality $\alpha_{\realcc{}}$ and raising the rate $1 - \rate{\realcc{}}$.
We will show that we can simultaneously achieve (1) amortized locality $\alpha_{\realcc{}}$ approaching $1$, and (2) rate $1 - \rate{\realcc{}}$ less than $2$. 
Observe that when these two conditions are met, then we have constructed an \realdc{} with constant rate, constant error tolerance, and amortized locality less than $2$ (see Corollary~\ref{corr:realdclessthan2}).
Additionally, if we allow the error tolerance to approach zero, use a high rate, sub-constant error-tolerance block code $\calC_{\msf{Block}}$ (constructed in Proposition~\ref{prop:boostedInner}), then the rate $1 - \rate{\realcc{}}$ approaches $1$. In this case, we construct a \realdc{} with rate and amortized locality approaching $1$ (see Corollary~\ref{corr:realdcapproach1}). While the error tolerance will approach $0$ as the locality approaches $1$, for any constant $\epsilon$ we can obtain a  \realdc{} with locality $1+\eps$ and constant error tolerance $\delta = \delta(\eps)$.

\subsection{In-depth Overview}
We provide a detailed overview of the steps we take to achieve our main results, an ideal (constant rate, constant error tolerance, and constant amortized locality) \realdc{} and an \realdc{} with amortized locality approaching $1.$
To start, in the Preliminary Section~\ref{sect:Prelim}, we formally define amortized relaxed locally decodable codes (\realdc{}s) in Definition~\ref{def:realdc}. 
Similarly, we define amortized relaxed locally correctable codes (\realcc{}s) in Definition~\ref{def:realcc}, with the main modification being the recovery of codeword symbols rather than message symbols.

\paragraph{Main Ingredients from Prior Work (Section~\ref{sect:Prelim})}
Next, we recall the main ingredients, vicinity locally testable codes and nesting, used by Cohen and Yankovitz to construct their \rlcc{} \cite{CCC:CohYan24}. 
Vicinity locally testable codes (\VLTC{}s) are the formalization of the locally testable codes with soundness up to relative distance $\delta$ discussed in Section~\ref{subsec:recallingLCC} (see Definition~\ref{def:vltc}).
In more detail, a $(\ell_T,\delta,\rho,\sigma)$-\VLTC{} admits a local testing algorithm $\Test$ which makes at most $\ell_T$ queries to a noisy codeword $\bw$. The guarantee is that if $\reldist(\bw,\by) \leq \delta$ then $\Test{}^{\bw}$ returns $\bot$ with probability at least $\rho \times \reldist(\bw,\by) - \sigma$ --- if $\reldist(\bw,\by) = 0$ then $\Test{}^{\bw}$ should always accept.
Crucially, Cohen and Yankovitz construct bipartite expander graph-based \VLTC{}s.
Given a bipartite expander $G = (L,R,E)$, one can construct a code such that (1) the left vertices $L$ represent the codeword symbols and the right vertices $R$ represent the parity checks of the code (each left vertex $l \in L$ is an index of code $\calC$ of codeword length $|L|$ and each right vertex $r \in R$ enforces $\sum_{(l,r) \in E} \by[l] = 0$), and (2) there exists a tester \Test{} that picks a right vertex $r \in R$ at random, checks $\sum_{(l,r) \in E} \by[l] = 0$ by querying each codeword symbol $\by[l]$ such that $(l,r) \in E,$ and accepts if and only if the parity check is consistent (see Definition~\ref{def:expander}). 
Intuitively, by the expansion property of the bipartite expander, any error in the codeword will be a part of many parity checks (see Lemma~\ref{lem:vtester}).

We reintroduce the nesting operation $\divideontimes$ first devised by Kumar and Mon \cite{STOC:KumMon24} and used by Cohen and Yankovitz to (recursively) nest their \VLTC{}s into a \rlcc{} \cite{CCC:CohYan24}. Nesting is the formal operation which allows each block on each level to be locally testable as discussed in Section~\ref{subsec:recallingLCC}. 
Suppose $\calC$ and $\calC'$ are codes with codeword length $n$ and $n/2$, respectively, for even $n \in \N.$ 
Then, any codeword $\by$ in the nested code $\calC \divideontimes \calC'$ (1) is a codeword of $\calC$, and (2) its halves, $\by[1:n/2]$ and $\by[n/2 + 1:n]$ are also codewords of $\calC'.$
More formally, define the (linear) nested code of codes $\calC$ and $\calC'$ as $\calC \divideontimes \calC' = \calC \divideontimes (\calC')^2$ (see Definition~\ref{def:nested} and Definition~\ref{def:recNested} for the recursively applied nesting definition). 
In general, when nesting codes $\calC$ and $\calC'$, with rates $1 - \Rate{}$ and $1 - \Rate{}'$ respectively, nested code $\calC \divideontimes \calC'$ has rate at least $1 - \Rate{} - \Rate{}'$
(Lemma~\ref{lem:nestingParams} and Corollary~\ref{corr:nestedRate}).
Notably, the rate of nested code $\calC \divideontimes \calC'$ has a nice interpretation when $(n,k)$ code $\calC$ and $(n/2, k')$ code $\calC'$ are expander codes. 
Informally, $\calC$ has $\Rate{}n$ parities/constraints on its $n$ codeword symbols if it has rate $1 - \Rate{}$.
Similarly $\calC'$ has $\Rate{}' \times (n/ 2)$ constraints on its $n/2$ codeword symbols if it has rate $1 -\Rate{}'$.
Then, the nested code $\calC \divideontimes \calC'$ with codeword length $n$ has $\Rate{} + \Rate{}'$ constraints where there are $\Rate{}n$ constraints on all $n$ codeword symbols $\by$, $R' \times (n/2)$ constraints on the first $n/2$ codeword symbols $\by[1:n]$, and $\Rate{}' \times (n/2)$ constraints on the last $n/2$ codeword symbols $\by[n+1:2n]$; subsequently, there are a total of $\Rate{} + \Rate{}'$ constraints, implying nested code $\calC \divideontimes \calC'$ has rate at least $1 - \Rate{} - \Rate{}'.$

\input{tikz/construction}

\paragraph{Constructing \FBDC{}s (Section~\ref{sect:FBDC})} 
To obtain an amortized version of the \rlcc{} construction of Cohen and Yankovitz, we introduce fuzzy block decoding codes (\FBDC{}), which isolate the features of their construction that enable amortization.
\FBDC{}s are codes whose codewords have a predefined block structure and a `block decoder' that either recovers a requested block of the codeword with some allowable fraction of error or outputs $\bot$ on detection of an error.
More formally, a $(\ell,B,\delta,\delta_B,\eps)$-\FBDC{} is a code with $\delta$ distance and a decoder \BlockDec{} that, given $\ell$ queries to the (possibly corrupted) codeword $\by = \bw_1\circ \dots \bw_{|\by| / B}$ such that each $|\bw_p| = B$ and input $j \in [|\by| / B],$  recovers block $\bw_j$ with at most a $\delta_B$  fraction of errors or outputs $\bot$. 
In Proposition~\ref{prop:recursiveExpander}, essentially following the construction of Cohen and Yankovitz \cite{CCC:CohYan24}, we show how to construct \FBDC{}s by recursively nesting expander codes that double in codeword length at each nesting step.
To start, let $\{G_i = (L_i,R_i,E_i)\}_{i \in \N}$ be a sequence of bipartite expander graphs with a corresponding sequence of expander codes (that are \VLTC{}s) $\{(n_i,k_i) \text{ code } \calC_{i}\}_{i \in \N}$ where we have $n_{i + 1}  = 2n_i$ for all $i \in \N.$ 
For any nesting parameter $s \in \N$, we define our code as $\calC_{\FBDC{}} = \Sigma^{2^{s}\times  n_1} \divideontimes \calC_s \divideontimes \calC_{s - 1} \divideontimes \dots \divideontimes \calC_1$, pictured in Figure~\ref{fig:constr}. 
Then, $\calC_{\FBDC{}}$ is a $(n = 2^s \times n_1, k)$ code with rate $1 - \rate{\FBDC{}} \geq 1 - \sum_{i \leq s}(1 - \frac{k_i}{n_i})$, where each codeword $\by$ can be partitioned into blocks $\by = \by_1^j \circ \dots \circ \by_{n / 2^j}^j$ for any $j \leq s$ such that $\by_i^j$ is a $(\ell_T,\delta_T,\rho,\sigma)$-\VLTC{}. 
We construct a \FBDC{} decoder \BlockDec{} to show that $\calC_{\FBDC{}}$ is a $(\ell = s\ell_T + n_1/2, B = n_1/2, \delta' = \delta_T / 2, \delta_B = 2\delta_T,\eps = O(\rho\delta - \sigma))$-\FBDC{}. 
Note the absence of the top-most \VLTC{} and that the lowest level block size is half of the smallest \VLTC{} block (which we address using top level amplification in Lemma~\ref{lem:topLevelAmp} and block coding in Theorem~\ref{thm:FBDCtoARLCC} respectively).
Intuitively, on a word $\bw = \bw^j_{1} \circ \dots \bw_{2^j}^j$ such that $\reldist(\bw,\by) \leq \delta'$, we can ensure any block $\bw^{s}_{i_s}$ has at most $\delta'$ errors by applying the \VLTC{} test on each containing block $\bw^1_{i_1},\dots,\bw^{s-1}_{i_{s-1}},\bw^s_{i_s}$.
More specifically, suppose $\BlockDec{}^{\bw}(i)$ is tasked with recovering the $i$'th block of $\bw$, which we denote as $\bw_{i_{s + 1}}^{s+1} = \bw[(s+1)B + 1: (s+2)B]$.
Let $i_1,\dots,i_s$ be the block indices such that $\bw_{i_j+1}^{j+1}$ contains block $\bw_{i_{j }}^{j}$  for all $j \in [s].$
\BlockDec{} runs the \VLTC{} tester $\Test{}^{\bw^j_{i_j}}$ on the block $\bw^j_{i_j}$ for each $j \in [s] $. If $\Test{}^{\bw^j_{i_j}} = \bot$ for any $1 \leq j \leq s$ (any local test fails) then the decoder $\BlockDec^{\bw}(i)$ will output $\bot$. Otherwise, if all local tests pass, we output $\bw_{i_{s + 1}}^{s+1}$.
Observe that if no soundness errors occur at any level, then $\bw_{i_s}^s$ only has at most a $\delta'$ fraction of errors, i.e., $\reldist(\by_{i_s}^s,\bw_{i_s}^s) \leq \delta'$. 
This implies that the block $\bw_{i_{s + 1}}^{s+1}$ only has at most a $2\delta'$ fraction of errors since $2|\bw_{i_{s + 1}}^{s+1} | = |\bw_{i_s}^s|$, as desired. 
Thus, the soundness error of the \FBDC{} is at most the probability that for all levels $j \in [s],$ the \VLTC{} tester $\Test{}$ does not output $\bot.$
Using the same clever analysis of Cohen and Yankovitz \cite{CCC:CohYan24}, we can avoid a union bound over the number of levels $s$ by showing that at some level $j \in [s]$, there must be a block $\bw_{i_j}^j$ with a bounded fraction of errors, $\reldist(\bw_{i_j}^j,\by_{i_j}^j) \in [\delta/2,\delta]$. 

We emphasize that a key difference between our soundness analysis is that Cohen and Yankovitz require an upperbound on the block size in order to prove \rldc{} soundness.
Cohen and Yankovitz sufficiently set the lowest level block size to be less than $1 /\delta'$ so that any block $\bw_i^s$ with at least one error has $\reldist(\bw_i^s,\by_i^s) > \delta'$.
Intuitively, they argue that this implies that there exists some level $j \in [s]$ where the error rate is bounded $\reldist(\by^j_i,\bw_i^j) \in [\delta'/2,\delta],$ and hence, soundness is reduced to the soundness of the \VLTC{} on level $j.$
In our case, the precondition, that $\reldist(\bw_i^s,\by_i^s) > \delta'$ is already baked into the \FBDC{} definition, i.e., soundness error only occurs if we return a block with error fraction higher than $\delta'.$
This crucially allows us to set arbitrarily large block sizes so that we can guarantee constant amortized locality.

Note that we can get the \FBDC{} soundness error down by an additional power of $g \in \N$ by doing a canonical amplification step of running any \VLTC{} tester $\Test{}$ an additional factor of $O(g)$ times and outputting $\bot$ if and only if any $\Test$ outputs $\bot.$
Looking forward a bit, in contrast to traditional locality, due to our consideration of {\em amortized} locality, we can account for the $O(g)$ factor increase in the number of queries in our construction by increasing the minimal number of symbols recovered $\kappa$ by the same $O(g)$ factor to keep the amortized locality constant or approaching $1$.

We now address the absent top level \VLTC{}. 
Recall from the discussion in Section~\ref{subsec:recallingLCC} that the \rlcc{} error tolerance is necessarily sub-constant from nesting the \VLTC{}s $s = O(\log n)$ times if the overall rate is constant.
Our \FBDC{} construction will suffer the same sub-constant top level error tolerance $\delta'$ since we aim make our \FBDC{} block sizes as small as possible while maintaining low amortized locality in the \realdc{}. 
Hence, we introduce Lemma~\ref{lem:topLevelAmp}, which boosts the top level error tolerance of a \FBDC{} by applying a top level \VLTC{} with good error tolerance error tolerance (also pictured in Figure~\ref{fig:constr}).
Note that we can afford to apply a high error tolerance \VLTC{} because we are only nesting it once.
More specifically, given a $(n,k_{\FBDC{}})$ code $\calC_{\FBDC{}} = (\cdot,\BlockDec{})$ that is a $(\ell_{\FBDC{}}, \delta',\delta_B,\eps_{\FBDC{}})$-\FBDC{} and a $(2n,k_{\VLTC{}})$ code $\calC_{\VLTC{}} = (\cdot,\Test)$ that is a $(\ell_T,\delta,\rho,\sigma)$-\VLTC{}, the $(n,k)$ code $\calC_{\msf{amp}} = \calC_{\VLTC{}} \divideontimes \calC_{\FBDC{}}$ is a $(\ell = \ell_{\FBDC{}} + O(\delta/\delta') \times \ell_T, \delta,\delta_B,\eps = O(\eps_{\FBDC{}}))$-\FBDC{}. 
The new block decoder $\BlockDec{}_{\msf{amp}}$ for code $\calC_{\msf{amp}}$ will first attempt to detect error at the top level using the tester \Test{} $O(\delta/\delta')$ times; if all tests pass, then the prior block decoder \BlockDec{} is used. 
The idea is that for any corrupted word $\bw$ such that $\reldist(\bw,\by) \in (\delta'/2,\delta]$, the errors will be caught by at least one $\Test$ invocation with constant probability since the number of times we repeat the test is proportional to the error tolerance ratio $2\delta / \delta'.$
Then, if all tests pass, this implies that the error rate is low, $\reldist(\bw,\by) \leq \delta'$ (additionally implying that $\reldist(\bw^1_1,\by^1_1), \reldist(\bw^1_2,\by^1_2) \leq \delta'$), so we can defer to using the original $\BlockDec{}^{\bw_{i_1}^1}$ to recover block $\bw^s_{i_s}$.
By applying a top level amplification of the error tolerance, we can construct a $(\ell,B,\delta,\delta_B,\eps)$-\FBDC{} where the top level error tolerance $\delta$ is constant while the block level error tolerance $\delta_B$ is sub-constant.

\paragraph{Converting \FBDC{}s into \realcc{}s (Section~\ref{sect:FBDC})}
We show how to transform the \FBDC{} into an \realcc{} in Theorem~\ref{thm:FBDCtoARLCC}.
So far, we have constructed a code $\calC_{\FBDC}$ with block decoder $\BlockDec{}$ that is a $(\ell,B,\delta,\delta_B,\eps)$-\FBDC{}. 
We now wish to construct a code $\calC_{\realcc{}}$ with relaxed local decoder $\Dec{}$ that is a $(\alpha,\kappa,\delta,\eps)$-\realcc{}.
Observe that the main difference between a \FBDC{} and an \realcc{} is that the recovered symbols by the \FBDC{} decoder, say $\bw^{s}_{i_s},$ can have up to a $\delta_B$ fraction of errors, $\reldist(\bw^{s}_{i_s},\by^s_{i_s}) \leq \delta_B$, while the recovered symbols by the \realcc{} decoder are error-free (if both decoders do not output $\bot$).
Our insight is that if we make each block $\by^{s}_{i_s}$ a codeword of an error correcting code tolerating $\delta_B$ fraction of errors, then we can recover error-free block $\by^{s}_{i_s}$ from corrupted block $\bw^{s}_{i_s}$, thus converting the $(\ell,B,\delta,\delta_B,\eps))$-\FBDC{} into a $(\alpha = O(\ell / B),\kappa = B,\delta,\eps)$-\realcc{}. 
Specifically, we apply an additional nesting of a $(B,k_{\msf{block}})$ code $\calC_{\msf{Block}} = (\Enc,\Dec)$ tolerable to $\delta_B$ errors, $\calC_{\realcc{}} = \calC_{\FBDC{}} \divideontimes \calC_{\msf{Block}}$.
Then, suppose the relaxed local decoder $\Dec$ for code $\calC_{\realcc{}}$ is given as input the range $[L,R] \subseteq [n]$, where we suppose $R - L + 1 \geq B$.
There exists a minimal block index range $i_{j},\dots, i_{j}+t - 1$ such that $[L,R] \subseteq \{i_{j }B + 1,\dots, (i_{j}+ t)B\}$. 
Hence, we can run the \FBDC{} decoder $\BlockDec{}^{\bw}$ $t$ times on input indices $i_j,\dots,$ and $ i_j+ t$,  to obtain $\bw^s_{i_j}, \dots, \bw^s_{i_j + t - 1}.$
For each $q \in [t],$ we can recover $\by^s_{i_j + q - 1}$ by decoding and re-encoding $\bw^s_{i_j + q - 1}$, i.e., $\Enc(\Dec(\bw^s_{i_j + q - 1})) =  \by^s_{i_j + q - 1}$ since $\bw^s_{i_j + q - 1}$ is guaranteed to have at most a $\delta_B$ fraction of errors (if no soundness error occurs). 
Completeness and soundness follows from the \FBDC{} correctness, and we attain amortized locality $\alpha = O(\ell /B)$ since $t$ invocations of $\BlockDec{}^{\bw}$ makes $\ell t$ queries and we recover at least $B \times (t- 1)$ requested codeword symbols.
Looking forward, when we instantiate the \realcc{}, we will set the block size equal to the number of queries made, $B = \ell$, to achieve constant amortized locality (along with constant rate and constant error tolerance).
We emphasize that we are able to increase the block size $B$ arbitrarily since we are no longer reliant on a small block size for soundness, in contrast to the prior \rlcc{} construction of Cohen and Yankovitz \cite{CCC:CohYan24}.
To further improve on the locality, we can set the minimum recovery size $\kappa$ to be asymptotically larger than the block size $B,$ i.e., $\kappa = hB$ for some $h = \omega(1),$ to construct an \realcc{} with amortized locality approaching $1$ (along with constant rate and constant error tolerance).

\paragraph{\realcc{} to \realdc{} by Bad Block Filtering (Section~\ref{sect:LCCtoLDC})}
As previously discussed in Section~\ref{subsec:challenges} challenge $2$, while the conversion from traditional linear \LCC{}s to linear \LDC{}s is straightforward and locality-preserving via a linear transformation, the same transformation used for the conversion from linear \realcc{}s to a linear \realdc{}s need not preserve the {\em amortized} locality. 
Succinctly, recall that some blocks $\by_j^s$ may contain relatively few message symbols after the linear transformation is applied onto the \realcc{}, so we no longer have a guarantee on the minimum number of blocks recovered. 
Hence, the amortized locality is not preserved a priori. 
To address this, we implement the solution outlined in Section~\ref{subsec:contributions}, where we effectively ignore any message symbols that appear within a block with a low proportion of message symbols. 
First, we define $c$-good (resp. bad) blocks as blocks $\by_j^s$ which contain either zero or at least (resp. less than) a $c$ fraction of message symbols (Definition~\ref{def:goodBlock}). 
Our objective is to construct a code whose codeword blocks are all $c$-good, which we denote as a $c$-message smooth code (Definition~\ref{def:smoothCode}).
For a $(n,k)$ code $\calC$, we show for a block size $B$ and any fraction $c \leq k/n$ that code $\calC$ contains at least $\frac{k - cn}{(1- c) B}$ $c$-good blocks by maximizing the number of $c$-bad blocks code $\calC$ could possibly contain (Lemma~\ref{lem:numGoodBlocks}).
This implies that by ignoring all $c$-bad blocks, e.g., by fixing their message symbols to $0$'s, the number of message symbols remaining is at least $\frac{k - cn}{(1- c) B} \times B = \frac{k - cn}{1 - c}$, and the rate is subsequently at least $\frac{1 - R - c}{1 - c}$ (Theorem~\ref{thm:SmoothifyingNestedCodes}).
Thus, if we apply Theorem~\ref{thm:SmoothifyingNestedCodes} to our \realcc{} construction after linear transformation, we can guarantee each block recovered by the decoder to have at least a $c < 1 - \Rate{\realcc{}}$ fraction of message symbols, raising the amortized locality by a $1/c$ factor and lowering the rate accordingly in the \realdc{}(Corollary~\ref{corr:arLCCToArLDC}).

\paragraph{Explicit Instantiation of an Ideal \realdc{} (Section~\ref{sect:AllTogether})}
We are ready to explicitly instantiate our \realdc{} constructions now that we have shown how to construct a \FBDC{} from recursively-nested expanders (Proposition~\ref{prop:recursiveExpander} and Lemma~\ref{lem:topLevelAmp}), an \realcc{} from a \FBDC{} (Theorem~\ref{thm:FBDCtoARLCC}), and an \realdc{} from an \realcc{} (Corollary~\ref{corr:arLCCToArLDC}).
First, we instantiate our \realdc{} construction with the best known explicit expanders (Theorem~\ref{thm:explicitExpander}) and demonstrate that our construction achieves constant amortized locality (Theorem~\ref{thm:constantARLDC}).
Our key consideration will be setting block size $B$ to be large enough to asymptotically match the number of queries.
Following the parameters set by Cohen and Yankovitz in their explicit \rlcc{} construction \cite{CCC:CohYan24}, we can construct a $(\ell ,B,\delta = \Omega(1),\delta_B,\eps = O(1))$-\FBDC{} with an equivalent number of queries $\ell = (\log n)^2 \times 2^{O((\log\log \log n)^3)}$, block error tolerance $\delta_B = \Omega\left(\frac{1}{\log n\log\log n}\right)$, and rate $1 - \Rate{}' \geq 1 - \delta \times 2^{O((\log \log 1/\delta)^3)} - o(1)$.
Thus, barring any soundness amplification, we can set the block size to be $B = (\log n)^2\times 2^{(\log\log \log n)^4}$, apply a constant rate, constant error tolerance code (e.g., a Justesen Code \cite{ITIT:Jus72}) in the \FBDC{} to \realcc{} step, to get a $(\alpha' = O(1),\kappa = (\log n)^2\times 2^{(\log\log \log n)^4},\delta = \Omega(1), \eps = O(1))$-\realcc{} with constant rate $1 - \rate{\realcc{}} \geq 1 - \rate{}' - \rate{\msf{Block}}.$
Note that we can further state that the amortized locality of the \realcc{} $\alpha'$ approaches $1$ since the block size $B$ is asymptotic to the number of \FBDC{} queries $\ell$ (technically, we also have to raise the number of blocks recovered, $\kappa = hB$ for $h \in \omega(1)$, as previously discussed). 
Lastly, in the \realcc{} to \realdc{} step, we can pick $c$ proportional to the rate $1 - \Rate{}'$, say $c = (1 - \Rate{}')/ 2$ to obtain a $( \alpha'\times (2/(1 - \Rate{}')) = O(1),\kappa, \delta,\eps)$-\realdc{} with constant rate $1 - \Rate{} \geq \frac{(1 - \rate{\realcc{}}) / 2}{\rate{\realcc{}}/2} \geq  \Omega(1).$

\paragraph{Lowering the Amortized Locality (Section~\ref{sect:constrApproach1})}
We further show that our ideal \realdc{} construction can be modified to have a more precisely stated amortized locality.
In particular, we show that either (1) we can obtain amortized locality $< 2$ while keeping rate and error tolerance constant, or (2) we can have amortized locality and rate approach $1$, if the error tolerance is allowed to be sub-constant.
First, in the prior ideal instantiation, we start with the  intermediately constructed $(\ell,B,\delta,\delta_B,\eps)$-\FBDC{} with number of queries $\ell = (\log n)^2 \times 2^{O((\log\log \log n)^3)}$, block error tolerance $\delta_B = \Omega\left(1/\log n\log\log n\right)$, and rate $1 - \Rate{}' \geq 1 - \delta \times 2^{O((\log \log 1/\delta)^3)} - o(1)$.
By the note in the prior discussion, we can set block size $B = (\log n)^2\times 2^{(\log\log \log n)^4}$ to attain \realcc{} amortized locality approaching $1.$
Additionally, we observe that the choice of the constant rate, constant error tolerant Justesen code for our block code is excessive in the \FBDC{} to \realcc{} step since we only need a block code $\calC_{\msf{Block}}$ tolerant to a $\delta_B = \Omega\left(1/\log n\log\log n\right)$ fraction of errors.
Instead of using the Justesen code, we show how to efficiently construct code $\calC_{\msf{Block}}$ with error tolerance $\Omega\left(\frac {\log\log n} {\log n}\right)$ and rate  $1 - \rate{\msf{Block}} \geq 1 - {1}/ {\log\log n} - {(\log \log n)^3}/{\log n} = 1 - o(1)$ (see Proposition~\ref{prop:boostedInner}).
Penultimately, with a sufficiently high block size $B$ and a more appropriate block code $\calC_{\msf{Block}}$, we construct a $(\alpha' = 1 + o(1), \kappa, \delta, \eps = O(1))$-\realcc{} code $\calC_{\realcc{}}$ with rate $1 - \rate{\realcc{}} \geq 1 - \delta \times 2^{O((\log \log 1/\delta)^3)} - o(1)$. 
Finally, we choose parameter $c \leq 1 - \rate{\realcc{}}$ to obtain a $(\alpha = (1+ o(1)) / c, \kappa, \delta, \eps)$-\realdc{} with rate $1 - \rate{\realdc{}} \geq \frac{1 - \rate{\realcc{}} - c}{1 - c}$ (see Theorem~\ref{thm:approachingOneARLDC}).
Observe that there exists constant $\delta > 0$ (dependent on the constant factor in the $2^{O(\cdot)}$ term) such that $1 - \rate{\realcc{}} > 1/2$. 
Therefore, there is a $c$ value such that the amortized locality $\alpha = \alpha_{\realcc{}} / c < 2$ and rate $1 - R_{\realdc{}} \geq \Omega(1)$ (see Corollary~\ref{corr:realdclessthan2}). 
Moreover, we can achieve amortized locality and rate approaching $1$ by allowing the error tolerance to become sub-constant. 
In more detail, we set $\delta \in o(1)$ such that the rate $1 - \rate{\realcc{}} \geq 1 - \delta \times 2^{O((\log \log 1/\delta)^3)} - o(1)$ approaches $1$, say $1 - \rate{\realcc{}} \geq  1 - \frac{1}{\log n}$. 
Then, by setting $c$ to approach $1 - \rate{\realcc{}}$, say $c = 1 - \frac{1}{\log\log n} < 1 - \rate{\realcc{}}$, we can attain $\realdc{}$ amortized locality $\alpha = (1 +o(1)) / c = 1 + o(1)$, while the rate is at least $\frac{ \frac{1}{\log \log n} - \frac 1 {\log n}}{\frac{1}{\log\log n}} = 1 - \frac{\log\log n}{\log n}$ (see Corollary~\ref{corr:realdcapproach1}).

We conclude the technical overview by presenting our three main results in full.

\input{theorems/constantARLDC}
\input{theorems/lessthan2}
\input{theorems/approach1}

%% file: tikz/binaryStructure.tex
\newcommand{\blockdiagram}[3]{%
\begin{scope}[xshift=#1]

    \draw[line width=3pt] (0,0) rectangle (6,0.8);
    \node at (3,0.4) {#2};
    
    \draw (3.7,1) -- (3.7,0.6);
    \node at (3.7,1.2) {$i$};

    \draw[dashed](3.7,1) -- (3.7,-4.6);
    
    \begin{scope}[yshift=-1.4cm]
        \draw (0,0) rectangle (3,0.65);
        \draw[line width=3pt] (3,0) rectangle (6,0.65);
        \node at (1.5,0.32) {$#3_1^1$};
        \node at (4.5,0.32) {$#3_2^1$};
    \end{scope}
    
    \begin{scope}[yshift=-2.8cm]
        \foreach \j in {0,1,2,3} {
            \draw ({1.5*\j},0) rectangle ({1.5*(\j+1)},0.65);
        }
        \draw[line width=3pt] (3,0) rectangle (4.5,0.65);
        \node at (0.75,0.32) {$#3_1^2$};
        \node at (2.25,0.32) {$#3_2^2$};
        \node at (3.75,0.32) {$#3_3^2$};
        \node at (5.25,0.32) {$#3_4^2$};
    \end{scope}
    
    \node at (3,-0.4) {\rotatebox{90}{$=$}};
    \node at (3,-1.8) {\rotatebox{90}{$=$}};
    \node at (3,-3.1) {\rotatebox{90}{$=$}};
    \node at (2.95,-3.6) {$\vdots$};
    \node at (3,-4.2) {\rotatebox{90}{$=$}};
    
    \begin{scope}[yshift=-5.0cm]
        \draw (0,0) rectangle (6,0.55);
    
        \foreach \x in {0.35,0.7,1.05,1.4} {
            \draw (\x,0) -- (\x,0.55);
        }
    
        \draw (5.55,0) -- (5.55,0.55);
        
        \draw[line width=3pt] (3.45,0) -- (3.45,0.55);
        \draw[line width=3pt] (4.1,0) -- (4.1,0.55);
        \draw[line width=3pt] (3.4,0) -- (4.15,0);
        \draw[line width=3pt] (3.4,0.55) -- (4.15,0.55);
    
        \node[font=\small] at (0.18,0.28) {$#3_1^s$};
        \node at (3,0.28) {$\cdots$};
        \node[font=\small] at (3.78,0.28) {$#3_{i_s}^{s}$};
        \node[font=\small] at (5.78,0.28) {$#3_{2^s}^{s}$};
    \end{scope}

\end{scope}
}

\begin{figure}
    \centering
\resizebox{0.5\textwidth}{!}{\begin{tikzpicture}[
    box/.style={draw, minimum height=0.65cm},
    every node/.style={font=\large}
]

\blockdiagram{-8cm}{Codeword $\by$}{\by}

\draw[->, line width=1pt] (-1.45,-2.35) -- (-0.35,-2.35)
    node[midway, above, font=\normalsize] {$\delta n$ errors};
\blockdiagram{0cm}{Corrupted word $\bw$}{\bw}
\end{tikzpicture}}
    \caption{Codeword $\by = (\by^{1}_1 \circ \by_2^1) = (\by_1^2 \circ \by_2^2 \circ \by_3^2 \circ \by_4^2) = \dots = (\by_1^s \circ \dots \circ \by_{2^s}^s)$ is constructed such that each intermediate $\by_{i_j}^j$ block is locally testable.
    Then, the decoder $\Dec^{\bw}$ with oracle access to word $\bw = (\bw^{1}_1 \circ \bw_2^1) = (\bw_1^2 \circ \bw_2^2 \circ \bw_3^2 \circ \bw_4^2) = \dots = (\bw_1^s \circ \dots \circ \bw_{2^s}^s)$ ($ \leq \delta n$ errors from codeword $\by$) and input index $i \in [n]$ runs the local tests on the blocks of word $\bw$ containing index $i.$
    That is, for the sequence $(i_0,i_1,\dots,i_s)$ such that $\by[i] \in \by^0_{i_0} \supset \by^1_{i_1} \supset \by_{i_2}^2 \supset \dots \supset \by_{i_s}^s,$ run the local tests on blocks $\bw^0_{i_0},\bw^1_{i_1}, \bw_{i_2}^2, \dots, \bw_{i_s}^s$, and output $\bw[i]$ if and only if all tests pass.}
    \label{fig:blockstruct}
\end{figure}

%% file: tikz/construction.tex
\begin{figure}
    \centering
\begin{subfigure}{.3\textwidth}\label{subfig:vltctoFBDC}\centering
\resizebox{.85\textwidth}{!}{\begin{tikzpicture}
    \begin{scope}[xshift=0cm]

    \draw[line width=1pt, dashed] (0,0) rectangle (6,0.8);
    \node at (3,0.4) {$\Sigma^{2^s \times n_1}$};
    
    \begin{scope}[yshift=-1.4cm]
        \draw (0,0) rectangle (3,0.65);
        \draw (3,0) rectangle (6,0.65);
        \node at (1.5,0.32) {$\calC_s$};
        \node at (4.5,0.32) {$\calC_s$};
    \end{scope}

    \node at (3,-0.4) {$\divideontimes$};
    \node at (3,-1.8) {$\divideontimes$};
    \node at (3,-2.5) {$\divideontimes$};
    \node at (3,-2.05) {\scriptsize$\vdots$};
    \node at (3.05,-3.75) {\rotatebox{90}{$=$}};
    
    \begin{scope}[yshift=-3.3cm]
        \draw (0,0) rectangle (6,0.55);
    
        \foreach \x in {1,2,3,4} {
            \draw (\x * .7,0) -- (\x * .7,0.55);
            
        }
        \foreach \x in {0,1,2,3} {
            \node[] at (\x*0.7+0.35,0.28) {$\calC_1$};
        }
    
        \draw (5.3,0) -- (5.3,0.55);

        \node at (3.1,0.28) {$\cdots$};
        \node[] at (5.65,0.28) {$\calC_1{}$};
    \end{scope}
    \begin{scope}[yshift=-3.4cm]
        \draw[fill=LimeGreen] (0,-1.6) rectangle (6,-.8);
        \node[] at (3,-1.2) {$(\ell' ,B,\delta',\delta_B,\eps)$-\FBDC{} };
    \end{scope}
\end{scope}
    \end{tikzpicture}}
\caption{\VLTC{} to \FBDC{} (Prop.~\ref{prop:recursiveExpander})}
\end{subfigure}
\begin{subfigure}{.3\textwidth}\label{subfig:toplevelamp}\centering
             \resizebox{.85\textwidth}{!}{\begin{tikzpicture}
    \begin{scope}[xshift=0cm]

    \node at (3,-.4) {$\divideontimes$};
    \node at (3,-3.2) {\rotatebox{90}{$=$}};
    
    \draw[line width=3pt] (0,0) rectangle (6,0.8);
    \node at (3,0.4) {$(\ell_T,\delta,\rho,\sigma)$-$\VLTC{}$};
    
    \begin{scope}[yshift=-1.4cm]
        \draw[fill=LimeGreen] (0,-1.4) rectangle (6,0.65);
        \node at (3,0) {$(\ell',B,\delta',\delta_B,\eps)$-\FBDC{} (Prop.~\ref{prop:recursiveExpander})};
    \end{scope}
    
    \begin{scope}[yshift=-2.8cm]
        \draw[fill=JungleGreen] (0,-1.6) rectangle (6,-.8);
        \node[] at (3,-1.2) {$(\ell' + O(\delta/\delta') \times \ell_T,B,\delta,\delta_B,\eps)$-\FBDC{} };
    \end{scope}

\end{scope}
    \end{tikzpicture}}
    \caption{Top Level Amp. (Lem.~\ref{lem:topLevelAmp})}
\end{subfigure}
\begin{subfigure}{.3\textwidth}\label{subfig:fbdctoarlcc}\centering
\resizebox{.85\textwidth}{!}{\begin{tikzpicture}
    \begin{scope}[xshift=0cm]

    \node at (3,-2.4) {$\divideontimes$};
    \node at (3,-3.95) {\rotatebox{90}{$=$}};

    \begin{scope}[yshift=0cm]
        \draw[fill=JungleGreen] (0,0) rectangle (6,-2.05);
        \node[] at (3,-1.2) {$(\ell,B,\delta,\delta_B,\eps)$-\FBDC{} (Lem.~\ref{lem:topLevelAmp})};
    \end{scope}

        \begin{scope}[yshift=-3.5cm]
        \draw[line width=3pt] (0,0) rectangle (6,0.55);
    
        \foreach \x in {1,2,3,4} {
            \draw[line width=1.5pt] (\x * .7,0) -- (\x * .7,0.55);
            
        }
        \foreach \x in {0,1,2,3} {
            \node[font=\scriptsize] at (\x*0.7+0.35,0.28) {$\calC_{\msf{Block}}$};
        }
    
        \draw[line width=1.5pt] (5.3,0) -- (5.3,0.55);

        \node at (3.1,0.28) {$\cdots$};
        \node[font=\scriptsize] at (5.65,0.28) {$\calC_{\msf{Block}}$};
    \end{scope}
    \begin{scope}[yshift=-3.5cm]
        \draw[fill=ForestGreen] (0,-1.6) rectangle (6,-.8);
        \node[] at (3,-1.2) {$(\alpha = O(\ell / B), \kappa = B, \delta,\eps)$-\realcc{} };
    \end{scope}
\end{scope}
    \end{tikzpicture}}
    \caption{\FBDC{} to \realcc{} (Thm~\ref{thm:FBDCtoARLCC})}
\end{subfigure}
    \caption{Subfigures a, b, and c describe how we construct an \realcc{} from a family of expander graphs.
    First, in Proposition~\ref{prop:recursiveExpander}, we construct a $(\ell',B,\delta',\delta_B,\eps)$-\FBDC{} by nesting expander code-based \VLTC{} family $\{\calC_i\}_{i \in \N}$ (Lemma~\ref{lem:vtester}). We do not initially nest the top-level \VLTC{} so that we may perform distance amplification (Lemma~\ref{lem:topLevelAmp}). Intuitively, to catch any intermediate error fraction between $\delta'$ and $\delta$, we need to repeat the top-level \VLTC{} tester $O(\delta/\delta')$ times. We then apply block-level nesting of an error correcting code $\calC_{\msf{Block}}$ tolerant to $\delta_B$ errors to obtain a $\realcc{}$ in Theorem~\ref{thm:FBDCtoARLCC}.}
    \label{fig:constr}
\end{figure}

%% file: theorems/constantARLDC.tex
\begin{restatable*}[Constant Rate, Error-Tolerance, Amortized Locality $\realdc{}$]{theorem}{constantARLDC} \label{thm:constantARLDC}
For any $n = 2^r$, $r \in \N$, and $\eps < 1$, there is an (explicit) $(n,k)$ code $\calC$ that is a $(\alpha = O(1),\kappa = \log\left(1/\eps\right) \times (\log n)^2 \times 2^{(\log\log\log n)^4},\delta = \Omega(1),\eps)$-$\realdc{}$ with rate $1 - \Rate{\realdc{}} \geq \Omega(1).$
\end{restatable*}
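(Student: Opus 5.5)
The proof is a pipeline of the earlier results, instantiated with explicit expanders.

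First, I build the inner fuzzy block decoding code. I take the explicit expander family of Theorem~\ref{thm:explicitExpander}, which yields \VLTC{}s via Lemma~\ref{lem:vtester}, with the parameters Cohen and Yankovitz use. Applying Proposition~\ref{prop:recursiveExpander} to the recursively nested family $\calC_s \divideontimes \dots \divideontimes \calC_1$, I fix the nesting depth $s$ so that the bottom block size is
\[
B = (\log n)^2 \times 2^{(\log\log\log n)^4}.
\]
This is possible since $n = 2^r$ and the $n_i$ double at each level. The result is an \FBDC{} with $\ell' = (\log n)^2 \times 2^{O((\log\log\log n)^3)}$ queries, block error tolerance $\delta_B = \Omega(1/(\log n\log\log n))$, sub-constant top-level tolerance $\delta'$, and rate $1-\delta\, 2^{O((\log\log 1/\delta)^3)} - o(1)$.

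To reach soundness error $\eps$, I amplify by repeating each level's tester $O(\log(1/\eps))$ times. This multiplies $\ell'$ by $O(\log(1/\eps))$, which is why $\kappa$ carries the factor $\log(1/\eps)$.

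Next, I apply Lemma~\ref{lem:topLevelAmp} with an explicit expander \VLTC{} of constant $\delta$ at length $2n$. This adds $O(\delta/\delta')\ell_T$ queries. I need to check that this term is also $(\log n)^2 2^{O((\log\log\log n)^3)}$; with $\delta'$ inverse-polylogarithmic it should be. It also costs an additional constant in rate, which I absorb by choosing $\delta$ a small enough constant.

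Then I apply Theorem~\ref{thm:FBDCtoARLCC}, nesting a constant-rate, constant-distance block code (Justesen) of length $B$ that corrects a $\delta_B$ fraction of errors. With block size $B$ and per-block query cost $\ell$, set $\kappa = \log(1/\eps) B$ so the per-block cost matches the minimum interval length. This gives an \realcc{} with amortized locality $O(\ell/B)$ for intervals of length at least $\kappa$, which covers the required range. Its rate is at least $1-\Rate{}' - \Rate{\msf{Block}}$, which is $\Omega(1)$ provided $\delta$ is small enough that the expander-nesting rate loss is, say, at most a third.

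Finally, I invoke Corollary~\ref{corr:arLCCToArLDC} with $c = (1-\Rate{\realcc{}})/2$. This yields an \realdc{} with amortized locality $O(1)/c = O(1)$ and rate at least $\frac{1-\Rate{\realcc{}}-c}{1-c} = \Omega(1)$, while preserving $\delta$, $\eps$, and $\kappa$ up to constants.

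The main obstacle is the parameter bookkeeping in the first two steps. The total query count $\ell$ (nested tests over $s = O(\log n)$ levels, each of cost $\ell_T' = \log n \cdot 2^{O((\log\log\log n)^3)}$, times the amplification factor, plus the top-level repetitions) must be shown to be at most $O(B\log(1/\eps))$. At the same time the cumulative rate loss $\sum_i (1-k_i/n_i)$ of the explicit nesting must stay below a constant. I would follow the Cohen–Yankovitz choice $\Rate{}(\calC_i) \approx \delta_i 2^{O((\log\log 1/\delta_i)^3)}$ with geometrically varying $\delta_i$ so the sum converges. I would then verify that $2^{(\log\log\log n)^4}$ dominates every $2^{O((\log\log\log n)^3)}$ factor for large $n$, with small $n$ absorbed into constants.
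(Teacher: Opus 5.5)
Your pipeline is the paper's: explicit expanders fed into Proposition~\ref{prop:recursiveExpander} with $O(\log(1/\eps))$ repetitions per level, then Lemma~\ref{lem:topLevelAmp}, then a Justesen block code in Theorem~\ref{thm:FBDCtoARLCC}, then Corollary~\ref{corr:arLCCToArLDC} with $c=(1-\Rate{\realcc{}})/2$.

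The gap is in the one step with real content, which you single out yourself. That step is choosing the nested expanders so the rate stays bounded away from $0$ while the query count stays $(\log n)^2 2^{O((\log\log\log n)^3)}$. Your proposed fix, geometrically varying $\delta_i$ so that $\sum_i \Rate{}(\calC_i)$ converges, would fail, for two reasons.
\begin{itemize}
\item It is not an instance of Proposition~\ref{prop:recursiveExpander}. That proposition requires one $(d,\delta,\zeta)$ family with one ratio $|R_i|/|L_i|$. Its soundness argument finds a level whose error fraction lies in $[\delta/2,\delta]$ for that common $\delta$.
\item More fundamentally, with $s=\Theta(\log n)$ levels a geometric progression (with ratio not $1-o(1)$, which would just be the uniform choice) makes the smallest $\delta_i$ equal to $n^{-\Omega(1)}$. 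A vicinity tester that rejects a $\delta_i/2$ fraction of errors with constant probability needs $\Omega(1/\delta_i)$ queries. For Lemma~\ref{lem:vtester} you need $1/\eta<\zeta\bar c\,\delta_i/2$, i.e.\ $\eta\bar c>2/(\zeta\delta_i)$. So the locality would become polynomial.
\end{itemize}
The remark also contradicts the parameters you assert in your first step ($\delta_B=\Omega(1/(\log n\log\log n))$, tests of cost $\log n\cdot 2^{O((\log\log\log n)^3)}$), which come from a uniform choice. It also misattributes the constant rate loss to the nesting.

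No variation is needed. The paper takes every level to be an explicit expander from Theorem~\ref{thm:explicitExpander} with $|R_i|/|L_i|=2^{-\lceil\log L\rceil}$, where $L=\log n\log\log n$.
\begin{itemize}
\item The $s-1\le\log n$ nested levels then lose rate only $(s-1)/L\le 1/\log\log n=o(1)$.
\item We have $d_{nest}=2^{O((\log\log\log n)^3)}$ and $\delta_{nest}=\beta/(d_{nest}2^{\lceil\log L\rceil})$. With $\eta_{nest}=4/(\zeta\beta)$, each test costs $O(d_{nest}L)$ queries and rejects with probability at least $\zeta\beta/4$ at the critical level.
\item Hence $g=\eta_{nest}\log(1/\eps)$ repetitions give soundness error $\eps$ at total cost $\log(1/\eps)(\log n)^2 2^{O((\log\log\log n)^3)}$.
\item The top-level code uses $\eta_{top}=8/(\zeta\delta_{nest}\bar c_{top})$ and $g_{top}=\eta_{top}\log(1/\eps)$. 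It adds $O(\log(1/\eps)\,\delta\, d_{nest}^2L^2)$ queries, which confirms your estimate.
\end{itemize}
The top-level code is the only source of a constant rate loss, namely $\delta\,2^{O((\log\log 1/\delta)^3)}$. Choosing $\delta$ small makes this loss smaller than the block code's rate, and that is the condition you need.

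Finally, the paper puts the $\log(1/\eps)$ factor into $B$ itself, giving $\ell=(1+o(1))B$ independently of $\eps$. Putting it into $h$ as you do gives $\ell/B=1+\log(1/\eps)\,2^{O((\log\log\log n)^3)-(\log\log\log n)^4}$, which is $O(1)$ only when $\eps$ is not too small.
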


%% file: theorems/lessthan2.tex
\begin{restatable*}[\realdc{} with Amortized Locality $< 2$]{corollary}{lessthan} 
\label{corr:realdclessthan2}
    Let $\Rate{}(\delta) = \delta \times 2^{c_0 \log^3(\log 1/\delta)}$ for some universal constant $c_0 \in \N.$ For any $n = 2^r$, $r \in \N,$ constants $ \eps < 1,$ $\nu \in (0,2/3),$ and $\delta \in (0,1)$ such that $\Rate{}(\delta) < 1 - \frac{3(1 + \nu)}{5},$  there exists an (explicit) $(n,k)$ code $\calC$ that is a $\left(\alpha = 5/3, \kappa(\nu) , \delta, \eps\right)$-$\realdc{}$, where $\kappa(\nu) = \frac{ \log(1/\eps)\times(\log n)^2\times 2^{(\log\log\log n)^4}}{\nu}$ and the rate $1 - \rate{\realdc{}} \geq \Omega(1)$. 
\end{restatable*}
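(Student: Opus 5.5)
This is a corollary of Theorem~\ref{thm:approachingOneARLDC}, which I take as already established. That theorem produces an $(\alpha_{\realcc{}},\kappa,\delta,\eps)$-\realcc{} with $\alpha_{\realcc{}} \le 1+\nu$ for $\kappa = \kappa(\nu)$ and rate $1-\rate{\realcc{}} \ge 1-\Rate{}(\delta) - o(1)$. It then converts this into an \realdc{} with amortized locality $\alpha_{\realcc{}}/c$ and rate at least $\frac{1-\rate{\realcc{}}-c}{1-c}$, for any $c < 1-\rate{\realcc{}}$; the conversion is Corollary~\ref{corr:arLCCToArLDC}, using Lemma~\ref{lem:numGoodBlocks} and Theorem~\ref{thm:SmoothifyingNestedCodes}. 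The plan is to fix the filtering parameter to
\[
c = \tfrac{3(1+\nu)}{5},
\]
so that the target locality becomes $\alpha = (1+\nu)/c = 5/3$. It then remains to check that this $c$ is admissible and that the resulting rate is still constant.

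\textbf{Step 1: build the \realcc{}.} Instantiate the \FBDC{} of Proposition~\ref{prop:recursiveExpander} with the explicit expanders of Theorem~\ref{thm:explicitExpander}, and boost its top-level error tolerance to $\delta$ via Lemma~\ref{lem:topLevelAmp}. Apply Theorem~\ref{thm:FBDCtoARLCC} with the high-rate block code of Proposition~\ref{prop:boostedInner}, using block size $B = (\log n)^2 2^{(\log\log\log n)^4}$ and minimum recovery size $\kappa = hB$ with $h = \log(1/\eps)/\nu$. The $\log(1/\eps)$ factor absorbs the soundness amplification, i.e.\ the repeated testers.

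Since $B$ asymptotically dominates the \FBDC{} query count $(\log n)^2 2^{O((\log\log\log n)^3)}$, the amortized locality is
\[
\frac{\ell\,t}{B(t-1)} \le \Big(1+\frac{1}{h}\Big)(1+o(1)) \le 1+\nu
\]
for large $n$, after shrinking constants. Here $t-1 \ge h$ counts the fully recovered blocks. The rate is $1-\rate{\realcc{}} \ge 1-\Rate{}(\delta) - o(1)$, because the nesting rate losses add by Corollary~\ref{corr:nestedRate} and the block code loses only $o(1)$.

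\textbf{Step 2: check admissibility of $c$.} We need $c < 1-\rate{\realcc{}}$. The hypothesis $\Rate{}(\delta) < 1-\tfrac{3(1+\nu)}{5}$ gives a constant gap $\gamma = 1-\Rate{}(\delta) - c > 0$. Once $n$ is large enough that the $o(1)$ term is below $\gamma/2$, we get $1-\rate{\realcc{}} - c \ge \gamma/2$. The condition $\nu < 2/3$ ensures $c < 1$; in fact $c = 3(1+\nu)/5 < 1$ exactly when $\nu < 2/3$.

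\textbf{Step 3: conclude.} Corollary~\ref{corr:arLCCToArLDC} yields locality $(1+\nu)/c = 5/3$ and rate at least
\[
\frac{\gamma/2}{1-c} = \Omega(1).
\]
Small $n$ can be handled by the constant-locality code of Theorem~\ref{thm:constantARLDC} after adjusting constants, or simply absorbed.

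The main obstacle is making the $1+o(1)$ overhead of Step 1 genuinely at most $1+\nu$. Three effects must be bounded together: the boundary blocks that the requested interval only partially covers, the amplification factor from repeating testers, and the ratio of the query count $\ell$ to $B$. The plan is to fold the $1/\nu$ factor into $\kappa$ as stated. If that is too tight, I will fall back on using slack in $c$, taking $c$ slightly smaller than $3(1+\nu)/5$ while staying above $(1+\nu')\cdot 3/5$ for some $\nu' < \nu$.
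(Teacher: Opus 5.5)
Your proposal is correct and is essentially the paper's proof. The paper applies Theorem~\ref{thm:approachingOneARLDC} with $c_1 = 1 - \frac{3(1+\nu)}{5} - \Rate{}(\delta)$, which amounts (up to an $o(1)$ term) to your filtering choice $c = \frac{3(1+\nu)}{5}$. That gives locality $\frac{1+\nu}{c}$ and rate $\frac{1 - \frac{3(1+\nu)}{5} - \Rate{}(\delta)}{1 - \frac{3(1+\nu)}{5}} = \Omega(1)$. The only difference is where the $o(1)$ slack goes. The paper lets it fall into the locality, obtaining $5/3 + o(1)$, while your Step~2 puts it into the rate, which matches the stated $\alpha = 5/3$ more faithfully.

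One bookkeeping caution for your Step~1. Since $\eps$ is a constant, put $\log(1/\eps)$ into $B$ and take $h = K/\nu$ for a large enough constant $K$; this keeps $\ell/B = 1 + o(1)$. Do not use $h = \log(1/\eps)/\nu$: for $\eps$ close to $1$ it can be below $1$, or leave a boundary overhead $2/h$ larger than $\nu$.
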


%% file: theorems/approach1.tex
\begin{restatable*}[\realdc{} with Amortized Locality Approaching $1$]{corollary}{approach} \label{corr:realdcapproach1}
    Let $\Rate{}(\delta) = \delta \times 2^{c_0 \log^3(\log 1/\delta)}$ for some universal constant $c_0 \in \N.$ For any $n = 2^r$, $r \in \N, \eps < 1,$ $\nu > 0,$ and $\delta \in (0,1)$ such that $\Rate{}(\delta) = o(1),$  there exists an (explicit) $(n,k)$ code $\calC$ that is a $\left(\alpha(\delta,\nu), \kappa(\nu) , \delta, \eps\right)$-$\realdc{}$, where $\alpha(\delta,\nu) = (1 + \nu) \times (1 + o(1))$, $\kappa(\nu) = \frac{ \log(1/\eps)\times(\log n)^2\times 2^{(\log\log\log n)^4}}{\nu}$ and rate $1 - \rate{\realdc{}} \geq 1 - o(1)$. 
\end{restatable*}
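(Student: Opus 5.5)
We obtain Corollary~\ref{corr:realdcapproach1} by running the same pipeline as in Corollary~\ref{corr:realdclessthan2}, but now letting $\delta$ shrink so that $\Rate{}(\delta)=o(1)$, and taking the bad-block filtering parameter $c$ close to $1$.

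\textbf{Step 1: build the \FBDC{}.} Apply Proposition~\ref{prop:recursiveExpander} with the explicit expanders of Theorem~\ref{thm:explicitExpander}, followed by the top-level amplification of Lemma~\ref{lem:topLevelAmp} with error tolerance $\delta$. This gives an $(\ell,B,\delta,\delta_B,\eps_0)$-\FBDC{} with the following parameters:
\begin{itemize}
\item $\ell=(\log n)^2 2^{O((\log\log\log n)^3)}$ queries per block;
\item block error tolerance $\delta_B=\Omega(1/(\log n\log\log n))$;
\item rate at least $1-\Rate{}(\delta)-o(1)$.
\end{itemize}
To drive the soundness error down to $\eps$, repeat each \VLTC{} test $O(\log(1/\eps))$ times. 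The per-block query count becomes $\ell'=O(\log(1/\eps))\,\ell$.

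\textbf{Step 2: choose the block size.} Set $B=\log(1/\eps)(\log n)^2 2^{(\log\log\log n)^4}$. Then $\ell'/B=o(1)$, since $2^{O((\log\log\log n)^3)}=o(2^{(\log\log\log n)^4})$.

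\textbf{Step 3: pass to an \realcc{}.} Apply Theorem~\ref{thm:FBDCtoARLCC} with the high-rate block code $\calC_{\msf{Block}}$ of Proposition~\ref{prop:boostedInner}, in place of the Justesen code. This code tolerates an $\Omega(\log\log n/\log n)\ge\delta_B$ fraction of errors and has rate $1-o(1)$, so the resulting \realcc{} has rate $1-\Rate{}(\delta)-o(1)=1-o(1)$.

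Take $\kappa=B/\nu$. A request of size $|S|\ge\kappa$ overlaps at most $|S|/B+2$ blocks. The total number of queries is therefore at most $(|S|/B+2)(B+\ell')$. Dividing by $|S|$ gives amortized locality
\[
\alpha'\le (1+2B/|S|)(1+o(1))\le(1+2\nu)(1+o(1)).
\]
Rescaling $\nu$ by a constant factor turns this into $(1+\nu)(1+o(1))$.

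\textbf{Step 4: pass to an \realdc{}.} Apply Corollary~\ref{corr:arLCCToArLDC} with $c=1-\gamma$, where $\gamma=o(1)$ is chosen with $\Rate{\realcc{}}=o(\gamma)$. For instance, take $\gamma=\sqrt{\Rate{\realcc{}}}$, or $\gamma=1/\log\log n$ as in the overview.
\begin{itemize}
\item The amortized locality is multiplied by $1/c=1+o(1)$, giving $(1+\nu)(1+o(1))$.
\item The rate is at least $(1-\Rate{\realcc{}}-c)/(1-c)=1-\Rate{\realcc{}}/\gamma=1-o(1)$.
\item Error tolerance $\delta$ and soundness $\eps$ carry through unchanged.
\end{itemize}

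\textbf{Main obstacle.} The delicate point is the rate accounting in Step 4. The filtering bound loses a factor $1/(1-c)$, so we need $\Rate{\realcc{}}$ to be $o(1-c)$, while $1/c$ must still be $1+o(1)$.

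This requires every rate-loss term to be $o(1)$ simultaneously: $\Rate{}(\delta)$, the nesting loss, and the loss from $\calC_{\msf{Block}}$. It also requires checking that Theorem~\ref{thm:FBDCtoARLCC} still holds with $\delta$ sub-constant. In Lemma~\ref{lem:topLevelAmp} the repetition count $O(\delta/\delta')$ only decreases as $\delta$ shrinks, so this should be fine. With $\gamma$ chosen as the geometric mean of $\Rate{\realcc{}}$ and $1$, both requirements are met.
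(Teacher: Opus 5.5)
Your proposal is correct and follows the paper's route: it is the pipeline of Theorem~\ref{thm:approachingOneARLDC} (nested-expander \FBDC{}, top-level amplification, the high-rate block code of Proposition~\ref{prop:boostedInner}, $\kappa=\Theta(B/\nu)$, then bad-block filtering) with filtering parameter $c=1-c_2$ for some $c_2=o(1)$. Your explicit requirement $\Rate{\realcc{}}=o(\gamma)$, met by $\gamma=\sqrt{\Rate{\realcc{}}}$, is in fact more careful than the paper's stated condition $c_2>\Rate{}(\delta)$, which by itself does not force the rate to $1-o(1)$. The one slip is your alternative $\gamma=1/\log\log n$, which does not work. The nesting loss $(s-1)/L$ with $L=\log n\log\log n$ is already $\Theta(1/\log\log n)$, and the block code adds a comparable term. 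So $\Rate{\realcc{}}=o(1/\log\log n)$ fails, and that choice need not even satisfy $c\le 1-\Rate{\realcc{}}$.
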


%% file: sections/preliminaries.tex
In this section, we introduce our notation and any preliminaries from prior work. 
Specifically, we discuss expanders, vicinity locally testable codes, and nested codes.

Let $\N = \{1,2,\dots\}.$ 
Let $\Sigma$ denote the alphabet.
In this work we assume the alphabet is binary, i.e., $\Sigma = \{0,1\}$. 
All words $\bw \in \Sigma^n$ are emphasized in bold lettering. 
The character of word $\bw$ at index $i$ is denoted as $\bw[i].$ 
Generally, for any (ordered) subset of indices $S \subseteq [n], \bw[S] = [\bw[i] : i \in S ].$ 
For words $\bw,\by \in \Sigma^n$, define their relative distance as $\reldist(\bw,\by) := d(\bw,\by) / n.$
Further, for a $(n,k)$ code $\calC,$ define the closest relative distance from word $\bw$ to any codeword in $\calC$ as $\reldist(\bw,\calC) := \min_{\bx \in \Sigma^k} \reldist(\bw,\Enc(\bx)).$
Let $(\bw_1 \circ \bw_2)$ be the usual concatenation of words/vectors.
Define code/vector-space concatenation, where $\calC_1 \circ \calC_2 = \{\by_1 \circ \by_2 \mid \by_1 \in \calC_1, \by_2 \in \calC_2\}.$
Lastly, for a code $\calC,$ we define code exponentiation for any integer $r \geq 1$ to be $\calC^r = \underbrace{\calC \circ \calC \circ \dots \circ \calC}_{r \text{ times}}$.

First, we recall the definition of a linear code.
\begin{definition}[Linear Code]
    A linear $(n,k)$ code (over alphabet $\Sigma$) is a subspace $\calC \subseteq \Sigma^n$ of dimension $k$.
    The linear code $\calC$ has (global) relative distance $\delta$ if for all $\by_1 \neq \by_2 \in \calC,$ $\reldist(\by_1, \by_2) \geq \delta.$
\end{definition}
We say that a $(n,k)$ code has rate $R = (k/n)$.
We call vectors $\by \in \calC$ the {\em codewords of code $\calC$}, or simply {\em codewords} when the code in context is clear.
Furthermore, since all the codes in this paper are linear, we will any linear code $\calC$ as just code $\calC$, for brevity.

A $(n,k)$ code $\calC \subseteq \Sigma^n$ is equivalently defined by a (linear and injective) encoding map $\Enc{} : \Sigma^k \rightarrow \Sigma^n$. 
Then, for any codeword $\by = \Enc(\bx)$, where $\bx \in \Sigma^k$, we say that $\bx$ is the message of codeword $\by.$

In this work, we focus on a variant of relaxed locally decodable codes (\rldc{}s), which are codes defined by an additional (relaxed) decoding procedure $\Dec.$
For \rldc{}s, the decoding procedure $\Dec^{\bw}(i)$, on input index $i \in [k]$, make at most $\ell$ queries to a (possibly corrupted) codeword $\bw$ with at most a $\delta$ fraction of errors, and outputs either (1) the message symbol $\bx[i]$ or (2) $\bot$ with high probability $1 - \eps.$
We call such a code a $(\ell,\delta,\eps)$-\rldc{}. 
Another natural variant of \rldc{}s are known as relaxed locally {\em correctable} codes (\msf{rLCC}s). 
Instead of decoding message symbols, \msf{rLCC}s decode codeword symbols.

\begin{remark}
    In certain cases, it will be convenient to view code $\calC$ as its subset definition $\calC \subseteq \Sigma^n$ (e.g., reasoning about nested codes) and in other cases, it will be more convenient to view code $\calC$ as its algorithmic definition as a tuple of encoding and decoding algorithms $(\Enc, \Dec)$.
In this work, we overload the definition of code $\calC$ as its subset definition $\calC \subseteq \Sigma^n$ and its algorithmic definition as a tuple of encoding and decoding algorithms $\calC = (\Enc, \Dec)$, or, in the case of locally testable codes, $\calC = (\Enc,\Test)$. 
\end{remark}

We introduce a variant on relaxed locally decodable codes called {\em amortized} relaxed locally decodable codes (\realdc{}s).
Instead of the decoder $\Dec$ receiving input index $i \in [k]$ and outputting a single message symbol $\bx[i]$ (or outputs $\bot$), the relaxed decoder instead receives input {consecutive subset} $[L,R]$ and outputs the corresponding message symbols $\bx[L,R]$ (or outputs $\bot).$
\begin{definition}[\realdc{}]\label{def:realdc}
    A $(n,k)$ code $\calC = (\Enc,\Dec)$ (over $\Sigma$) is a $(\alpha, \kappa, \delta,\eps)$-amortized relaxed locally decodable code (\reaLDC) if 
    \begin{enumerate}
        \item (Completeness) For all $\bx \in \Sigma^k$, and all intervals $[L,R] \subseteq [k]$ of size $R-L+1 \geq \kappa$,
        \[\Pr\lbrack\msf \Dec^{\Enc(\bx)}(L,R) = \bx[L,R] \rbrack = 1,\]
        \item (Soundness) For all $\bx \in \Sigma^k$, all intervals $[L,R] \subseteq [k]$ of size $R-L+1 \geq \kappa$, and all words $\bw \in \Sigma^n$ such that $\reldist(\bw,\Enc(\bx)) \leq \delta$,
        \[\Pr\lbrack\msf \Dec^{\bw}(L,R) \in \left\{\bx[L,R], \bot\right\} \rbrack \geq 1 - \eps,\]
        \item (Locality) For all $\bw \in \Sigma^n$ and all intervals $[L,R] \subseteq [k]$ of size $R-L+1 \geq \kappa$, $\Dec^{\bw}(L,R)$ makes at most $\alpha \times (R - L + 1)$ queries to $\bw.$
    \end{enumerate}
\end{definition}
We remark that the most general amortized locally decodable codes (\aLDC{}) definition by Blocki and Zhang \cite{ISIT:blozha25} considered the decoding of a more general subset $Q \subseteq [k]$ of size $|Q| \geq \kappa$ rather than the decoding of a consecutive interval
$[L,R].$
However, all prior ideal \aLDC{} constructions (for Hamming and ins-del errors) require the decoding of intervals instead of arbitrary subsets \cite{ITC:bloZha25,ISIT:blozha25}, indicating that consecutive interval decoding may be necessary for ideal parameters (constant amortized locality, constant rate, and constant error tolerance).
Similar to the non-relaxed \aldc{} setting, we leave the construction of non-consecutive decoding \realdc{}s as an open question for future research.

We additionally remark that non-trivial relaxed decoding is reliant on the completeness property.
In this work, we consider {\em perfect} completeness, i.e., any (non-corrupted) codeword $\by = \Enc(\bx)$ is always locally decoded. 

 As a stepping stone to \realdc{}s, we construct a variant of \realdc{}s that recovers codeword symbols instead of message symbols. 
Following naming conventions of prior work, we call these codes relaxed amortized locally {\em correctable} codes (\msf{aRLCC}s). 

\begin{definition}[\realcc{}]\label{def:realcc}
    A $(n,k)$ code $\calC = (\Enc,\Dec)$ (over $\Sigma$) is a $(\alpha, \kappa, \delta,\eps)$-amortized relaxed locally correctable code (\realcc{}) if
    \begin{enumerate}
        \item (Completeness) For all $\bx \in \Sigma^k$, and all intervals $[L,R] \subseteq [n]$ of size $R-L+1 \geq \kappa$,
        \[\Pr\lbrack\msf \Dec^{\Enc(\bx)}(L,R) = \by[L,R] \rbrack = 1,\]
        \item (Soundness) For all $\bx \in \Sigma^k$, all intervals $[L,R] \subseteq [n]$ of size $R-L+1 \geq \kappa$, and all words $\bw \in \Sigma^n$ such that $\reldist(\bw,\Enc(\bx)) \leq \delta$,
        \[\Pr\lbrack\msf \Dec^{\bw}(L,R) \in \left\{(\Enc(\bx))[L,R], \bot\right\} \rbrack \geq 1 - \eps,\]
        \item (Locality) For all $\bw \in \Sigma^n$ and all intervals $[L,R] \subseteq [n]$ of size $R-L+1 \geq \kappa$, $\Dec^{\bw}(L,R)$ makes at most $\alpha \times (R - L + 1)$ queries to $\bw.$
    \end{enumerate}
\end{definition}

\

\subsection{Expander Codes and \vltc{}s}
We reintroduce prior results for expander and expander graphs. 
Let $G = (L,R,E)$ be an undirected bipartite graph. 
Let $\deg(v)$ be the degree of vertex $v \in L \cup R$ in $G$ and similarly define $\deg(S)$ for any subset $S \subseteq L \cup R.$

For any $v \in L \cup R,$ define $N(v) := \{u \mid (u,v) \in E\}$ to be the set of neighbors of $v$ in graph $G.$ 
Naturally, for any subset $S \subseteq L \cup R,$ define the set of neighbors of $S$ as $N(S) := \cup_{v \in S} N(v)$.
For any subset of left vertices $S \subset L$, define the {\em unique neighbor set} (with respect to the left vertices) as \[ N_u(S) := \{ y \in R \mid |N(y) \cap S| = 1\},\]
i.e., $N_u(S) \subset R$ are the right-neighbors of left-vertices $S \subseteq L$ that have exactly one edge connecting to a vertex in $S$.  

An expander graph is a bipartite graph with an expansion property: any subset of left-vertices $S \subseteq L$ with size at most $\delta |L|$ for some $\delta \in (0,1)$ has a large expanding neighbor set with size $\zeta N(S)$ for some $\zeta \in (0,1).$
Similarly, one can define a unique expander graph by replacing the neighbor set $N(S)$ with the  unique neighbor sets $N_u(S)$ within the prior expansion property.
\begin{definition}[Unique Expander Graph]
    A bipartite graph $G = (L,R,E)$ is a $(d,\delta,\zeta)$ unique expander if (1) it is $d$-left regular (for all $v \in L, \deg(v) = d$), and (2) for every $S \subset L$ of size $|S| \leq \delta |L|,$ it holds that $|N_u(S)| \geq \zeta d|S|.$
\end{definition}
Explicit unique expander graphs follow from the lossless expander construction by Capalbo, Reingold, Vadhan, and Wigderson in \cite{STOC:CRVW02}.
\begin{restatable}[\protect{\cite[Theorem~7.3]{STOC:CRVW02}}]{theorem}{explicitExpander}\label{thm:explicitExpander}
    There exists universal constants $c_0 \geq 1$ and $\beta \leq 1$ such that for all $n \in \N$ and $m \leq n,$ there exists an explicit $(d,\delta,\zeta)$ unique expander $G = (L,R,E)$, where 
    \begin{itemize}
        \item $|L| = 2^n$, $|R| = 2^m$
        \item $d \leq 2^{c_0 \log^3(n - m)}$, $\delta = \beta \times \left( 2^{m - n} / d\right),$ and $\zeta = \Omega(1).$
    \end{itemize}
\end{restatable}
Expander graphs can be used to define codes by viewing the right-vertices as parity-checks/constraints on the codeword symbols labeled by the left-vertices.
We formally define expander graph-based codes, or {\em expander codes}.
\begin{definition}[Expander Code]\label{def:expander}
    Let $G = (L,R,E)$ be a bipartite graph. 
    An expander code $\calC$ associated with bipartite graph $G$ is defined as
    \[\calC = \calC(G) := \left\{ \by \in \Sigma^{|L|} \  \middle | \ \forall v \in R. \sum_{u \in N(v) } \by[u] = 0 \right\}.\]
\end{definition}
Alternatively, we can define the encoding scheme for an expander code by encoding each parity check as a row in a matrix $H$ and define the generator matrix as the inverse of $H$. 
We formally define a procedure called $\msf{GraphToCode}$, which takes an expander graph $G$ as input and outputs the encoding procedure $\Enc{}.$
Note that this is defined separately to $\Enc{}$ to emphasize that the encoding scheme can be precomputed prior to any message input.
\begin{betterFrame}{$\msf{GraphToCode}(G = (L,R,E))$}
\textbf{Input:} Bipartite graph $G = (L,R,E)$.

\textbf{Output:} Encoding procedure $\Enc_G : \Sigma^k \rightarrow \Sigma^n$.
\begin{enumerate}
    \item Interpret vertices $L = \{u_1,\dots,u_n\}$ and $R = \{v_1,\dots,v_m\}$.
    \item For each $v_j \in R:$
    \begin{enumerate}
        \item Compute row vector $\vec{h}_j \in \Sigma^n$ such that for each entry $i \in [n],$ $\vec h_{j,i} = \begin{cases}
            1 & \text{if $u_i \in N(v_j)$;} \\
            0 & \text{otherwise.}
        \end{cases}$
        \item Compute matrix $H = \m{\mathrel{\vcenter{\hbox{\rule{.5cm}{.75pt}}}} \ \vec h_1 \ \mathrel{\vcenter{\hbox{\rule{.5cm}{.75pt}}}}\\ \mathrel{\vcenter{\hbox{\rule{.5cm}{.75pt}}}} \ \vec h_2 \ \mathrel{\vcenter{\hbox{\rule{.5cm}{.75pt}}}} \\   \vdots \\ \mathrel{\vcenter{\hbox{\rule{.5cm}{.75pt}}}} \ \vec h_m \ \mathrel{\vcenter{\hbox{\rule{.5cm}{.75pt}}}}} \in \Sigma^{m \times n}$.
        \item Compute generator matrix $C \in \Sigma^{k\times n}$ such that $H C^T = 0$.
        \item Output procedure $\Enc_G : \Sigma^k \rightarrow \Sigma^n$ defined as $\Enc_G(\bx) := \bx C.$
    \end{enumerate}
\end{enumerate}
    
\end{betterFrame}

\paragraph{Expander Codes are Vicinity Locally Testable} A related type of code is known as a locally testable code.
These are codes where there is a local testing algorithm $\Test{}$, which given any corrupted word $\bw,$ is able to detect error with probability proportional to the number of errors.
A crucial observation in prior work \cite{CCC:CohYan24} was that although expander codes are not fully locally testable, they are locally testable within the vicinity of their distance. 
That is, expander codes have a local testing algorithm for words $\bw$ whose distance from any codeword $\by$ (i.e., number of corruptions) is at most a predefined upper bound $\delta$ (traditional \LTC{}s do not have this distance constraint).
\begin{definition}[\vltc{} \cite{CCC:CohYan24}]\label{def:vltc}
    A $(n,k)$ code $\calC = (\Enc,\Test)$ is a $(\ell,\delta,\rho,\sigma)$-\vltc{} if:
    \begin{itemize}
        \item (Completeness) For any $\bx \in \Sigma^k$, $\Test^{\Enc(\bx)} = \Accept$ with probability $1$;
        \item (Soundness) For any $\bx \in \Sigma^k$, $\by = \Enc(\bx)$, and $\bw \in \Sigma^n$ such that $\reldist(\by,\bw) \leq \delta,$
        \[\Pr\left[\Test^{\bw} = \bot\right] \geq \rho  \times \reldist(\by,\bw) - \sigma;\]
        \item (Locality) Given oracle access to any word $\bw \in \Sigma^n,$ $\Test^{\bw}$ always makes at most $\ell$ queries to $\bw$.
    \end{itemize}
\end{definition}
Unique expander graphs give rise to nontrivial \vltc{}s, i.e., the regime where $\sigma <  \rho \delta$.

\begin{restatable}[Unique Expanders are vicinity-testable \cite{CCC:CohYan24}]{lemma}{vtester}\label{lem:vtester}
    Suppose that $G = (L,R,E)$ is a $(d,\delta,\zeta)$-unique expander with average right degree $\bar c.$
    Let $\Enc = \Enc_G := \msf{GraphToCode}(G)$.
    Then, for every $\eta > 1,$ there exists a randomized procedure $\Test_\eta = \Test_{\eta,G}$ such that
    \begin{itemize}
        \item For every $\bx \in \Sigma^k, \by = \Enc(\bx),$ and $\bw \in \Sigma^n$ such that $\reldist(\by,\bw) \leq \delta,$
        \[\Pr\left[\Test_{\eta}^{\bw} = \bot\right] \geq \zeta \bar c  \times \reldist(\by,\bw) - \frac{1}{\eta};\]
        \item  For every $\bx \in \Sigma^k$, $\Test_{\eta}^{\Enc(\bx)} = \Accept$ with probability $1$;
        \item Given oracle access to any word $\bw \in \Sigma^n,$ $\Test_{\eta}^{\bw}$ always makes at most $(b \bar c)$ queries to $\bw$.
    \end{itemize}
    In particular, for any $\eta > 1,$ code $\calC = (\Enc_G,\Test_{b,G})$ is a $(\ell = \eta\bar c, \delta, \rho = \zeta \bar c,\sigma =  1/ \eta)$-\vltc{}.
\end{restatable}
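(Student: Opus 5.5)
The plan is to use the canonical ``pick a random parity check'' tester, with one truncation step so that the query bound holds on every run and not only on average. Recall that $\Enc_G=\msf{GraphToCode}(G)$ generates exactly the code $\calC(G)$ of Definition~\ref{def:expander}. So every codeword $\by$ satisfies $\sum_{u\in N(v)}\by[u]=0$ for every right vertex $v\in R$. Since $G$ is $d$-left regular, $|E| = d|L| = \bar c |R|$, hence $\bar c = d|L|/|R|$. I read the query bound in the statement as $\eta\bar c$ (the ``$b$'' appears to be a typo for $\eta$); this matches the final ``in particular'' clause.

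\textbf{The tester.} $\Test_\eta^{\bw}$ samples $v\in R$ uniformly at random. If $\deg(v) > \eta\bar c$, it outputs $\Accept$ without making any queries. Otherwise it queries $\bw[u]$ for all $u\in N(v)$ and outputs $\bot$ if and only if $\sum_{u\in N(v)}\bw[u]\neq 0$.

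\textbf{Locality and completeness.} Locality is immediate, since the tester queries at most $\deg(v)\le \eta\bar c$ positions. Completeness is also immediate: on $\Enc(\bx)\in\calC(G)$ every parity check holds, so the tester always accepts.

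\textbf{Soundness.} Let $S=\{u: \bw[u]\neq\by[u]\}\subseteq L$, so $|S| = \reldist(\by,\bw)\,|L| \le \delta|L|$.
\begin{itemize}
\item Take any $v\in N_u(S)$. Exactly one neighbor of $v$ is corrupted, so $\sum_{u\in N(v)}\bw[u] = \sum_{u\in N(v)}\by[u] + 1 = 1$, and the check at $v$ is violated.
\item By unique expansion, $|N_u(S)|\ge \zeta d|S|$. Hence a uniformly random $v$ lands in $N_u(S)$ with probability at least $\zeta d|S|/|R| = \zeta\bar c\,|S|/|L| = \zeta\bar c\cdot\reldist(\by,\bw)$.
\item The only way such a $v$ fails to cause rejection is if it was skipped for having high degree. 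By Markov's inequality on the right degrees (whose average is $\bar c$), the set $H=\{v\in R: \deg(v)>\eta\bar c\}$ satisfies $|H| < |R|/\eta$.
\end{itemize}
Therefore
\[
\Pr[\Test_\eta^{\bw}=\bot] \ \ge\ \Pr[v\in N_u(S)] - \Pr[v\in H] \ \ge\ \zeta\bar c\cdot\reldist(\by,\bw) - \frac1\eta .
\]
This gives the $(\eta\bar c,\delta,\zeta\bar c,1/\eta)$-\vltc{} parameters.

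\textbf{Main obstacle.} The only delicate point is that right degrees need not be bounded. Without the truncation, the query bound would hold only in expectation. The Markov cutoff is what converts the average-degree bound into a worst-case bound at an additive soundness cost of $1/\eta$, and that cost is exactly the $\sigma$ term in the statement.
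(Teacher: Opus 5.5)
Your proof is correct and matches the paper's argument: test a random parity check, cut off high right degrees using Markov's inequality, and get soundness from $|N_u(S)|\ge\zeta d|S|$ minus the fewer than $|R|/\eta$ high-degree checks. The only cosmetic difference is that the paper samples $v$ uniformly from the low-degree set $R'=\{v:\deg(v)\le\eta\bar c\}$, whereas you sample from all of $R$ and accept automatically on a high-degree vertex; both give the same bound $(|N_u(S)|-|R\setminus R'|)/|R|$.
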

\begin{proof}
Construct a tester $\Test_{\eta,G}$ as follows:
    \begin{betterFrame}{$\Test_{\eta,G}^{\bw}$}
    
    \textbf{Parameters}: Integer $\eta > 1$ and bipartite $G = (L,R,E).$
    
    Let $R' = \left \{v  \in R \middle | \deg (v) \leq \eta \bar c\right\}.$
        \begin{enumerate}
            \item Sample $v \in R'$ uniformly at random.
            \item Query $\bw$ at the indices $N(v) \subset L$, i.e., for all $u \in N(v),$ query $\bw[u]$.
            \item Output $\Accept$ if $\sum_{u \in N(v)}\bw[u] = 0$ and $\bot$ otherwise.
        \end{enumerate}
    \end{betterFrame}
     First, $|R'| \geq (1 - 1/\eta) |R|$ since otherwise, there are more than $(1/\eta) |R|$ right-vertices that have degree greater than  $\eta\bar c.$
    We sample a single right-vertex $v \in R',$ so the number of queries made in step~$2$ is $\eta \bar c.$
    Further, when $\bw = \Enc(\bx)$ for some $\bx \in \Sigma^k$ the tester \Test{} always succeeds by construction of $\Enc$.

    Let $\bw \in \Sigma^{|L|}$ be a word such that $\reldist(\by,\tilde{\by}) \leq \delta$ for some codeword $\by.$ 
    Let $S = \{v \in L \mid \bw[v] \neq \by[v]\}.$
    By the unique expansion property, since $|S| \leq \delta |L|,$ we have $|N_u(S)| \geq \zeta d |S|.$
    If the vertex $v$ sampled in step $1$ is in $N_u(S),$ then $\Test{}$ outputs $\bot.$ 
    The probability that $v \in N_u(S)$ is at least
    \begin{align*}
        \frac{|N_u(S)| - |R / R'|}{|R|} &\geq \frac{\zeta d |S| - (1/\eta)|R|}{|R|}\\ 
                                        &\geq \zeta \times \frac{d |S|}{|R|} - \frac 1 \eta\\
                                        &\geq \zeta \bar{c} \times \reldist(\bw,\by) - \frac{1}{\eta}. \qedhere
    \end{align*}

\end{proof}
\subsection{Nested Codes} 
We reintroduce the nesting operation introduced by Kumar and Mon \cite{STOC:KumMon24} which was used to recursively construct relaxed locally decodable codes from locally testable codes.
\begin{definition}[Nested Code \cite{STOC:KumMon24}]\label{def:nested}
    Let $\calC \subseteq \Sigma^N$ be an $(N,K)$ code and $\calC' \subseteq \Sigma^n$ be a $(n,k)$ code, where $N = bn$ for some $b \in \N$.
    The nesting operation $\divideontimes : 2^{\Sigma^N } \times 2^{\Sigma^n} \rightarrow 2^{\Sigma^N}$ is defined as
    \[ \calC \divideontimes \calC' := \calC \cap \left((\calC')^{\lfloor N / 2 \rfloor} \times \Sigma^{N - \lfloor N / n \rfloor \times n}\right) \cap \left(\Sigma^{N - n} \times \calC'\right).\]
    We say code $\calC'' = \calC \divideontimes \calC'$ is a nested code (of $\calC$ and $\calC'$).
\end{definition}

\begin{remark}
Our constructions are based off the nested expander code construction of Cohen and Yankovitz \cite{CCC:CohYan24}, where they use expander codes whose code length double, i.e., $(N,K)$ code $\calC$ is nested with $(n,k)$ code $\calC'$ where $N = 2n.$
In general, when $N = bn$ for some $b \in \N,$ the nested code definition simplifies to
\[\calC \divideontimes \calC' = \calC \cap (\underbrace{\calC' \circ \dots \circ \calC'}_{b \text{ times}}).\]
\end{remark}
\begin{lemma}[Nested Codes: Rate and Distance \cite{STOC:KumMon24}]\label{lem:nestingParams}
    Let $\calC \subseteq \Sigma^N$ be a code with rate $1 - \Rate{}$ and distance $\delta$. 
    Let $\calC' \subseteq \Sigma^n$ be a code with rate $1 - \Rate{}'$ and distance $\delta'$ such that $N = bn$ for some $b \in \N$. Then, $\calC \divideontimes \calC'$ is a nested code with rate at least ${1 - R - R'}$ and (global) relative distance $\max\{\delta,\delta' / b\}.$
\end{lemma}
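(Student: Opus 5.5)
The plan is to work entirely with the simplified form of the nesting operation from the remark following Definition~\ref{def:nested}. When $N = bn$, it gives
\[\calC \divideontimes \calC' = \calC \cap (\calC')^b,\]
where $(\calC')^b = \calC' \circ \dots \circ \calC'$ ($b$ times) is a subspace of $\Sigma^N$. Both $\calC$ and $(\calC')^b$ are linear subspaces of $\Sigma^N$, so $\calC \divideontimes \calC'$ is linear. The rate bound and the distance bound then follow from two elementary linear-algebra facts, which I will handle separately.

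\textbf{Rate.} I will use the dimension inequality for subspaces $U, V \subseteq \Sigma^N$:
\[\dim(U \cap V) \ge \dim U + \dim V - N,\]
which follows from $\dim(U+V) \le N$. Here $\dim \calC = (1-\Rate{})N$. Since $(\calC')^b$ is a direct sum of $b$ copies of $\calC'$ on disjoint coordinate blocks, $\dim (\calC')^b = b(1-\Rate{}')n = (1-\Rate{}')N$. Substituting gives
\[\dim(\calC \divideontimes \calC') \ge (1-\Rate{})N + (1-\Rate{}')N - N = (1 - \Rate{} - \Rate{}')N,\]
so the nested code has rate at least $1 - \Rate{} - \Rate{}'$.

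\textbf{Distance.} Since the nested code is linear, its relative distance equals the minimum relative Hamming weight of a nonzero codeword. Take any nonzero $\by \in \calC \divideontimes \calC'$.
\begin{itemize}
\item Because $\by \in \calC$, its weight is at least $\delta N$.
\item Because $\by \in (\calC')^b$, we can write $\by = \by_1 \circ \dots \circ \by_b$ with each $\by_i \in \calC'$. At least one block $\by_i$ is nonzero, so that block alone has weight at least $\delta' n = (\delta'/b)N$.
\end{itemize}
Both lower bounds hold at once, so the relative weight of $\by$ is at least $\max\{\delta, \delta'/b\}$, and hence so is the relative distance of $\calC \divideontimes \calC'$. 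The statement asserts this as the (global) relative distance, which I read as this lower bound.

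\textbf{Main obstacle.} There is no real technical difficulty. The only care needed is in reconciling the general Definition~\ref{def:nested}, whose floor expressions (e.g.\ $\lfloor N/2\rfloor$) look like a typo for $\lfloor N/n\rfloor$, with the clean intersection form. When $n \mid N$, the padding factor $\Sigma^{N - \lfloor N/n\rfloor n}$ is trivial, and the last-block constraint $\Sigma^{N-n}\times\calC'$ is implied by $(\calC')^b$. The definition therefore collapses to $\calC \cap (\calC')^b$, and I will state this reduction explicitly before running the two arguments above.
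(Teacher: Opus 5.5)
Your proof is correct. The paper does not prove this lemma itself (it cites Kumar--Mon), but its technical overview justifies the rate bound by counting the $\Rate{}N$ parity checks of $\calC$ plus the $b\cdot \Rate{}'n=\Rate{}'N$ checks from the $b$ copies of $\calC'$, which is the dual form of your inequality $\dim(U\cap V)\ge \dim U+\dim V-N$; your subcode argument for the distance is the standard one, and reading ``relative distance $\max\{\delta,\delta'/b\}$'' as a lower bound agrees with the paper's definition of a linear code having relative distance $\delta$, just as your reduction of Definition~\ref{def:nested} to $\calC\cap(\calC')^b$ (treating $\lfloor N/2\rfloor$ as a typo for $\lfloor N/n\rfloor$) agrees with the remark following that definition.
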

We additionally define recursively-nested codes, which apply the nesting operation recursively on a sequence of codes $\{\calC_j\}_{j \in \N}.$
\begin{definition}[Recursively-Nested Code]\label{def:recNested}
Code $\calC$ is a $s$-recursively-nested code under code sequence $\{\calC_i\}_{i = 0,1,2,\dots}$ if (1) $s = 0$ and $\calC = \calC_0$ or (2) for $s \geq 1$, $\calC = \calC_s \divideontimes \calC'$ where $\calC'$ is a $(s-1)$-recursively-nested code (under the same code sequence).
Then, we denote an $s$-recursively-nested code $\calC$ as \[\calC = \mathlarger{\mathlarger{\divideontimes}}^s_{j=1} \calC_j.\]
\end{definition}
    Intuitively, a nested code $\calC \divideontimes \calC'$ corresponds to a $1$-recursively-nested code (under a code sequence $\{\calC_j\}$ where $\calC_0 = \calC'$ and $\calC_1 = \calC$). 
    Naturally, the rate and distance of (single) nested codes in Lemma~\ref{lem:nestingParams} can be extended to recursively-nested codes. 
\begin{corollary}[Recursively Nested Codes: Rate and Distance  \cite{STOC:KumMon24}]\label{corr:nestedRate}
    Let $\{\calC_j\}_{j \geq 0}$ be a sequence of $(n_i,k_i)$ codes with corresponding rates $\{1 - \Rate{j}\}_{j \geq 0}$, relative distances $\{\delta_j\}_{j \geq 0}$, and codeword lengths $\{n_j\}_{j \geq 0}$ such that for some $b \in \N$, $n_{j + 1} = bn_j$ for all $j \geq 0$.
    Then, for all $s \geq 0$, the $s$-recursively nested code of code sequence $\{\calC_j\}_{j \geq 0}$ has rate at least $1 - \sum_{j = 0}^r \Rate{j}$ and (global) relative distance $\max\limits_{0 \leq j \leq r } (\delta_j / b^{s-j}).$
\end{corollary}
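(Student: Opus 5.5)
We prove the corollary by induction on the nesting depth $s$, applying Lemma~\ref{lem:nestingParams} once per level. We read the index $r$ in the statement as the nesting depth $s$. So the claim is that the $s$-recursively nested code has rate at least $1-\sum_{j=0}^{s}\Rate{j}$ and relative distance at least $\max_{0\le j\le s}\delta_j/b^{s-j}$. We also record, by induction, that the $s$-recursively nested code lives in $\Sigma^{n_s}$. This holds because, by Definition~\ref{def:recNested}, the outermost code is $\calC_s$ and nesting preserves the length of the outer code (Definition~\ref{def:nested}).

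\textbf{Base case.} For $s=0$ the code is $\calC_0$ itself. It has rate $1-\Rate{0}$, relative distance $\delta_0=\delta_0/b^{0}$, and length $n_0$, so all three claims are immediate.

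\textbf{Inductive step.} For $s\ge 1$, write $\calC=\calC_s\divideontimes\calC'$, where $\calC'$ is the $(s-1)$-recursively nested code. By the inductive hypothesis, $\calC'\subseteq\Sigma^{n_{s-1}}$. It has rate $1-\Rate{}'$ with $\Rate{}'\le\sum_{j=0}^{s-1}\Rate{j}$, and relative distance $\delta'\ge\max_{0\le j\le s-1}\delta_j/b^{s-1-j}$. Since $n_s=b\,n_{s-1}$, the hypothesis of Lemma~\ref{lem:nestingParams} holds with $N=n_s$, $n=n_{s-1}$, and the same $b$. The lemma then gives:
\begin{itemize}
  \item rate at least $1-\Rate{s}-\Rate{}'\ge 1-\sum_{j=0}^{s}\Rate{j}$;
  \item relative distance at least $\max\{\delta_s,\delta'/b\}\ge\max\bigl\{\delta_s,\max_{j\le s-1}\delta_j/b^{s-j}\bigr\}=\max_{0\le j\le s}\delta_j/b^{s-j}$.
\end{itemize}
The length of $\calC$ is $n_s$, which closes the induction.

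\textbf{Main point of care.} The only step needing attention is that the inductive hypothesis supplies only \emph{bounds} on the parameters of $\calC'$, not their exact values. We therefore use that the guarantees of Lemma~\ref{lem:nestingParams} are monotone in those parameters. The lemma's rate bound $1-\Rate{}-\Rate{}'$ is decreasing in $\Rate{}'$, so substituting the upper bound on $\Rate{}'$ only weakens it. The distance bound $\max\{\delta,\delta'/b\}$ is increasing in $\delta'$, so substituting the lower bound on $\delta'$ only weakens it. The other thing to verify is the length bookkeeping just described, i.e., that $\calC'$ has length exactly $n_{s-1}$, so that the divisibility condition $N=bn$ required by the lemma holds at every level.
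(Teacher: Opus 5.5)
Your proposal is correct and matches the paper's intended argument. The paper gives no separate proof; it only remarks that the corollary follows by iterating Lemma~\ref{lem:nestingParams} over the nesting levels, which is exactly your induction, including the length bookkeeping $n_s = b\,n_{s-1}$ and reading the stray index $r$ as $s$. Your monotonicity remark is the right way to make that iteration rigorous: apply the lemma with the exact parameters of $\calC'$, then substitute the inductive bounds.
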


%% file: sections/fbdc.tex
We introduce the notion of fuzzy block decoding codes (\FBDC{}) to modify and analyze the \rlcc{} of Cohen and Yankovitz \cite{CCC:CohYan24} with respect to amortizing locality.
Essentially, we will need a code with relaxed local correctability of predefined codeword blocks of size $B,$ where we allow the `block decoder' to either output (1) the decoded blocks up to some fraction of errors $\delta_B$, or (2) $\bot$ on the detection of error.

\begin{definition}[\FBDC{}]\label{def:FBDC}
    A linear $(n,k)$ code $\calC = (\Enc,\BlockDec)$ (over $\Sigma$) is a $(\ell,B,\delta,\delta_B,\eps)$-\FBDC{} if $B \mid n$, and for all $\by \in \calC,$ where $\by = \by_1 \circ \dots \circ \by_{n / B}$ such that $|\by_j| = B$ for all $j \in [n / B]$,
      \begin{enumerate}

        \item (Completeness) For all $j \in [n / B],$ \[\Pr\lbrack\msf \BlockDec^{\by}(j) = \by_j \rbrack = 1,\] 
        \item (Soundness) For all words $\bw \in \Sigma^n$ such that $\reldist(\bw,\by) \leq \delta$ and all $j \in [n / B]$,
        \[\Pr\lbrack  \BlockDec^{\bw}(j) \in \calW_j \cup \{\bot\} \rbrack \geq 1 - \eps,\]
         where $\calW_j = \left\{\bz  \in \Sigma^{B} \mid \reldist(\bz,\by_j) \leq \delta_B \right\}$,
         \item (Verbatim) For any $\bw \in \Sigma^n$ and all $j \in [n / B],$ 
        \[\Pr\lbrack\msf \BlockDec^{\bw}(j) \in \{\bw_j,\bot\} \rbrack = 1,\] 

        \item (Locality) Given oracle access to any word $\bw \in \Sigma^*,$ $\BlockDec^{\bw}$ makes at most $\ell$ queries to $\bw.$

    \end{enumerate}
\end{definition}
We emphasize the distinction of the top-level error tolerance $\delta$ and block-level error tolerance $\delta_B.$ 
The soundness of the block decoder $\BlockDec{}^{\bw}(j)$ is guaranteed whenever the queried word $\bw$ has at most $\delta$ fraction of errors from some codeword $\by$, while the outputted symbols from the block decoder have at most $\delta_B$ fraction of errors from the true codeword block $\by_j$.
{Note that we require a `verbatim' property, where the block decoder $\BlockDec{}^{\bw}(j)$ either outputs the specific block $\bw_j$ with no modifications or outputs $\bot$. 
The verbatim property implies that once word $\bw$ and index $j$ are fixed, the output of $\BlockDec{}^{\bw}(j)$ is also deterministically equal to $\bw_j$,conditioned on $\BlockDec{}^{\bw}(j) \neq \bot$. 
In particular, whether a non-$\bot$ output of $\BlockDec^{\bw}(j)$ lies within the required block distance $\delta_B$ is determined prior to any decoder randomness.
This will be crucial in avoiding a union bound in the soundness error of the \realcc{} (Theorem~\ref{thm:FBDCtoARLCC}), since the \realcc{} decoder will call the \FBDC{} decoder multiple times.}

In the following proposition, we show how to construct an \FBDC{} from a family of (unique) expander graphs by recursively nesting their respective expander codes. 
The block decoding will follow from recursive application of the \vltc{} testers of the expander codes. 

\begin{restatable}[Expander to \FBDC{}]{proposition}{recursiveExpander} \label{prop:recursiveExpander}
    
    Let $\Rate{} \in (0,1)$. 
Let $\{G_i = (L_i,R_i,E_i)\}_{i \in \N}$ be a sequence of bipartite graphs such that:
\begin{enumerate}
    \item (Parameters) $|R_i| / |L_i| = \Rate{}$ and $|L_{i + 1}| = 2 |L_i|$ for all $i \in \N$,

    \item (Expanders) $\{G_i\}_{i \in \N}$ is a family of $(d,\delta,\zeta)$ unique expander graphs with average right degree $\bar c$.
\end{enumerate}

    Then, for any $s \in \N$, $n = 2^{s} \times |L_1|$, $g \in \N$, and any $\eta > 1,$ there exists a $(n,k)$ code $\calC_{\FBDC{}}$ that is a $(\ell,B,\delta,\delta_B ,\eps)$-\FBDC{} with 
    \begin{itemize}
        \item $\ell = \eta \bar cgs + |L_1| /2$,
        \item $B = |L_1| / 2$, 
        \item $\delta = \delta / 2$, 
        \item $\delta_B = 2\delta$,
        \item $\eps \leq \exp\left(g\left(\frac{1}{\eta} - \frac{\zeta \delta}{2} \right)\right)$, and
        \item rate $1 - \Rate{\FBDC{}} \geq 1 - s\Rate{}$.
    \end{itemize}
\end{restatable}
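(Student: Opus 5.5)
The code will be $\calC_{\FBDC{}} = \Sigma^{n} \divideontimes \calC_s \divideontimes \cdots \divideontimes \calC_1$, where $\calC_i = \calC(G_i)$ is the expander code of $G_i$. Since $|L_i| = 2^{i-1}|L_1|$, the code $\calC_i$ has length $n/2^{s-i+1}$. Consequently every codeword $\by$ splits, for each level $j\in[s]$, into $2^j$ consecutive blocks $\by^j_1\circ\dots\circ\by^j_{2^j}$ of length $n/2^j$, and every level-$j$ block is a codeword of $\calC_{s-j+1}$. The top level carries no constraint. The bottom blocks have length $|L_1|$, and $B=|L_1|/2$ is half of that.

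For the rate, each $\calC_i$ has at most $\Rate{}\,|L_i|$ parity checks. The nested code is the intersection of the constraint sets over all levels, and at each level the blocks together carry at most $\Rate{}n$ constraints. Summing over $s$ levels, as in Corollary~\ref{corr:nestedRate}, gives rate at least $1-s\Rate{}$. By Lemma~\ref{lem:vtester}, every block at every level is a $(\eta\bar c,\delta,\zeta\bar c,1/\eta)$-\VLTC{} via $\Test_{\eta,G}$.

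The decoder $\BlockDec^{\bw}(j)$ works as follows.
\begin{enumerate}
\item Compute the chain of blocks $\bw^1_{i_1}\supset\dots\supset\bw^s_{i_s}$ that contain the target half-block $\bw_j$.
\item For each level $t\in[s]$, run $\Test_\eta$ on $\bw^t_{i_t}$ a total of $g$ times, with independent randomness.
\item If any test rejects, output $\bot$. Otherwise, query the $B$ symbols of $\bw_j$ and output them.
\end{enumerate}
This makes $\eta\bar c g s + |L_1|/2$ queries. Completeness holds because each block of a codeword lies in its expander code, and the testers have perfect completeness. The verbatim property holds by construction.

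For soundness, take $\reldist(\bw,\by)\le\delta/2$. Let $e_t=\reldist(\bw^t_{i_t},\by^t_{i_t})$, and set $e_0=\reldist(\bw,\by)\le\delta/2$. Since a child block has half the length of its parent, $e_{t}\le 2e_{t-1}$. If $\reldist(\bw_j,\by_j)\le 2\delta$, any non-$\bot$ output is already in $\calW_j$, and we are done. Otherwise $\bw_j$ has more than $2\delta B=\delta|L_1|$ errors, so $e_s>\delta$.

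Let $t^*$ be the smallest level with $e_{t^*}>\delta/2$. It exists because $e_s>\delta$. Minimality gives $e_{t^*-1}\le\delta/2$, hence $e_{t^*}\le 2e_{t^*-1}\le\delta$. So $e_{t^*}\in(\delta/2,\delta]$, which lies inside the \VLTC{} soundness regime. By Lemma~\ref{lem:vtester}, one test on $\bw^{t^*}_{i_{t^*}}$ rejects with probability at least $p=\zeta\bar c\,\delta/2-1/\eta$.

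This averaging argument, in the style of Cohen–Yankovitz, is the step I expect to be the main obstacle. It must locate a single good level rather than union bound over all $s$ levels, and it has to handle the boundary at level $1$, where $e_1\le 2e_0\le\delta$ because $e_0\le\delta/2$.

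All $g$ runs at level $t^*$ accept with probability at most $(1-p)^g\le e^{-pg}$. Because $\bar c\ge1$, this is at most $\exp\bigl(g(1/\eta-\zeta\delta/2)\bigr)$, which is the claimed $\eps$. (If $p\le0$ the bound exceeds $1$ and holds trivially.) An output that is neither $\bot$ nor in $\calW_j$ requires every test to accept, so the soundness error is at most $\eps$, which completes the argument.
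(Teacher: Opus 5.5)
Your proposal is correct and follows the paper's proof essentially step for step. It uses the same nested expander code $\Sigma^n \divideontimes \calC_s \divideontimes \cdots \divideontimes \calC_1$, the same decoder that runs each level's tester $g$ times and then outputs the half-block verbatim, and the same soundness idea of finding a level whose error fraction lies in $[\delta/2,\delta]$ and amplifying only there. The remaining differences are cosmetic and slightly cleaner than the paper's write-up: you pick $t^*$ as the first level exceeding $\delta/2$ rather than a crossing at $\delta$, you use $\bar c \ge 1$ and $(1-p)^g \le e^{-pg}$ explicitly, and your query count $\eta\bar c g s + |L_1|/2$ matches the statement.
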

\begin{proof}
We construct the encoding procedure $\Enc_{\FBDC{}}$ as follows:
\begin{betterFrame}{$\Enc_{\FBDC{}}$ Construction}
    Let $\calP_1,\dots,\calP_s$ be a fixed sequence of partitions of $[n]$ such that for every $i = 1,\dots,s$, $\calP_i$ has $2^{i}$ equal-size parts denoted $\{\calB_1^i,\dots,\calB^i_{2^{i}}\}$, where $\calB_j^i = \{(j - 1) \times (n/2^{i}) + 1,\dots, j \times (n / 2^i)\}.$

Form expander graph $G = (L,R,E)$ as follows:
\begin{enumerate}
    \item Set the vertices,
    \[L \leftarrow [n] \text{ and } R \leftarrow \bigcup_{i \in [m]} R_i.\]
    (note that we choose the labeling of each right vertex to be unique, i.e., for each ${i \neq j \in [s], R_i \cap R_j = \emptyset}$).
    \item For each $i \in [s], \calB_j \in \calP_i,$ define function
    \[\msf{proj}_{i,j}(u) := (j - 1) \times (n / 2^i) + u,\] 
    and compute edges
    \[E_{i,j} = \left\{\left(\msf{proj}_{i,j}(u),v\right) \mid (u,v) \in E_{s - i + 1}\right\},\]
    Set edges, 
    \[E \leftarrow E  \cup E_{i,j}.\]
    \item Output $\Enc_{\FBDC{}} \leftarrow \msf{GraphToCode}(G = (L,R,E))$.
\end{enumerate}
\end{betterFrame}

Observe that the encoding procedure $\Enc_{\FBDC{}}$ described above directly corresponds to the $s$-recursively nested code $\calC_{\FBDC{}} = \divideontimes^{s}_{i = 0} \calC_i,$ where $\calC_i = \msf{GraphToCode}(G_i)$ for all $i \geq 1$ and $\calC_0 = \Sigma^n.$
To see this, let $\calC^{\perp}$ denote the dual code of $\calC$.
It is well known that (1) for any linear codes $\calC_1,\calC_2,\dots,\calC_s$ that \[\left(\bigcap_{i \in [s]} \calC_i \right)^\perp = \text{span}\left\{\bigcup_{i \in [s]}\calC_i^{\perp}  \right\},\]
and (2) if $\calC$ is linear, then $\calC^{\perp}$ is also linear with generator matrix corresponding to the parity check matrix of $\calC.$
Then, for our nested code, observe that $\divideontimes_{i = 1}^{s}\calC_i$ has dual code \[\left(\divideontimes_{i = 1}^{s}\calC_i\right)^{\perp} = \text{span}\left\{\bigcup_{i = 0}^{s}(\calC_i^{2^i})^\perp \right\},\]
i.e., the dual code is formed by taking constraints given by $\calC_i^\perp$ on every codeword block specified by partition $\calP_i$ for every $i \in [s].$
These are exactly the constraints added to the expander graph in the $\Enc_{\FBDC}$ construction.
Hence, $\Enc_{\FBDC}$ is an encoding procedure of code $\calC_{\FBDC}$ and by Corollary~\ref{corr:nestedRate}, code $\calC_{\FBDC}$ has rate $1- \Rate{} \geq 1 - s\Rate{}$.

We now construct the decoding procedure $\BlockDec$.
For each $i \in [s],$ let $\Test_i$ be the testing procedure of code $\calC_i$ that is a $(\eta \bar c,\delta,\zeta \bar c, 1 / \eta )$-\vltc{} (Lemma~\ref{lem:vtester}). 
Then, $\BlockDec$ is defined as:
\begin{betterFrame}{$\BlockDec^{\bw}(j)$}
    Let $j_1,\dots,j_{s}, j_{s+1}$ and  be the sequence of indices such that $\calB_{j_1}^1 \supset \calB_{j_2}^2 \supset \dots \supset \calB_{j_s}^s \supset \calB_{j_{s+1}}^{s+1} = \calB_{j}^{s+1}$.
        \begin{enumerate}
        \item For $i = 1,\dots,s$:
        \begin{enumerate}
            \item Set $\bw_i \leftarrow \bw\left[\calB_{j_i}^i\right].$
            \item Run $\Test_i^{\bw_i}$ $g$ times. 
            If any tests outputs $\bot,$ output $\bot.$
        \end{enumerate}
        If $\bot$ was not outputted on any iteration, output $\bw\left[\calB_{j}^{s+1}\right]$.
    \end{enumerate}
\end{betterFrame}
    We argue correctness.
    Fix $j \in [n / B].$
    For any codeword $\by \in \calC$ and $i \in [s + 1]$, define $\by_i = \by\left[\calB_{j_i}^i\right]$.
    First, observe that if the decoder $\BlockDec^{\by}$ is given access to a codeword $\by \in \calC,$ it will always output the correct codeword symbols $\bw_{s+1} = \by_{s+1}$ with probability $1$ by the completeness of the \vltc{} testers $\Test_i.$
    Hence, completeness holds.
    
    Now, for soundness, suppose that $\BlockDec^{\bw}$ is instead given access to a word $\bw$ such that $\reldist(\bw,\by) \leq \delta / 2$. 
    In the case that $\reldist(\bw_{s+1}, \by_{s+1}) \leq 2\delta$, regardless of whether a tester $\Test_i$ outputs $\bot$ on any iteration $i \in [s]$ or not, $\BlockDec^{\bw}$ satisfies \FBDC{} soundness trivially.
    Hence, we suppose that $\reldist(\bw_{s+1}, \by_{s+1}) > 2\delta$ and derive the probability that the decoder outputs $\bot.$
    
    Since $\reldist(\bw,\by) \leq \delta / 2$ and $|\bw_1| = |\bw| / 2,$ this implies $\reldist(\bw_1,\by_1) \leq \delta$.
    Next, since $\reldist(\bw_{s+1}, \by_{s+1}) > 2\delta$, this implies $\reldist(\bw_{s},\by_{s}) > \delta.$
    There exists a level $i^* \in [s]$  such that, 
    \[\reldist(\bw_{i^*}, \by_{i^*}) \leq \delta \text{ but } \reldist(\bw_{({i^*}+1)}, \by_{({i^*}+1)}) \geq \delta.\]
    Moreover, since $\left|\calB^{i}_{j_i}\right| = 2\left|\calB^{i+1}_{j_i+ 1}\right|$ for all $i \in [s],$ we have that
    \[\delta / 2 \leq \reldist(\bw_{i^*}, \by_{i^*}) \leq  \delta.\]
    By Lemma~$\ref{lem:vtester}$, a cannonical amplification argument, and a union bound, the soundness error, i.e., when none of the $g$ invocations of $\Test_i$ output $\bot$, is at most $1 -(\frac{\zeta \delta}{2} - \frac{1}{\eta})^g \leq \exp\left(-g\left(\frac{\zeta \delta}{2} - \frac{1}{\eta}\right)\right).$

    Lastly, the queries made are (1) from the \vltc{} testers $\Test_{i}$ and (2) to recover the final block $\bw\left[\calB_j^{s+1}\right].$
    For (1), there are at most $s$ invocations of \vltc{} testers $\Test_i$, each with locality $\eta \bar c$. 
    For (2), each block has size $|L_1|$.
    Thus, the total query complexity is $\eta \bar c s   + |L_1|.$
    \qedhere
\end{proof}
In our desired construction, we will recursively nest expander codes for $O(\log n)$ steps. 
Consequently, the relative distance of the expander codes must be sub-constant, and the overall top-level error tolerance is sub-constant. 
The following Lemma shows how we can raise the error tolerance to a constant fraction via applying a top-level code with sufficiently high error tolerance.
\begin{restatable}[Top-Level Amplification]{lemma}{topLevelAmp}
\label{lem:topLevelAmp}
    Let $(n,k)$ code $\calC' = (\Enc,\BlockDec)$ be a $(\ell',B,\delta',\delta_{B},\eps')$-\FBDC{} and $(2n,k_0)$ code $\calC_{top} = (\Enc_{top},\Test_{top})$ be a $(\ell_{top},\delta,\rho,\sigma)$-\vltc{} with $\delta' < \delta$ and $\rho \delta' / 2 > \sigma$. 
    Then, for any integer $g_{top} \geq 1,$ $\calC_{\msf{amp}} = \calC_{top} \divideontimes \calC$ is a $(g_{top} \ell_{top} + \ell', B, \delta, \delta_B,\eps)$-\FBDC{} where 
    \[\eps \leq \max\left\{\eps', \exp\left(g_{top} \times \left(\sigma - \rho \delta'/2 \right)\right)\right\}.\]
    Moreover, $\calC_{\msf{amp}}$ has rate $1 - \Rate{amp} \geq (k/n) - (1 - k_0/2n).$
\end{restatable}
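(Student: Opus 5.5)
We define the block decoder $\BlockDec_{\msf{amp}}^{\bw}(j)$ for $\calC_{\msf{amp}} = \calC_{top}\divideontimes\calC'$ (codeword length $2n$, so a codeword is $\by = \by^1_1\circ\by^1_2$ with each half in $\calC'$) as follows. Run $\Test_{top}^{\bw}$ independently $g_{top}$ times and output $\bot$ if any run rejects. Otherwise, let $h\in\{1,2\}$ be the half containing block $j$, let $j'$ be the index of block $j$ inside that half, and output $\BlockDec^{\bw^1_h}(j')$. Here the oracle $\bw^1_h$ is simulated by offsetting queries into $\bw$. The locality bound $g_{top}\ell_{top}+\ell'$ is then immediate. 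Completeness follows from completeness of $\Test_{top}$ together with completeness of $\BlockDec$, since $\by^1_h\in\calC'$. The verbatim property is inherited from $\BlockDec$, because the top-level tests can only produce $\bot$. For the rate, count constraints: $\calC_{\msf{amp}} = \calC_{top}\cap(\calC'\circ\calC')$, so $\dim \ge 2k - (2n-k_0)$, and dividing by $2n$ gives $k/n - (1-k_0/2n)$. This is the argument of Lemma~\ref{lem:nestingParams} applied with $b=2$.

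For soundness, fix $\bw$ with $\reldist(\bw,\by)\le\delta$ and split into two cases according to $r := \reldist(\bw,\by)$.

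\textbf{Case $r\le\delta'/2$.} Each half then satisfies $\reldist(\bw^1_h,\by^1_h)\le 2r\le\delta'$. By soundness of $\calC'$, $\BlockDec^{\bw^1_h}(j')$ lies in $\calW_j\cup\{\bot\}$ except with probability $\eps'$, and the extra top-level tests can only help. The error is therefore at most $\eps'$.

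\textbf{Case $\delta'/2< r\le\delta$.} Here VLTC soundness applies, since $r\le\delta$. Each run rejects with probability at least $\rho r-\sigma\ge \rho\delta'/2-\sigma>0$, and the runs are independent. All $g_{top}$ runs accept with probability at most $(1-(\rho\delta'/2-\sigma))^{g_{top}}\le\exp(g_{top}(\sigma-\rho\delta'/2))$. Whenever some test rejects, the output is $\bot$, which is acceptable. The soundness error is thus at most this exponential, and in neither case do we need to reason about $\BlockDec$ on a badly corrupted half.

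Taking the maximum over the two cases gives the claimed $\eps$. The only delicate point is the half-to-whole distance bookkeeping: because the halves have length $n$, a global fraction $r$ can concentrate so that one half carries error fraction up to $2r$. This is exactly why the threshold is $\delta'/2$ rather than $\delta'$, and it matches the hypothesis $\rho\delta'/2>\sigma$.
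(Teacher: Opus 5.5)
Your proposal is correct and matches the paper's proof essentially step for step. It uses the same decoder (run $\Test_{top}$ $g_{top}$ times, then delegate to $\BlockDec$ on the half containing block $j$), the same case split at $\reldist(\bw,\by)=\delta'/2$ (repeated VLTC rejection above the threshold, $\calC'$-soundness on each half with error fraction at most $\delta'$ below it), and the same locality, verbatim, and rate bookkeeping via Lemma~\ref{lem:nestingParams} with $b=2$.
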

\begin{proof}
    Define the block decoding procedure $\BlockDec_{\msf{amp}}$  as follows:
    \begin{betterFrame}{$\BlockDec^{\bw}_{\msf{amp}}(j)$}
        \begin{enumerate}
            \item Run $\Test_{top}^{\bw}$ $g_{top}$ times. 
            If any test outputs $\bot$, output $\bot$. Otherwise, 
            \begin{enumerate}
                \item set $(\bw^1,j') \leftarrow \begin{cases}
                    (\bw[1,\dots,n],j) & \text{if }j \leq n / B;\\
                    (\bw[n+1,2n],j - n/B) & \text{if }n / B < j \leq 2n / B, \text{ and}
                \end{cases}$
                \item Output $\BlockDec^{\bw^1}(j')$.
            \end{enumerate}
        \end{enumerate}
    \end{betterFrame}
    Completeness holds by the completeness of $\calC'$ and $\calC_{top}.$
    Now, for soundness, consider word $\bw \in \Sigma^{2n}$ such that $\reldist(\bw,\by) \leq \delta.$
    If $\reldist(\bw,\by) > \delta' / 2,$ since $\calC_{top}$ is a $\vltc{}$, 
    \[\Pr[\Test{}_{top}^{\bw} = \bot] \geq p \reldist(\bw,\by)/ 2 - \sigma,\]
    and so over $g_{top}$ iterations, the probability that all tests accept is at most 
    \[\left(1 - \left(\rho \delta' / 2 - \sigma\right)\right)^{g_{top}} \leq \exp\left(-g_{top}(\rho \delta' / 2 - \sigma)\right).\]
    Otherwise, if $\reldist(\bw,\by) \leq \delta' / 2,$ then $\reldist(\bw_{p},\by_p) \leq \delta')$ for $j \in \{1,2\}$. 
    Then, by soundness of the \FBDC{}, 
    \[\Pr[\BlockDec{}^{\bw_p}(j) \not\in \calW_j \cup \{\bot\}] \leq \eps'.\]
    Combining the two cases, our desired soundness readily follows.

     

    Lastly, we argue the rate, number of queries, and verbatim property.
    By Lemma~\ref{lem:nestingParams}, the rate follows.
    Since $\BlockDec_{\msf{amp}}$ calls $\Test_{top}$ $g_{top}$ times and calls $\BlockDec$ once, the number of queries is $g\ell_{top} + \ell'.$
    Finally, $\BlockDec{}_{\msf{amp}}^{\bw}(j)$ either outputs $\bot$ during one of the top level tests, or outputs $\BlockDec^{\bw_p}(j)$, which by the verbatim property of $\calC'$, is either the block without modifications or $\bot.$
\end{proof}
We show in the following theorem how to transform an \FBDC{} into a \realcc{} by applying an additional nesting step with an appropriately chosen error correcting code.
\begin{restatable}[\FBDC{} to \realcc{}]{theorem}{FBDCtoARLCC}\label{thm:FBDCtoARLCC}
    Let $(n,k_F)$ (linear) code $\calC_{\FBDC{}} = (\Enc_{\FBDC{}},\BlockDec)$ be a $(\ell,B,\delta,\delta_B,\eps)$-\FBDC{} and $\calC_{\msf{Block}} = (\Enc_{\msf{Block}},\Dec_{\msf{Block}})$ be a $(B,k_B)$ (linear) code with error tolerance $\delta_B$. 
    Then, nested code $\calC_{\realcc{}} = \calC_{\FBDC{}} \divideontimes \calC_{\msf{Block}}$ is a $(n,k)$ code that is an $(\alpha,\kappa,\delta,\eps)$-$\realcc{}$ with $\alpha = \frac{3\ell}{B}, \kappa = B$, and rate at least $1 - \rate{\realcc{}} \geq k_F/n -  (1- k_B/B).$

    More generally, for any $h \in [n / B]$, nested code $\calC_{\realcc{}} = \calC_{\FBDC{}} \divideontimes \calC_{\msf{Block}}$ is a $(n,k)$ code that is an $(\alpha_h,\kappa_h,\delta,\eps)$-$\realcc{}$ with $\alpha_h = \frac{(h+2)\ell}{hB}, \kappa_h = hB$, and the same rate $1 - \rate{\realcc{}}$.
\end{restatable}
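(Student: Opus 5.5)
The plan is to build the decoder $\Dec^{\bw}(L,R)$ for $\calC_{\realcc{}} = \calC_{\FBDC{}} \divideontimes \calC_{\msf{Block}}$ directly from $\BlockDec$ and $\Dec_{\msf{Block}}$. Given an interval $[L,R]\subseteq[n]$ with $R-L+1\ge hB$, let $j_1,\dots,j_t$ be the consecutive block indices whose blocks $\calB_{j}=\{(j-1)B+1,\dots,jB\}$ intersect $[L,R]$. Since the interval may only partially cover its first and last blocks, $t \le \lceil (R-L+1)/B\rceil + 1 \le (R-L+1)/B + 2$. For each $q\in[t]$ the decoder runs $\BlockDec^{\bw}(j_q)$ and outputs $\bot$ if any call returns $\bot$. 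Otherwise it holds $\bz_q = \bw_{j_q}$, by the verbatim property. It then sets $\hat\by_q = \Enc_{\msf{Block}}(\Dec_{\msf{Block}}(\bz_q))$ and outputs the restriction of $\hat\by_1\circ\dots\circ\hat\by_t$ to $[L,R]$.

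\textbf{Rate.} Since $\calC_{\realcc{}} = \calC_{\FBDC{}}\cap \calC_{\msf{Block}}^{n/B}$, Lemma~\ref{lem:nestingParams} gives rate at least $k_F/n - (1-k_B/B)$ directly.

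\textbf{Completeness.} On a codeword $\by\in\calC_{\realcc{}}$, $\by$ also lies in $\calC_{\FBDC{}}$, so FBDC completeness returns $\by_{j_q}$. Because $\by_{j_q}\in\calC_{\msf{Block}}$, decoding and re-encoding is the identity, and the output is $\by[L,R]$ with probability $1$.

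\textbf{Locality.} The decoder makes $t\ell$ queries, and
\[
t\ell \le \Big(\tfrac{R-L+1}{B}+2\Big)\ell \le \tfrac{(h+2)\ell}{hB}\,(R-L+1),
\]
using $R-L+1\ge hB$. The case $h=1$ gives $\alpha=3\ell/B$ and $\kappa=B$.

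\textbf{Soundness.} This is the main obstacle, because $t$ calls are made and a naive union bound would give error $t\eps$ rather than $\eps$. Fix $\bw$ with $\reldist(\bw,\by)\le\delta$. The verbatim property means that any non-$\bot$ output of call $q$ is the fixed string $\bw_{j_q}$. Call the index $q$ \emph{bad} if $\reldist(\bw_{j_q},\by_{j_q})>\delta_B$; this is a deterministic property of $\bw$.

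If no index is bad, then whenever no $\bot$ occurs every $\bz_q$ is within $\delta_B$ of the codeword $\by_{j_q}\in\calC_{\msf{Block}}$. Unique decoding then recovers $\by_{j_q}$ exactly, so the output is correct or $\bot$ with probability $1$.

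If some index $q^*$ is bad, the output can be wrong only if the single call $\BlockDec^{\bw}(j_{q^*})$ returns non-$\bot$. That event forces an output outside $\calW_{j_{q^*}}\cup\{\bot\}$, which FBDC soundness bounds by $\eps$. Hence the total soundness error is at most $\eps$ with no union bound. It remains to check that the FBDC block partition of $\calC_{\FBDC{}}$ coincides with the $\calC_{\msf{Block}}$ nesting partition, i.e.\ that both use the same size-$B$ aligned blocks.
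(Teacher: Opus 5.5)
Your proposal is correct and follows essentially the same route as the paper. It uses the same decoder: run $\BlockDec$ on each of the at most $(R-L+1)/B+2$ touched blocks, output $\bot$ if any call does, and otherwise decode and re-encode with $\calC_{\msf{Block}}$. It uses the same block-counting bound for locality. It also uses the same verbatim-property argument, which charges the whole soundness error to one fixed bad block rather than union-bounding over the $t$ calls. You are slightly more explicit than the paper, which leaves implicit the general-$h$ locality calculation, the no-bad-block case, and the rate bound via Lemma~\ref{lem:nestingParams}.
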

\begin{proof}
    Construct the amortized decoder as follows:
    \begin{betterFrame}{$\Dec^{\bw}(L,R)$}
    \begin{enumerate}
        \item  Interpret $\bw = \bw_1 \circ \bw_2 \circ \dots \bw_{n / b}$ where each $|\bw_j| = b.$
        \item Let the symbols of $\bw[L,R]$ lie in blocks $\bw_{s+1} \circ \dots \circ \bw_{s + t}.$
        \item Compute $[\Enc_{\msf{Block}}(\Dec_{\msf{Block}}( \BlockDec^{\bw}(s + j)))]_{j \in [t]}$ and output the corresponding bits according to range $[L,R]$.
        If $\bot$ is outputted on any run of $\BlockDec{}^{\bw}$, output $\bot.$
    \end{enumerate}
    \end{betterFrame}

    For correctness, observe that since $\calC = \calC_{\FBDC{}} \divideontimes \calC_{\msf{Block}} \subseteq \calC_{\FBDC{}}$, $\BlockDec{}^{\bw}$ retains completeness from the original \FBDC{} $\calC.$
    That is, computing $\BlockDec{}^{\by}(s + j)$ for each $j \in [t]$ on codeword $\by$ from $\calC$ outputs codeword blocks $\by_{s+1}\circ \dots \circ \by_{s+t}$ with probability $1.$
    Then, completeness of the $\realcc$ follows since $\Enc_{\msf{Block}}(\Dec_{\msf{Block}}(\by_{s + j})) = \by_{s+j}$ for all $j \in [t].$
    
    Now suppose the decoder $\Dec^{\bw}$ is given oracle access to word $\bw$ such that there exists codeword $\by \in \calC$ such that $\reldist(\bw,\by) \leq \delta.$
    Fix any input interval $[L,R]$ and suppose $\Dec^{\bw}(L,R)$ calls $\BlockDec^{\bw}(s + j)$ for $j \in [t]$. 
    Then, since $\Dec$ outputs $\bot$ if any calls to $\BlockDec^{\bw}$ output $\bot,$ the \realcc{} has soundness error at most the soundness error of the \FBDC{}, $\eps.$
    More formally, define the set $\texttt{Bad} = \{j \in [t] : \reldist(\bw_{s + j},\by_{s + j}) > \delta_B\}.$
    Then, if $|\texttt{Bad}| \geq 1,$ we fix an arbitrary $j_{\texttt{Bad}} \in \texttt{Bad}$ and so
    \begin{align*}
        \Pr[\Dec^{\bw}(L,R) \not\in \{\by[L,R], \bot\}] &\leq \Pr[\BlockDec{}^{\bw}(j_{\texttt{Bad}}) \neq \bot] \\
        &= \Pr\left[\BlockDec{}^{\bw}(j_{\texttt{Bad}}) \not\in \left\{\bot,\calW_{s + j_{\texttt{Bad}}}\right\}\right]  \tag{Verbatim property} \\
        &\leq        \eps.
    \end{align*}

    We now argue the locality.
    Let $L\leq R \in [n]$ be codeword indices such that $(R - L + 1) \geq B$. 
    Then, supposing the codeword symbols $\bw[L,R]$ lie in the (minimal) block range $\bw_{s+1} \circ \dots \circ \bw_{s + t}$,
    we have that $t \leq \frac{R - L + 1}{B} + 2$, i.e., the block range may overlap with at most $2$ extra blocks.
    The decoder $\Dec^{\bw}$ runs $\BlockDec{}$ $t$ times. 
    Thus, the total query complexity is $t \ell$ amortized over the number of recovered symbols $(R - L + 1)$, giving 
    \[\alpha \leq \frac{t \ell}{R- L + 1} \leq \frac{\ell\left(\frac{R - L + 1}{B} + 2\right)}{R - L + 1 } = \frac{\ell((R - L + 1) + 2B)}{B(R - L + 1)} \leq \frac{3\ell}{B}.\]
    
\end{proof}

%% file: sections/alccToAldc.tex
In this section, we take our previously constructed amortized locally {\em correctable} codes and transform them into amortized locally {\em decodable} codes. 
Since our $\realcc{}$ codes are linear, we can convert them into $\realdc{}$s by making the code systematic, where recall that a code being systematic means that its codewords include the message.
More formally, any linear code $\calC = (\Enc,\Dec)$ with generator $G \in \Sigma^{k \times n}$ can be made systematic by a invertible row-operation transformation $A \in \Sigma^{k \times k}$ and column permutation matrix $C \in \Sigma^{n \times n}$ such that
\[AGC = \m{I & \mid & P},\]
for identity matrix $I \in \Sigma^{k \times k}$. 
It follows that matrix $(AG)$ is the generator matrix for a code $\calC'$ that is systematic\footnote{The systematic property usually require that the message is encoded in the first $k$ symbols of the codeword, i.e., $\Enc(\bx) = \bx \circ \by'$ for all $\bx \in \Sigma^k$ and some $\by' \in \Sigma^{n -k}$. We relax this ordering (by omitting matrix $C$) since this would require permuting the queries made by the decoder as well. In other words, the locally testable block structure is preserved up to permutations.}.
Then, the systematic code $\calC'$ encodes messages $\bx$ as the codeword $\by = (\bx AG)$ \footnote{We also require that the relative ordering of message symbols is preserved in the systematic code for the \realdc{} decoder. Intuitively this is always possible because we can apply a permutation to the message prior to a systematic encoding i.e., $mT$ for some permutation matrix $T \in \Sigma^{k\times k}$.}.
Observe by associativity, $\by = (\bx A) G,$ and so codeword $\by \in \calC'$ is equivalently a codeword over the original code $\calC$, except with message space $\{\bx A : \bx \in \Sigma^k\}.$
Since $A$ is an invertible, bijective transformation, this new message space is simply a permutation of the original space $\Sigma^k.$
Intuitively, since the systematic code $\calC'$ is equivalent to original code $\calC$ up to message space permutation, the $\realcc{}$ decoder $\Dec$ for original code $\calC$ is also a $\realdc{}$ decoder for the systematic code $\calC'.$

However, the amortized locality parameters $\alpha$ and $\kappa$ need not be preserved because the block structure of our encoding forces the message symbols to be ``distributed'' in the code. 
For instance, recall that our encoding is a recursively nested code, where the first $(n / 2)$ symbols, first $(n /4)$ symbols, and so forth, form a \vltc{}.
If the first $(n/2)$ symbols, or first $(n/4)$ symbols, or so forth, are all message symbols, then these symbols could not possibly form a \vltc{} codeword, since there are no symbols that allow you to distinguish one message from another.
Hence, the message symbols are distributed within each codeword, where the distribution may be unbalanced in certain blocks depending on the exact (expander) code used.
If many consecutive blocks contained a small proportion of message symbols, then the resulting amortized locality of the \realdc{} must also increase inverse-proportionally.

We remedy the situation of disproportionate message symbols per block by describing an additional `filtering' transformation to the \realdc{}.
Intuitively, the lowest level blocks of a recursively nested code must have many blocks with message symbol proportion close to the rate of the code.
Then, by ignoring these `bad' blocks (e.g., fixing the message symbols mapping to these blocks), we can guarantee that if a block contains message symbols, the proportion of message symbols must be high. 
However, this transformation comes at the cost of lowering the code's rate; 
fortunately, we will show that the rate loss is tolerable.

To start, we formally define good (resp. bad) codeword blocks.
Let $B \in \N$ be any block size. 
Then, for any $(n,k)$ code $\calC$ and codeword $\by \in \calC$, denote the i'th block of codeword $\by$ of size $B$ as $\by[(i - 1) B + 1: iB]$. 

Given this context, we simply say (lowest level) block $i$ for brevity.
\begin{definition}[$(c,B)$-good/bad block]\label{def:goodBlock}
    For $c \in [0,1]$ and any $j \in [n / B]$, block $j$ is a {\em $(c,B)$-good block} if it either contains {no message symbols} or at least $cB$ message symbols. 
    Otherwise, we say the block $j$ is a {\em $(c,B)$-bad block}.
\end{definition}
Our goal is to ensure that every block in our codewords are $(c,B)$-good, which we define a code achieving this as a $(c,B)$-message smooth code.
\begin{definition}[$(c,B)$-message smooth code]\label{def:smoothCode}
    A (systematic) $(n,k)$ code $\calC$ is $c$-message smooth if block $j$ is $(c,B)$-good for all $j \in [n/ B]$.
\end{definition}
In the context of our nested \realcc{} construction, the block size $B$ is the lowest level block size. 
For notational clarity, we simply say $c$-good/bad blocks and $c$-smooth codes, when referring to our nested codes with lowest level block size $B.$
We derive the maximum number of bad blocks (equivalently the minimum number of good blocks) a recursively nested, systematic code may have.
Note that we derive a bound looser than necessary since it will be both significant clearer and is sufficient for our constructions.
Specifically, we will assume any block containing {\em exactly} a $c$ fraction of message symbols to be $c$-bad.

\begin{lemma}\label{lem:numGoodBlocks}
    Let $\calC$ be a recursively nested, systematic $(n,k)$ code with lowest level block size $B$ and rate $1 - \Rate{}.$
    Then, for any $c \leq 1 - \Rate{}$, $\calC$ has at least $\frac{k - cn}{(1- c) \times B}$ $c$-good blocks.
\end{lemma}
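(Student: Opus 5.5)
This is a counting argument, with the worst case being when bad blocks are as full of message symbols as possible. Let $N = n/B$ be the number of lowest level blocks. For each block $j \in [N]$, let $m_j$ be the number of message symbols (systematic coordinates) lying in block $j$, so $\sum_{j} m_j = k$ and $0 \leq m_j \leq B$. Let $G$ be the set of $c$-good blocks and $\bar G$ the set of $c$-bad blocks. Following the convention stated just before the lemma, we treat a block with exactly $cB$ message symbols as bad; this only weakens the conclusion, so a bad block satisfies $m_j \leq cB$.

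The plan is to upper bound $k$ in terms of $|G|$. A good block contributes at most $B$ message symbols and a bad block at most $cB$. Hence
\[ k = \sum_{j \in G} m_j + \sum_{j \in \bar G} m_j \leq |G| B + (N - |G|) cB = cn + |G|(1-c)B, \]
using $NB = n$. When $c<1$, rearranging gives $|G| \geq \frac{k - cn}{(1-c)B}$, which is the claim. The hypothesis $c \leq 1 - \Rate{} = k/n$ guarantees $k - cn \geq 0$, so the bound is meaningful. If $c = 1$, then $k=n$ and the bound is degenerate; I would note this edge case separately.

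There is no real obstacle here. The only points needing care are bookkeeping: fixing the block partition at the lowest level of size $B$ with $B \mid n$, and checking that the systematic transformation (row operations without the column permutation $C$, as in the discussion above) leaves exactly $k$ message coordinates distributed among the blocks. The recursive nesting structure plays no role beyond supplying the block partition; the bound holds for any systematic code partitioned into blocks of size $B$.
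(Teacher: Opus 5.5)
Your proof is correct and is the same counting argument the paper uses. The paper phrases it as a pigeonhole: load every block with $cB$ message symbols, then note that the remaining $k - cn$ symbols can only go into good blocks, each absorbing at most $(1-c)B$ more; this is your inequality $k \le |G|B + (n/B - |G|)cB$ rearranged, and your explicit form is slightly cleaner since it shows directly that bad blocks satisfy $m_j \le cB$ however the boundary case $m_j = cB$ is classified.
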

\begin{proof}
    We proceed with a pigeon-hole argument, maximizing the number of bad blocks.
    Given $k$ message symbols, distribute $c B$ message symbols to each block $j \in [n / B]$. 
    The number of remaining message symbols left to distribute is
    $k - cB\times (n / B) = k - cn,$
    and each block can take at most $B - cB = (1 - c) \times B$ more message symbols.
    Then, the minimum number of good blocks is at least $\frac{k - cn}{(1 - c) \times B}.$
\end{proof}

We show a modification of the original nested code $\calC$ into a message-smooth nested code $\calC_{\msf{smooth}}$ by removing lowest level blocks with small proportion of message symbols.
\begin{theorem} \label{thm:SmoothifyingNestedCodes}
    Let  $\calC_{\msf{nest}}$ be a recursively nested, systematic $(n,k)$ code with lowest level block size $B \in \N$ and rate $1 - \Rate{} \in (0,1]$.
    Then for any constant $c \leq 1 - \Rate{},$ there exists a recursively nested code $\calC_{\msf{smooth}}$ with rate $\frac{1 - \Rate{}  - c}{1 - c}$ and the same lowest level block size $B$ that is $c$-message smooth.
\end{theorem}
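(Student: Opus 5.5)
The plan is to build $\calC_{\msf{smooth}}$ as a subcode of $\calC_{\msf{nest}}$. We fix, i.e.\ set to $0$, every message coordinate that the systematic encoding places inside a $c$-bad block. Since $\calC_{\msf{nest}}$ is systematic, there are indices $i_1<\dots<i_k$ with $\Enc(\bx)[i_j]=\bx[j]$. Let $\Bad\subseteq[k]$ be the set of $j$ such that $i_j$ lies in a $c$-bad block. Set $k'=k-|\Bad|$. For $\bx'\in\Sigma^{k'}$, define the padded message $\bx\in\Sigma^k$ by $\bx[[k]\setminus\Bad]=\bx'$ (order preserved) and $\bx[\Bad]=0$. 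Then let $\Enc_{\msf{smooth}}(\bx')=\Enc(\bx)$.

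Because $\Enc$ is linear and we restrict to the linear subspace $\{\bx:\bx[\Bad]=0\}$, the image is a linear subcode of $\calC_{\msf{nest}}$. It is therefore contained in every nested constituent code, so it is still a recursively nested code with the same partitions and the same lowest level block size $B$. Formally, intersecting with the coordinate subspace $\{\by:\by[i_j]=0\ \forall j\in\Bad\}$ preserves membership in each $\calC_i^{2^i}$. The encoder is still systematic: the $k'$ surviving message symbols appear verbatim at positions $i_j$, $j\notin\Bad$, in their original relative order.

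Smoothness comes next. The message positions of $\calC_{\msf{smooth}}$ are exactly the old message positions lying in good blocks. A block that was $c$-good keeps all of its message positions, so it still has either $0$ or at least $cB$ of them. A block that was $c$-bad now contains no message positions, so it is trivially good. Hence every block is $(c,B)$-good and the code is $c$-message smooth.

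The remaining step, and the only real one, is the rate bound. By Lemma~\ref{lem:numGoodBlocks}, at least $\frac{k-cn}{(1-c)B}$ blocks are $c$-good. As in the proof of that lemma, a block counted as good there holds at least $cB+\big((1-c)B\big)=B$ symbols in the extremal configuration. The cleaner route is to bound $|\Bad|$ directly. Each bad block contains fewer than $cB$ message symbols, and there are at most $n/B - (\#\text{good})$ bad blocks, so
\[
|\Bad|\le cB\left(\frac nB-\frac{k-cn}{(1-c)B}\right)=c\,\frac{n-k}{1-c}.
\]
Hence
\[
k'\ge k-\frac{c(n-k)}{1-c}=\frac{k-cn}{1-c},
\]
and the rate is $k'/n\ge\frac{(1-\Rate{})-c}{1-c}$.

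The main point to check carefully is that the count of good blocks, combined with bad blocks holding fewer than $cB$ message symbols each, gives exactly this bound. A secondary check is that freezing message coordinates to $0$ does not break the nested structure or linearity. Both follow from the fact that $\calC_{\msf{smooth}}$ is a linear subspace of $\calC_{\msf{nest}}$.
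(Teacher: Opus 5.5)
Your proof is correct and uses the same construction as the paper. You freeze to $0$ every message coordinate that the systematic encoding places in a $c$-bad block, so bad blocks become empty while good blocks keep all of their message positions. You also check something the paper skips: the result is still linear, systematic and nested, since it is $\calC_{\msf{nest}}$ intersected with a coordinate subspace. To match Definition~\ref{def:recNested} literally, you would fold those coordinate constraints into the top-level code.

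The real difference is the rate count. The paper goes directly from Lemma~\ref{lem:numGoodBlocks} to ``at least $\frac{k-cn}{1-c}$ message symbols remain.'' In effect it multiplies the good-block count $\frac{k-cn}{(1-c)B}$ by $B$. Read literally, that inference needs every good block to be full of message symbols. This holds in the extremal configuration of the lemma's pigeonhole argument, but not in general: a good block may be empty or hold only about $cB$ message symbols.

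You use the lemma only to cap the number of bad blocks at $\frac nB-\frac{k-cn}{(1-c)B}$, and charge fewer than $cB$ discarded symbols to each. This gives $|\Bad|\le\frac{c(n-k)}{1-c}$ and hence $k'\ge\frac{k-cn}{1-c}$. Both routes reach the same rate bound $\frac{1-\Rate{}-c}{1-c}$, but your accounting is the one that actually derives it from the lemma.
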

\begin{proof}
Let $\Enc$ be the encoding procedure of code $\calC$. 
Then, we construct $\Enc_{\msf{smooth}}$, the encoding procedure for $\calC_{\msf{smooth}}$ as follows:
\begin{betterFrame}{$\Enc_{\msf{smooth}}(\bx)$}
\begin{enumerate}
    \item Compute $\by \leftarrow \Enc\left(0^{k}\right)$.
    \item Let $j_1,\dots,j_p \in [n / B]$ be the indices of all $c$-bad blocks in $\by$. 
    \item Let $(i_1^s,i_1^e),\dots, (i^s_p,i^e_p) \in [k]\times [k]$ be the corresponding (start and end) indices such that $\bx[i_q^s,i_q^e]$ are the message symbols corresponding to the $j_q$ bad block of $\by$ for some $q \in [p]$.
    \item Initialize $\bx'$ as an empty word. For $i \in [k]$:
    \begin{enumerate}
        \item If $i \in [\{i_q^s, i_q^s + 1,\dots, i_q^e\}]$ for some $q \in [p]$: set $\bx' \leftarrow \bx' \circ [0]$;
        \item Otherwise, set $\bx' \leftarrow \bx' \circ \bx[i]$.
    \end{enumerate}
    \item Output $\Enc(\bx').$
\end{enumerate}
\end{betterFrame}

By construction, code $\calC_{\msf{smooth}}$ is $c$-smooth since all $c$-bad blocks in the original code $\calC$ do not contain any message symbols in the new code $\calC_{\msf{smooth}}$.
Next, by Lemma~\ref{lem:numGoodBlocks}, there are at least $\frac{k - cn}{1-c}$ message symbols remaining, so
the rate of the code $\calC_{\msf{smooth}}$ with encoding procedure $\Enc_{\msf{smooth}}$ is at least $\frac{k - cn}{(1 - c) \times n} = \frac{1 - \Rate{} - c}{1 - c}$.
\end{proof}

Naturally, we may extend Theorem~\ref{thm:SmoothifyingNestedCodes} from recursively nested codes to our $\realcc{}$s. 
Intuitively, after transforming the $\realcc{}$ to a $c$-smooth code, each block is guaranteed to have more than a $c$ fraction of message symbols. 
Subsequently, since the decoder goes from recovering entire codeword blocks in the \realcc{} setting to recovering a $c$ fraction of the codeword block (pertaining to the message symbols) in the \realdc{} setting, the amortized locality goes down by the same $c$ fraction, i.e., we need to query $(1/c)$ times more blocks.
\begin{restatable}{corollary}{arLCCToArLDC}\label{corr:arLCCToArLDC}
    Given an explicit code $\calC_{\realcc{}}$ that is a $(\alpha,\kappa, \delta,\eps)$-\realcc{} specified in Theorem~\ref{thm:FBDCtoARLCC} with rate $1 - \Rate{}$, for any $c \leq 1 - \rate{}$, there exists an explicit construction of code $\calC$ that is a $\left(\frac{\alpha} c,\kappa,\delta,\eps\right)$-$\realdc{}$ with rate $\frac{1 - \Rate{} - c}{1 - c}.$
\end{restatable}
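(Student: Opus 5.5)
We plan to combine the systematic transformation with the smoothing of Theorem~\ref{thm:SmoothifyingNestedCodes}, and then reuse the \realcc{} decoder of Theorem~\ref{thm:FBDCtoARLCC} essentially unchanged.

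\textbf{Construction.} Since $\calC_{\realcc{}}$ is linear, we first make it systematic by an invertible row transformation $A$ of the generator. We precompose with a message permutation so that message coordinate $i$ sits at codeword position $p_i$ with $p_1<p_2<\dots<p_k$. The codeword set is unchanged, so the nested block structure, the testers, and $\BlockDec$ are all unaffected. We then apply Theorem~\ref{thm:SmoothifyingNestedCodes} with block size $B$ (the lowest-level block size) to obtain $\calC_{\msf{smooth}}$. In this code every message coordinate in a $c$-bad block is fixed to $0$, and the surviving $k'\geq \frac{k-cn}{1-c}$ coordinates keep their increasing order. This gives the rate $\frac{1-\Rate{}-c}{1-c}$. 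Every codeword of $\calC_{\msf{smooth}}$ is still a codeword of $\calC_{\realcc{}}$, and its free coordinates lie in $c$-good blocks. Explicitness holds because bad blocks can be found from the systematic positions by linear algebra.

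\textbf{Decoder and correctness.} On input $[L,R]\subseteq[k']$, the new decoder maps the surviving indices $L,\dots,R$ to their codeword positions $q_L<\dots<q_R$. It then runs the \realcc{} decoder $\Dec^{\bw}(q_L,q_R)$, or, more precisely, calls $\BlockDec$ only on the blocks that contain some $q_i$. Finally it outputs the decoded symbols at positions $q_L,\dots,q_R$. Completeness and soundness follow from those of $\calC_{\realcc{}}$. The only added observation is that $\reldist(\bw,\Enc_{\msf{smooth}}(\bx))\le\delta$ is a distance bound to a codeword of $\calC_{\realcc{}}$. Soundness does not need a union bound, by the same verbatim argument used in Theorem~\ref{thm:FBDCtoARLCC}.

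\textbf{Locality (the main point).} The blocks touched are consecutive blocks containing message symbols, interleaved with message-free blocks that the decoder skips. Every touched block, except possibly the first and last, lies entirely inside the requested range, and each such block contributes at least $cB$ requested symbols because it is $c$-good. So if $t$ blocks are decoded, $R-L+1 \ge c(t-2)B$, and the query count is $t\ell \le \left(\frac{R-L+1}{cB}+2\right)\ell$. Dividing by $R-L+1\geq\kappa$ reproduces the bound of Theorem~\ref{thm:FBDCtoARLCC} scaled by $1/c$, giving $\alpha/c$.

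The obstacle I expect is the boundary blocks. The \realcc{} bound assumed $\kappa$ codeword symbols, whereas now $\kappa$ message symbols may span about $\kappa/c$ codeword positions. I would handle this by writing the count as $\frac{(h'+2)\ell}{h'B}$ with $h'B\ge (R-L+1)/1$ and checking that it is at most $\frac1c\cdot\frac{(h+2)\ell}{hB}$, i.e.\ $\alpha_h/c$. If the constants do not line up exactly, I would absorb the slack into the stated $\alpha$, since $t-2$ interior blocks each carry at least $cB$ symbols.
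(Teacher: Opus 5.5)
Your proposal is correct and follows the paper's route. The paper also makes the \realcc{} systematic with the message order preserved, applies the bad-block filtering of Theorem~\ref{thm:SmoothifyingNestedCodes}, and reuses the block decoder, so that each interior decoded block contributes at least $cB$ requested message symbols and the locality scales by $1/c$. Two refinements on your write-up. Skipping message-free blocks is genuinely needed, and the paper's informal justification leaves it implicit. Your closing hedge is unnecessary, because $\frac{\ell}{cB}+\frac{2\ell}{hB}\le\frac{(h+2)\ell}{chB}=\alpha_h/c$ holds directly since $c\le 1$.
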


%% file: sections/constconstruction.tex
In this section, we explicitly construct an ideal (constant rate, constant amortized locality, and constant error-tolerance) \realdc{} by instantiating the \FBDC{} in Section \ref{sect:FBDC} with an explicit expander code, converting the \FBDC{} into a \msf{arLCC} using an appropriate block code, and using the results of Section~\ref{sect:LCCtoLDC} to smoothly convert the \msf{arLCC} into a \realdc{}. 
For convenience, we restate the relevant statements.
\explicitExpander*
\vtester*
\recursiveExpander*
\topLevelAmp*
\FBDCtoARLCC*
\arLCCToArLDC*
We present an explicit, ideal \realdc{} in the following theorem.
\constantARLDC
\begin{proof}
Suppose $n = 2^r$, fix $\delta \in \Omega(1)$, let $B = log\left(1/\eps\right) \times (\log n)^2 \times 2^{(\log\log\log n)^4}$ be the lowest level block size (without loss of generality, suppose $B$ is also a power of $2$), and let $L = \log n \log\log n$.
Set $s = r - \lceil \log B \rceil - 1$ and define the sequence $\{(n_i,m_i)\}_{1 \leq i < s}$ as 
\begin{align*}
    n_i &= 2^{\lceil \log 2B \rceil + i - 1} \\
    m_i &= 2^{\lceil \log 2B \rceil - \lceil \log L \rceil + i - 1}
\end{align*}
for each $i \in [s -1].$
By Theorem~\ref{thm:explicitExpander}, there exists an explicit family of  $(d_{nest},\delta_{nest},\zeta_{nest})$ (unique) expanders $\{G_i = (L_i,R_i,E_i)\}_{1 \leq i < s}$ with $|L_i| = n_i$, $|R_i| = m_i$, left degree $d_i \leq d_{nest} = 2^{c_0 \log^3(\lceil \log  L \rceil )},$ left expansion size $\delta_i \geq \delta_{nest} =  \frac \beta {d_{i}2^{\lceil \log L \rceil}},$ and $\zeta \in \Omega(1)$.
Further, the average right degree is $\bar c_{nest} = d_{nest}\times 2^{\lceil \log L \rceil }.$

Then, by instantiating Proposition~\ref{prop:recursiveExpander} with the expander family above, setting $\eta_{nest} = 4/\zeta\beta$, and $g = \eta_{nest} \times \log(1/\eps)$ we construct a $(n,k)$ code $\calC'$ that is a $(\ell',B, \delta', \delta_{B},\eps')$-\FBDC{} where 
\begin{itemize}
    \item locality $\ell' = (s - 1) g\eta_{nest} \bar c _{nest} + 2^{\lceil \log 2B\rceil - 1}$,
    \item block size $2^{\lceil \log 2B\rceil - 1} = B$,
    \item top-level error tolerance $\delta' = \delta_{nest} / 2$,
    \item block-level error tolerance $\delta_B = 2\delta_{nest}$,
    \item soundness error $\eps' \leq  \exp\left(-g \left(\frac{\zeta \bar c_{nest} \delta_{nest}}{2} - \frac 1 {\eta_{nest}}\right)\right) =\exp\left( - g( \zeta\beta /2 -  \zeta\beta/4  )\right)= \eps,$
    \item and rate $1 - \Rate{}' \geq 1 - \sum_{i = 1}^{s - 1} m_i / n_i = 1 - (s - 1)\frac{1}{2^{\lceil \log L\rceil}} =  1 - \frac{(s - 1)}{L}.$
\end{itemize}

Since $\delta' \ll \delta$, apply Lemma~\ref{lem:topLevelAmp} using a \VLTC{} with constant error tolerance $\delta$. 
That is, let $n_{top} = n$ and $m_{top} = 2^{r - \lfloor \log (1/\delta) +\log \beta - c_0 \log^3(\log 1/ \delta)\rfloor}.$
Again by Theorem~\ref{thm:explicitExpander}, there is an explicit $(d_{top},\delta_{top},\zeta)$-unique expander $G_{top} = (L_{top},R_{top},\cdot)$ with $|L_{top}| = n_{top}$, $|R_{top}| = m_{top}$,
\begin{align*}
    \text{left degree } d_{top} &\leq 2^{c_0\log^3(\lceil \log 1/\delta \rceil )}, \\
    \text{left expansion size } \delta_{top} &\geq \frac{\beta}{d_{top} \times 2^{\lfloor \log (1/\delta) +\log \beta - c_0 \log^3(\log 1/ \delta)\rfloor}} \geq \delta, \tag{for large enough $n$}\\
    \text{average right degree } \bar c_{top} &= \frac{n_{top}d_{top}}{m_{top}} \leq 2^{\lfloor \log (1/\delta) +\log \beta \rfloor}.
\end{align*}
By Lemma~\ref{lem:vtester} and setting $\eta_{top} := \frac{8}{\zeta \delta_{nest} \bar c_{top}}$, we have a $\left(\ell_{top} = \eta_{top} \bar c_{top}, \delta, \rho_{top} = \zeta \bar c_{top} ,\eps_{top} =  \frac{1}{\eta_{top}} \right)$-\VLTC{}.
By Lemma~\ref{lem:topLevelAmp} with $g_{top} = \eta_{top}\log(1/\eps)$, we have an explicit construction of an $(n,k)$ code $\calC_{amp}$ that is a $(\ell, \delta,  2\delta_{nest}, \eps)$-\FBDC{}, where
    \begin{align*}
       \text{locality } \ell = g_{top}\ell_{top}+ \ell' &= \eta_{top}^2 \bar c_{top} + (s-1) g_{top}\eta_{nest}\bar c_{nest} + B  \\
        &= \log(1/\eps)\times \left(\frac{16}{\zeta^2 \delta_{nest}^2 \bar c_{top}}  +(s-1) \eta_{nest}\bar c_{nest}\right) + B \\
        &= \log(1/\eps) \times O(\delta / \delta_{nest}^2 + s d_{nest} L  + B) \tag{$1/{\bar c_{top}} = O(\delta)$}\\
        &= \log(1/\eps) \times O(d_{nest}^2 L^2) \tag{$
        s d_{nest}L\leq  1/ \delta_{nest}^2 = O(d_{nest}^2L^2)$}\\
        &= \log(1/\eps) \times 2^{O(\log^3\log\log n}) \times (\log n)^2 + B = (1 + o(1))\times B,
    \end{align*}
    the soundness error is at most $\max \left\{\eps', \exp\left(g\left(1/\eta_{top} -  \zeta\frac{\bar c_{top} \delta_{nest}}{4}  \right)\right)\right\} \leq \eps$ where 
        $g\left(\frac{1}{\eta_{top}} - \frac{\zeta \bar c_{top} \delta_{nest}}{4} \right) = g \times \left(-\frac{\zeta \bar c_{top} \delta_{nest}}{8}\right) = \log \eps,$
 and the rate $1 - \Rate{\FBDC{}}$ is at least 
\begin{align*}
    1 - \Rate{\FBDC{}} &\geq 1 - \frac{(s-  1)}{L} - m_s/n_s \\
    &= 1 - O(1/\log\log n) - \delta \times 2^{O(\log^3(\log 1 /\delta))} \geq \Omega(1)
\end{align*}
By Theorem~\ref{thm:FBDCtoARLCC}, take a $(B,k_{\msf{block}})$ code $\calC_{\msf{block}}$ with error tolerance at least $\delta_B \ll O(1)$ and constant rate (e.g., a Justesen code \cite{ITIT:Jus72}) to get a $\left(\alpha' = \frac{3\ell}{B} = O(1), B, \delta, \eps \right)$-\realcc{} with rate $1 - \Rate{\realcc{}} =1-  \Rate{\FBDC{}} - O(1)$ by Theorem~\ref{thm:FBDCtoARLCC}.

Finally, we can take $c = (1 - R_{\realcc{}})/2$ to get a $(\alpha =  O(2 \alpha' / (1 - \Rate{\realcc{}})) = O(1),\kappa =  B, \delta,\eps)$-\realdc{} with rate $1 - \rate{\realdc{}} \geq \frac{1  -\Rate{\realcc{}}}{1 + \Rate{\realcc{}}} \geq \Omega(1)$ by Corollary~\ref{corr:arLCCToArLDC}.
\end{proof}

%% file: sections/approach1construction.tex
In this section, we present two constructions of $\realdc{}$ with amortized locality less than $2$.
The first \realdc{} construction will have constant rate, constant error tolerance, and constant amortized locality less than $2,$ and the second \realdc{} construction will have sub-constant error tolerance, but rate and amortized locality approaching $1.$ 
Both of these constructions stand in significant contrast to the impossibility result of \rldc{}s with locality less than $2.$

Our constructions in this section can be seen as slight modifications of the parameters and codes chosen in the ideal \realdc{} construction in Section~\ref{sect:AllTogether}. 
Crucially, we will use a higher rate, lower error-tolerant (rate approaching $1$, $\Omega\left(\frac{\log\log n}{\log n}\right)$ error tolerance) for the lowest level blocks rather than the constant rate, constant error tolerance code used in the constant amortized locality construction, which we show can be constructed efficiently in Section~\ref{subsec:gvcodeapproach1}.
Additionally, we raise the $\kappa$ parameter, i.e., the minimum number of symbols recovered, to increase the number of lowest level blocks decoded. 
Along with a careful choice of parameter $c$ in the $c$-bad block filtering step to convert the \realcc{} into a \realdc{}, these modifications will result in a $\realdc{}$ with amortized locality less than $2$.

\subsection{Refining the \realdc{} Amortized Locality}
In this subsection, we present a modification of the prior \realdc{} construction in Theorem~\ref{thm:constantARLDC} to allow the amortized locality to reach less than $2$.
For reference, we restate Theorem~\ref{thm:FBDCtoARLCC} and Corollary~\ref{corr:arLCCToArLDC}.
We also state the following crucial Proposition~\ref{prop:boostedInner} about the existence of an error correcting code $\calC_{\msf{Block}}$ with error tolerance $\delta \geq \Omega\left(\frac{1}{\log n \log\log n}\right)$ and rate approaching $1.$
We will prove Proposition~\ref{prop:boostedInner} in the next Subsection~\ref{subsec:gvcodeapproach1} by showing the explicit construction of such a code.
\FBDCtoARLCC*
\arLCCToArLDC*

\begin{restatable*}{proposition}{boostedInner}\label{prop:boostedInner}
    For all $\eps > 0$, there exists a family of binary codes $\{\calC_k\}_{k \geq 1}$ such that any code in this family with codeword length $n \in \N$ has rate $k/n \geq \left(1 - \frac{1}{\log\log n} - \frac{(\log\log n)^3}{\log n}\right)$  and relative distance $\Omega\left(\eps \times \frac{\log\log n}{\log n}\right)$ 
    Furthermore code $\calC$ can be computed in time $\max\{n^{\eps},O(n)\}$. 
\end{restatable*}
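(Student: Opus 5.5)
We build the code by concatenation with a brute-force inner code. Set $b=\Theta(\log n/\log\log n)$ for the inner message length. The outer code is a Reed--Solomon code over $\F_{2^m}$. Its symbols are grouped into chunks and the inner code is applied chunkwise. The rate is then the product of outer and inner rates, and the relative distance is at least the product of the two relative distances. The rate target $1-\frac1{\log\log n}-\frac{(\log\log n)^3}{\log n}$ suggests the split: inner rate $1-\frac1{\log\log n}$, outer rate $1-\frac{(\log\log n)^3}{\log n}$, with the cross term absorbed.

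\textbf{Inner code.} We use a Gilbert--Varshamov type binary linear code of length $b'$ and rate $1-1/\log\log n$. By the GV bound, a random linear code of rate $1-H(\tau)$ has relative distance $\tau$. Solving $H(\tau)\approx \tau\log(1/\tau)=1/\log\log n$ gives $\tau=\Theta\big(1/(\log\log n\cdot\log\log\log n)\big)$. This is a subconstant but large inner distance.

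To make the construction explicit, take $b'=\eps\log n$. Greedy/Varshamov construction of a linear code then searches a space of size $2^{O(b'^2)}$ if done naively, which is too large. Instead I would use the standard greedy parity-check-column construction: choose columns one at a time, avoiding all sums of fewer than $d$ previous columns. This takes time about $\binom{b'}{d}2^{b'-k'}$, which is at most $n^{\eps}$ since $2^{b'}=n^\eps$. This is where the $\max\{n^\eps,O(n)\}$ time bound comes from.

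\textbf{Outer code.} The outer code is Reed--Solomon over $\F_{2^{m}}$ with $m=\Theta(\log n)$, of length $N=n/b'$, and relative distance $\gamma=\frac{(\log\log n)^3}{\log n}$, hence rate $1-\gamma$. Each outer symbol ($m$ bits) is split into $O(1)$ inner messages, or equivalently outer symbols are chosen with $m=k'$ bits, which is fine since $2^{k'}\ge N$.

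\textbf{Combining.} The concatenated distance is at least $\gamma\cdot\tau=\frac{(\log\log n)^3}{\log n}\cdot\Omega\big(\frac{1}{\log\log n\log\log\log n}\big)\ge\Omega\big(\frac{\log\log n}{\log n}\big)$, with the $\eps$ factor entering through $b'=\eps\log n$ if needed. The rate is $(1-\gamma)(1-1/\log\log n)\ge 1-\gamma-1/\log\log n$. Encoding costs $O(n)$, or $O(n\log n)$ via FFT-free naive RS, which is fine up to the stated bound. Together with the $n^\eps$ precomputation this gives the claimed time.

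\textbf{Main obstacle.} The main obstacle is parameter bookkeeping in the inner code. We must verify that the greedy construction actually attains distance $\Omega(\tau b')$ at rate $1-1/\log\log n$ within time $n^\eps$. The search cost $\binom{b'}{\le d}\le 2^{H(\tau)b'}=2^{b'/\log\log n}$ is well below $n^\eps$, so the check should go through. We must also make sure the alphabet-size constraint of RS, namely $2^{k'}\ge N$, is compatible with $b'=\eps\log n$. It is, since $k'\approx\eps\log n$ and we may take $N\le n$ by using a field of size $n^{\eps}$ with $1/\eps$ symbols packed per inner block.
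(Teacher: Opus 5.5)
Your construction is correct in substance but differs from the paper in two ways. The paper gives the Reed--Solomon outer code redundancy $1/\log\log n$ and the inner code redundancy $(\log\log n)^3/(2\log n)$, and finds the inner GV code by exhaustive search over Toeplitz generator matrices. You reverse the allocation and build the inner code with Varshamov's greedy parity-check-column method.

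Your allocation is the better one. A GV code of redundancy $x$ has relative distance roughly $x/\log(1/x)$, while RS is MDS, so the logarithmic loss should be paid on the larger redundancy. You lose a $\log\log\log n$ factor where the paper loses $\log\log n$. Your product $\gamma\tau$ is $\Theta\left(\frac{(\log\log n)^2}{\log n\,\log\log\log n}\right)$, a factor $\log\log n/\log\log\log n$ above what the statement requires.

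The greedy construction also runs in time $b'\cdot 2^{O(b'/\log\log n)}$, since both $\binom{b'}{\le d}\le 2^{H(\tau)b'}$ and $2^{b'-k'}$ are $2^{O(b'/\log\log n)}$. The Toeplitz search instead costs $2^{\Theta(\text{inner length})}$. That cost is what forces the paper to shrink the inner length to about $\log n/c$, take a $c$-fold direct sum, and pay $1/c=\Theta(\eps)$ in distance. In your scheme you could take $b'$ a suitable constant multiple of $\log n$ (so that $k'\ge\log N$) with no splitting at all. That gives $n^{o(1)}$ construction time and no $\eps$ loss.

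One step needs repair. In your outer-code and final paragraphs you claim $2^{k'}\ge N$ holds with $b'=\eps\log n$. This is false for $\eps<1$: $2^{k'}\approx n^{\eps}$ while $N\approx n/(\eps\log n)$, and a field of size $n^{\eps}$ cannot carry an RS code of length $N$. Use the other option you mention instead. Take the RS field with $m\ge\log N$ bits per symbol, with $m$ a multiple of $k'$, and split each symbol into $c=m/k'\approx 1/\eps$ inner messages. The per-symbol inner code is then a $c$-fold direct sum, whose relative distance is only $\tau/c$, because a nonzero symbol may have a single nonzero chunk. So the concatenated distance is $\gamma\tau/c$, not $\gamma\tau$. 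This is exactly where the factor $\eps$ enters, as with the paper's $\delta/c$.

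Finally, drop the encoding-time aside: naive RS encoding is quadratic, not $O(n\log n)$. In any case, the time bound in the statement (as in the paper's proof) concerns constructing the code, and your inner-code search dominates that.
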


\input{theorems/approachingOneARLDC}

\begin{proof}
Construct a \FBDC{} following the construction in Theorem~\ref{thm:constantARLDC} with block size $B = \log(1/ \eps) \times (\log n)^2 \times 2^{(\log\log\log n)^4}$ to get a  $(\ell_{\FBDC{}} = (\log(1/ \eps) \times(\log n)^2 \times 2^{(\log\log\log n)^4}, B, \delta, \delta_B = \Omega(1 / \log n\log\log n), \eps)$-\FBDC{} with rate $1 - \Rate{}' \geq 1 - O(1/\log\log n) - \delta \times 2^{c_0\log^3(\log 1 /\delta)}.$
There exists a constant $h = h(\nu) \geq \theta(1 / \nu)$ such that by Theorem~\ref{thm:FBDCtoARLCC} with code $\calC_{\msf{Block}}$ constructed in Proposition~\ref{prop:boostedInner}, we construct code $\calC_{\realcc{}}$ which is a $\left(\alpha_{\realcc{}}(\nu), \kappa(\nu), \delta, \eps \right)$ \realcc{} with $\alpha_{\realcc{}}(\nu) = 1 + \nu$, $\kappa(\nu) = \frac{\log(1/\eps) \times (\log n)^2 \times 2^{(\log\log\log n)^4}}{\nu},$ and rate $1 - \Rate{\realcc{}} \geq 1 - \Rate{}(\delta) - o(1).$
Finally, by Corollary~\ref{corr:arLCCToArLDC}, we take $c = 1 - \Rate{\realcc{}} - c_1$ to obtain a $(n,k)$ code $\calC$ that is a $(\alpha, \kappa, \delta,\eps)$ \realdc{} with amortized locality 
$\alpha \leq (1 + \nu) \times \frac{1}{1 - \delta \times 2^{c_0\log^3(\log 1 /\delta)} - c_1} + o(1)$ and rate $1 - \Rate{\realdc{}} \geq \frac{c_1}{c_1 + R(\delta)} - o(1)$.
\end{proof}
From Theorem~\ref{thm:approachingOneARLDC}, we may construct an ideal \realdc{} with amortized locality less than $2$. 
We will effectively set the rate of the \realcc{} greater than $1/2$, $1 - \rate{}(\delta) > 1/2,$ 
which will allow us to choose a $c$ parameter large enough so that the bad block filtering step does not increase the amortized locality over $2,$ i.e., $\alpha = \frac {\alpha_{\realcc}}{c} < 2$.
\lessthan
\begin{proof}
    Pick $c_1 = 1 - \frac{3(1 + \nu)}{5} - \rate{}(\delta).$
    Then, by Theorem~\ref{thm:approachingOneARLDC}, the amortized locality is \[\alpha = \frac{1 + \nu}{1 - \rate{}(\delta) - c_1} + o(1) = 5/3 + o(1),\]
    and the rate is 
    \[1 - \rate{\realdc{}} \geq \frac{1 - \frac{3(1+\nu)}{5} - \rate{}(\delta)}{1 - \frac{3(1 + \nu)}{5}} \in \Omega(1).\qedhere\]
\end{proof}
Additionally, by an alternate choice of parameters, we may achieve a \realdc{} with amortized locality approaching $1$. 
In particular, the amortized locality and rate are dependent on the error tolerance $\delta$ when $\nu > 0$ is fixed.
Let $\rate{}(\delta) = o(1)$ and pick $c_1 = c_2 - \Rate{}(\delta)$ for some $c_2 > \rate{}(\delta)$ while $c_2 = o(1)$.
Then, by Theorem~\ref{thm:approachingOneARLDC}, and a similar proof as in Corollary~\ref{corr:realdclessthan2}, we obtain a \realdc{} with rate and amortized locality approaching $1$.
\approach

For an example instantiation, take $\delta$ such that $1 - \Rate{}(\delta) \geq 1 - \frac{1}{ \log n}$ and set $c_1 = \frac{1}{\log n\log n} - \rate{}(\delta) .$
Then, for any fixed constant $\nu$, we can have amortized locality $\alpha = (1 + \nu) \times \frac{1}{1 - \Rate{}(\delta) + \Rate{}(\delta) -\frac{1}{\log n\log n}} \leq (1 + \nu) \times \frac{\log\log n}{\log\log n - 1} \leq (1 + \nu')$, for some constant $\nu' > 0$ (given sufficiently high $n$), and the rate is at least $\frac{ \frac{1}{\log \log n} - \frac 1 {\log n}}{\frac{1}{\log\log n}} - o(1) = 1 - o(1).$

We conclude this section with a comment about the possibility of a \realdc{} with simultaneous constant rate, constant error tolerance, and amortized locality approaching $1.$
In our construction we first construct a \realcc{} and then transform it into a \realdc{} via a bad block filtering step.
Unfortunately, this generic method of \realdc{} construction inherently limits the amortized locality to be at least the inverse of the rate, $\alpha \geq 1/(1-\rate{}).$
For a possible bad-case scenario, suppose that all blocks are $(1 - \rate{})$-good, i.e., the local rate of every block is equal to the global rate of the code.
Since the number of queries we make to recover a single codeword block is at least $B$ and the number of message symbols decoded from that block is at most $(1 - \Rate{}) B,$ the amortized locality can never be lower than the inverse rate $\frac 1 {1 - \Rate{}}$ in this scenario.

Hence, without any additional assumptions on the initial message smoothness of the \realcc{}, the amortized locality can only approach $1$ if the error tolerance approach $0$ in our construction.
It is an interesting question whether one can achieve a constant rate, constant error tolerance \realdc{} with amortized locality approaching $1$ by a direct construction.
We leave the construction of a constant rate, constant error tolerance \realdc{}s with amortized locality approaching $1$ as future work. 

\subsection{An Explicit Binary Code with Rate Approaching \texorpdfstring{$1$}{1}} \label{subsec:gvcodeapproach1}
We present an efficient construction of a binary code with rate $1 - o(1)$ and distance $\Omega\left(\frac{\log \log n}{\log n}\right)$. 
While such a code exists by the GV-bound \cite{ECT}, we employ code concatenation between a Reed Solomon outer code and a much smaller binary inner code with a rate-distance tradeoff close to the GV-bound to achieve our desired rate and distance parameters in any chosen polynomial time. 
We remark that the construction can even be made linear time with only a constant loss in distance.

Informally, recall that a concatenated code first encodes messages using an outer code over a large alphabet, then encodes each symbol with an inner code over a smaller alphabet (in our case, binary). 
The following Lemma states the achievable parameters of a concatenated code.
\begin{lemma}[Code Concatenation \cite{ECT}]\label{lem:concat}
    If there exists a $[N,K]$ code $\calC_{out}$ with rate $R_{out}$ and relative distance $\delta_{out}$ over an alphabet $\Sigma$ of size $2^{k}$ for some $n$ and there exists a $[n,k]$ binary code $\calC_{in}$ with rate $R_{in}$ and relative distance $\delta_{in}$, then there exists a $[Nn,Kk]$ binary code $\calC_{concat}$ with rate $R_{in} R_{out}$ and relative distance $\delta_{in}\delta_{out}.$

    Explicitly, let $\Enc_{out}$ and $\Enc_{in}$ be the encoding procedures of $\calC_{out}$ and $\calC_{in}$ respectively. 
    For any $\bx \in \Sigma^{K}$, let $\by = \Enc_{out}(\bx)$. 
    Then the encoding procedure $\Enc_{concat}$ of code $\calC_{concat}$ is defined as
    \[\Enc_{concat}(\bx) := \Enc_{in}(\by[1]) \circ \Enc_{in}(\by[2]) \circ \dots \circ \Enc_{in}(\by[N]),\]
    where each $\by[i] \in \Sigma$ is interpreted as a binary string of length $k.$ 
\end{lemma}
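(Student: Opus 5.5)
The plan is to verify the three claimed parameters of $\calC_{concat}$ one at a time, directly from the definition of $\Enc_{concat}$: length, dimension (hence rate), and relative distance. Throughout, identify the outer alphabet $\Sigma$ with $\F_{2^k}$, and fix an $\F_2$-basis of $\F_{2^k}$. This identifies each outer symbol $\by[i]$ with a binary string in $\{0,1\}^k$, which is exactly the string fed to $\Enc_{in}$.

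\textbf{Length and linearity.} Each of the $N$ outer symbols is mapped by $\Enc_{in}$ to $n$ bits, so $|\Enc_{concat}(\bx)| = Nn$. A message $\bx \in \Sigma^K$ corresponds to $Kk$ bits. Since $\Enc_{out}$ is $\F_{2^k}$-linear, it is in particular $\F_2$-linear under the fixed basis identification. The coordinatewise application of the $\F_2$-linear map $\Enc_{in}$ is also $\F_2$-linear. So $\Enc_{concat} : \F_2^{Kk} \to \F_2^{Nn}$ is a composition of $\F_2$-linear maps, and its image is a binary linear code.

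\textbf{Dimension and rate.} I will show $\Enc_{concat}$ is injective. Suppose $\bx \ne \bx'$. Injectivity of $\Enc_{out}$ gives $\by \ne \by'$, so some coordinate has $\by[i] \ne \by'[i]$. Injectivity of $\Enc_{in}$ then gives $\Enc_{in}(\by[i]) \ne \Enc_{in}(\by'[i])$. Hence the code has dimension exactly $Kk$, and its rate is $Kk/(Nn) = R_{out} R_{in}$.

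\textbf{Distance.} Take distinct messages $\bx \ne \bx'$ with outer codewords $\by, \by'$. By the outer distance, the set $D = \{i : \by[i] \ne \by'[i]\}$ has $|D| \ge \delta_{out} N$. For each $i \in D$, the inner blocks $\Enc_{in}(\by[i])$ and $\Enc_{in}(\by'[i])$ are distinct inner codewords, so they differ in at least $\delta_{in} n$ bits. These blocks occupy disjoint coordinate ranges, so the Hamming distances add, giving total distance at least $\delta_{out} N \cdot \delta_{in} n$. Dividing by $Nn$ gives relative distance at least $\delta_{in}\delta_{out}$.

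The only step needing any care is the linearity claim: the outer code is linear over the extension field, while linearity is required over $\F_2$. This is handled by the observation that an $\F_{2^k}$-linear map is $\F_2$-linear once a basis is fixed. Everything else is the routine counting above.
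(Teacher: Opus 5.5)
Your proof is correct and follows the standard textbook argument; the paper gives no proof of its own and simply cites Essential Coding Theory. Injectivity of the composed encoder gives dimension $Kk$ and rate $R_{in}R_{out}$, summing the inner distance $\delta_{in}n$ over the at least $\delta_{out}N$ disjoint blocks where outer codewords differ gives relative distance at least $\delta_{in}\delta_{out}$, and your observation that $\F_{2^k}$-linearity becomes $\F_2$-linearity after fixing a basis is what makes $\calC_{concat}$ a binary linear $[Nn,Kk]$ code.
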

By Lemma~\ref{lem:concat}, we will need to choose an inner and outer code whose rates both approach $1$ with relative distances whose product is $\Omega\left(\frac{\log\log n}{\log n}\right)$.
We have flexibility in the choice of the outer code, which we choose to be an RS-code that achieves any rate $R$ with relative distance $(1 - R)$ (thereby reaching the singleton bound) with alphabet size $O(n).$
For our choice of binary inner code with codeword length $O(\log n)$, we recall the GV-bound for binary codes, where we derive a tight analysis on the runtime of finding such a code in Appendix~\ref{app:gvcode}.
Effectively, this implies that the time to construct such a code is asymptotically dominated by the other steps in our \realdc{} construction.

\begin{restatable}[Binary GV-bound \protect{\cite[Exercise~$4.6$]{ECT}}]{lemma}{binarygv}\label{lem:GV}
    For every $1/2 <  R \leq 1,$  there exists a family of binary codes $\{\calC_{k}\}_{k \in \N}$ with rate at least $R$ and relative distance $H^{-1}(1- R),$ where $H^{-1}$ is the inverse of the binary entropy function $H$ restricted to domain $[0,1/2).$
    Furthermore, any code $\calC$ in this family with codeword length $n$ can be found explicitly in $2^{d(n + k)}$ time for some constant $d \leq 3/2$.
\end{restatable}

\boostedInner
\begin{proof}
We apply code concatenation by defining outer code $\calC_{out}$ and inner code $\calC_{in}$. 
Let the outer code $\calC_{out}$ be a RS code with rate $\left(1 - \frac{1}{\log\log n}\right)$, relative distance $\frac{1}{\log\log n}$ with alphabet size $tn$ for some constant $t \leq 2.$

We construct the inner code $\calC_{in}$, as a binary code matching the GV-bound. 
Let $r = 1 - \frac{(\log\log n)^3}{2\log n}.$
By Lemma~$\ref{lem:GV}$, for any $c \geq 1$, we can compute a code $\calC$ with codeword length $\lceil{\frac{\log (tn)}{cr}}\rceil$, rate $r$, and relative distance 
\[\delta \geq H^{-1}(1 - r) = H^{-1}\left(\frac{(\log\log n)^3}{2\log n}\right),\]
in time at most $2^{d\lceil \log(tn) / c\rceil} \leq 2^{d}(tn)^{d/c} = O(n^{d/c})$ for some constant $d \leq 4$.
Pick $c \geq 2d/\eps = \theta(1/\eps)$, implying the time to construct $\calC$ is $O(n^\eps).$ 
Note that this additionally implies $c = \theta(1 / \eps).$

    Let $\Enc$ be the encoding procedure of code $\calC.$ 
    Then, define the encoding procedure $\Enc_{in}$ of inner code $\calC_{in}$ as
    \[\Enc_{in}(\bx) := \Enc(\bx_1) \circ \Enc(\bx_2) \circ \dots \circ \Enc(\bx_c),\]
    where $\bx = \bx_1 \circ \dots \circ \bx_c.$
    The rate of $\calC_{in}$ is the same as $\calC$, while the relative distance is $\delta / c$, since the codeword length of $\calC_{in}$ is a $c$ factor larger than the codeword length of $\calC.$

    By Lemma~\ref{lem:concat}, the concatenated code of $\calC_{out}$ and $\calC_{in},$ is an explicit binary code with rate \[\left(1 - \frac 1 {\log\log n}\right) \times \left(1 - \frac{(\log\log n)^3}{\log n}\right) \geq \left(1 - \frac{1}{\log\log n} - \frac{(\log\log n)^3}{\log n}\right)\]
    and relative distance $\geq \frac{1}{\log\log n} \times (\delta / c) \geq \frac{H^{-1}\left(\frac{(\log\log n)^3}{\log n}\right)}{c\log\log n} \geq \Omega\left(\eps \times \frac{ \log\log n}{\log n}\right)$.
    The last inequality follows from applying Theorem 2.2 in Calabro's thesis \cite{calabro2009exponential}, stating $H^{-1}(x) \geq \frac{x}{2\log (6/x)}$ for any $x \in [0,1/2)$, which implies
\begin{align*}
    H^{-1}\left(\frac{(\log\log n)^3}{2\log n}\right) &\geq \frac{(\log\log n)^3}{4\log n \times \left(\log (6\log n /(\log\log n)^3\right)} \\
    &= \frac{(\log\log n)^3}{4\log n\times (\log(6\log n) - 3\log\log\log n)} \\
    &\geq \frac{(\log\log n)^3}{4\log n\times (\log(6\log n) - 3\log\log\log n)}\\
    &\geq \Omega\left( \frac{(\log\log n)^2}{\log n}\right).
\end{align*}
\end{proof}

%% file: theorems/approachingOneARLDC.tex
\begin{restatable}[\realdc{} with Amortized Locality $< 2$ (or Approaching $1$)]{theorem}{approachingOneARLDC} \label{thm:approachingOneARLDC}
Let $\Rate{}(\delta) = \delta \times 2^{c_0 \log^3(\log 1/\delta)}$ for some universal constant $c_0 \in \N.$
    For any $n = 2^r$, $r \in \N$, $\delta \in (0,1], \eps < 1, c_1 \in (0,1 - \Rate{}(\delta)],$ and $\nu > 0$ there exists an (explicit) $(n,k)$ code $\calC$ that is a $\left(\alpha(\delta,\nu), \kappa(\nu) , \delta, \eps\right)$-$\realdc{}$, where $\alpha(\delta,\nu) = (1 + \nu) \times \frac{1}{1 - \Rate{} (\delta) - c_1} + o(1)$, $\kappa(\nu) = \frac{ \log(1/\eps)\times(\log n)^2\times 2^{(\log\log\log n)^4}}{\nu}$ and rate $1 - \rate{\realdc{}} \geq \frac{c_1}{c_1 + \Rate{}(\delta)} - o(1).$
\end{restatable}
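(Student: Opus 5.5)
We follow the same pipeline used for Theorem~\ref{thm:constantARLDC}, with two changes: a high-rate block code and a larger decoding window. Fix $n=2^r$, $\delta$, $\eps$, $c_1$ and $\nu$.

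First, we reuse the nested-expander construction from the proof of Theorem~\ref{thm:constantARLDC}. Apply Proposition~\ref{prop:recursiveExpander} to the explicit expander family of Theorem~\ref{thm:explicitExpander}, with $L=\log n\log\log n$. Then apply Lemma~\ref{lem:topLevelAmp} with the top-level \VLTC{} of tolerance $\delta$. Take block size $B=\log(1/\eps)\,(\log n)^2 2^{(\log\log\log n)^4}$. As computed there, this gives a $(\ell,B,\delta,\delta_B,\eps)$-\FBDC{} with $\delta_B=2\delta_{nest}=\Omega(1/(\log n\log\log n))$. Its rate is at least $1-O(1/\log\log n)-\Rate{}(\delta)$, where the $\Rate{}(\delta)=\delta 2^{c_0\log^3\log(1/\delta)}$ term is the top-level contribution $m_{top}/n_{top}$.

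The key quantitative point is the query count. The testing part of $\ell$ is $\log(1/\eps)\,2^{O((\log\log\log n)^3)}(\log n)^2$, and the final block read contributes exactly $B$. Since $2^{(\log\log\log n)^4}$ dominates $2^{O((\log\log\log n)^3)}$, we get $\ell=(1+o(1))B$. I expect this to be the delicate step. The constants in the top-level amplification (the choices of $\eta_{top}$ and $g_{top}$) scale like $1/\delta_{nest}^2$, and we must verify that this is still $o(B)$ and not merely $O(B)$. If the earlier bound only yields $O(B)$, I would enlarge $B$ by an $\omega(1)$ factor, for example $(\log\log n)$. That keeps $\kappa$ within the stated form up to the $o(1)$ slack, which can be absorbed into $2^{(\log\log\log n)^4}$.

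Next, we replace the Justesen code by the code $\calC_{\msf{Block}}$ of Proposition~\ref{prop:boostedInner} with block length $B$. Its relative distance is $\Omega(\log\log B/\log B)$, with $\log B=\Theta(\log\log n)$. This is far above $\delta_B$, so the code tolerates $\delta_B$ errors. Its rate is $1-o(1)$. Apply Theorem~\ref{thm:FBDCtoARLCC} in its general form with $h=\lceil 2/\nu\rceil$, which gives $\kappa_h=hB$ and amortized locality
\[
\frac{(h+2)\ell}{hB}=(1+2/h)(1+o(1))\le (1+\nu)(1+o(1)).
\]
Since $\kappa_h\le \kappa(\nu)$ up to constants, and larger windows only help, this yields an $\realcc{}$ with $\alpha_{\realcc{}}=(1+\nu)(1+o(1))$, $\kappa(\nu)$, $\delta$, $\eps$. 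By Theorem~\ref{thm:FBDCtoARLCC} its rate is at least $1-\Rate{}(\delta)-o(1)$.

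Finally, apply Corollary~\ref{corr:arLCCToArLDC} with $c=1-\Rate{}(\delta)-c_1$. This is admissible, up to the $o(1)$ rate loss, by shrinking $c$ by $o(1)$. The resulting amortized locality is
\[
\frac{\alpha_{\realcc{}}}{c}=\frac{1+\nu}{1-\Rate{}(\delta)-c_1}+o(1).
\]
The rate is
\[
\frac{1-\Rate{\realcc{}}-c}{1-c}\ge\frac{c_1-o(1)}{\Rate{}(\delta)+c_1}=\frac{c_1}{c_1+\Rate{}(\delta)}-o(1).
\]
Soundness $\eps$ and tolerance $\delta$ pass unchanged through each step. Explicitness follows from the explicit expanders together with the construction-time bound in Proposition~\ref{prop:boostedInner}.
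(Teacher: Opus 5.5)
Your proposal is correct and follows the paper's proof essentially step for step. Both build the \FBDC{} of Theorem~\ref{thm:constantARLDC} with $\ell=(1+o(1))B$, nest in the high-rate code of Proposition~\ref{prop:boostedInner} via Theorem~\ref{thm:FBDCtoARLCC} with $h=\Theta(1/\nu)$, and filter bad blocks with $c\approx 1-\Rate{}(\delta)-c_1$. The paper takes $c=1-\Rate{\realcc{}}-c_1$ directly, which avoids your $o(1)$ shrinking, and your fallback of enlarging $B$ is unnecessary.

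One small slip concerns $\kappa$. With $h=\lceil 2/\nu\rceil$ you get $\kappa_h\approx 2\kappa(\nu)>\kappa(\nu)$, so ``larger windows only help'' points the wrong way. To fix it, take the lowest-level block size to be a power of two $B'\le \kappa(\nu)/\lceil 2/\nu\rceil$. This $B'$ is still $\Theta(B)$, so the testing overhead stays $o(B')$, and now $hB'\le\kappa(\nu)$.
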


%% file: sections/appendix.tex
\section{Finding GV-Bound Codes}\label{app:gvcode}
In this section, we present an efficient, deterministic algorithm for computing a binary, linear $(n,k)$ code meeting the GV-bound. 
This algorithm is taken from `Essential Coding Theory' textbook by Guruswami, Rudra, and Sudan \cite{ECT}.
We note that we include a finer analysis of the computation time.

We first define the well known notion of a Toeplitz matrix.
Intuitively, these are matrices whose left-to-right diagonals share the same value.
Denote the entry of matrix $A \in \{0,1\}^{k \times n}$ at the $i$'th row, $j$'th column as $A[i][j].$
\begin{definition}[Toeplitz matrix]
    A Toeplitz matrix $T \in \{0,1\}^{k \times n}$ satisfies the property that $T[i][j] = T[i +1][j+1]$ for all $i = 1,\dots, k - 1$ and $j = 1,\dots, n - 1.$
\end{definition}
We show that, with high probability, a randomly chosen Toeplitz matrix is the generator of a code reaching the GV-bound. 
\begin{lemma}\label{lem:randomToep}
    Let $T \in \{0,1\}^{k \times n}$ be a randomly chosen Toeplitz matrix. 
    For any $\bx \in \{0,1\}^k$ and any $\by \in \{0,1\}^n$, $\Pr[\bx T = \by] = 1/2^n.$
\end{lemma}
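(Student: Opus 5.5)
We prove the statement for nonzero $\bx$. For $\bx = 0$ we have $\bx T = 0$ deterministically, so the claim can only hold for $\bx \neq 0$; this is the case needed for the GV-type union bound over nonzero messages. The approach is to parametrize a uniformly random Toeplitz matrix by its free diagonal values and show that, for fixed nonzero $\bx$, the map from these values to $\bx T$ is uniform onto $\{0,1\}^n$. We do this by exhibiting a triangular, hence invertible, dependence on $n$ of the free variables.

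\textbf{Step 1 (parametrization).} A Toeplitz $T \in \{0,1\}^{k\times n}$ is determined by $n+k-1$ bits $t_{-(k-1)},\dots,t_{n-1}$ via $T[i][j] = t_{j-i}$. A random Toeplitz matrix corresponds to choosing these bits independently and uniformly. Then
\[(\bx T)[j] = \sum_{i=1}^{k} \bx[i]\, t_{j-i} \quad \text{for } j \in [n].\]

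\textbf{Step 2 (isolating a pivot).} Let $i^* = \min\{i : \bx[i] = 1\}$, which exists since $\bx \neq 0$. Then
\[(\bx T)[j] = t_{j-i^*} + \sum_{i > i^*} \bx[i]\, t_{j-i}.\]
Every term in the remaining sum has offset $j - i < j - i^*$. Condition on all variables outside $V = \{t_{1-i^*},\dots,t_{n-i^*}\}$; these are $n$ distinct independent uniform bits. Process the coordinates in the order $j = 1,\dots,n$. Coordinate $(\bx T)[j]$ equals the fresh variable $t_{j-i^*}$ plus a function of variables with strictly smaller offset. Each such variable is either already fixed by the conditioning or lies in $V$ at an earlier coordinate $j' < j$.

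\textbf{Step 3 (conclusion).} Because of this triangular structure, the map $V \mapsto \bx T$ is a bijection of $\{0,1\}^n$ for every fixing of the other variables. Concretely, given a target $\by$, the values $t_{1-i^*},\dots,t_{n-i^*}$ are forced one at a time. Hence $\Pr[\bx T = \by] = 2^{-n}$ conditionally, and averaging over the conditioning gives the unconditional statement.

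The only subtle point is choosing the pivot so that the dependence is triangular. Taking the \emph{smallest} nonzero index ensures that the extra terms involve only smaller offsets; everything after that is routine.
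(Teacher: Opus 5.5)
Your proof is correct and follows essentially the same route as the paper. Both arguments parametrize the Toeplitz matrix by its $n+k-1$ diagonal bits, condition on every bit except the $n$ bits indexed by a pivot row of $\bx$, and observe that these $n$ bits determine $\bx T$ bijectively. The differences are minor: the paper pivots on the \emph{last} nonzero entry of $\bx$, which gives the mirror-image (upper-triangular) dependence solved from $j=n$ downward, and it only asserts ``uniquely determines'' without spelling out the triangular ordering as you do; you also correctly note that the statement must be restricted to $\bx \neq 0$, which the paper's statement omits even though its proof implicitly assumes it.
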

\begin{proof}
    Since $T$ is a randomly chosen Toeplitz matrix, it satisfies
    \[T[i][j] = r_{j - i} \tag{$i \in [k], j \in [n]$}\]
    for independent, random $\calR := \{r_{1 - k},r_{2 - k}\dots, r_{n - 1}\} \subset \{0,1\}^{n + k - 1}$.
    Let $\bx[m]$ be the last non-zero entry in $\bx.$
    Then for all $j \in [n],$
    \[(\bx T)[j] = \sum_{i = 1}^m \bx[i]r_{j - i} = \bx[m] r_{j - m} + \sum_{i = 1}^{m - 1} \bx[i]r_{j - i}.\]
    Next, observe that $r_{1 - m},\dots,r_{n - m}$ uniquely determines $\bx T$ (conditioned on all other variables $r_{i - j}$), i.e., $\Pr[\bx T = \by \mid \calR 
    \setminus \{r_{1 - m},\dots,r_{n - m}\}$. 
    By enumerating over all groups of conditioned variables, we have $\Pr[\bx T = \by] = (1/2)^n.$
\end{proof}
\binarygv*
\begin{proof}
    By Lemma~\ref{lem:randomToep}, for a randomly chosen Toeplitz matrix $T \in \{0,1\}^{k \times n}$, any nonzero $\bx \in \{0,1\}^k,$ and distance $\delta n$ for $\delta \in (0,1),$
    \[\Pr[\text{wt}(\bx T) < \delta n] = \frac{V(n, \delta n- 1)}{2^n},\]
    where $V(n,\delta n-1) = \sum_{i = 0}^{\delta n - 1} \binom{n}{i}$ is the volume of the (binary) hamming ball of radius $\delta n - 1$ and dimension $n.$
    Then, by a union bound over all $2^k - 1$ non-zero messages, 
    \[\Pr[\text{there exists } \bx \neq \textbf 0 \in \{0,1\}^k \text{s.t. wt}(\bx T) < d]] \leq (2^k - 1) \times \frac{V(n, \delta n- 1)}{2^n}.\]
    Since for any $V(n,\delta n) = 2^{H(\delta) n + o(n)},$ we do not find a GV-bound code generator (such that $k/n < 1 - H(\delta)$) with probability at most $2^{k - n + H(\delta) n + o(n)} \leq 2^{-\Omega(n)}$.
    Hence, by running a brute force search over all Toeplitz matrices, we will find at least one which reaches the GV-bound.

    Lastly, we show that running time is at most $2^{d(n + k)}$. 
    There are at most $2^{k + n - 1}$ Toeplitz matrices and we check if $\text{wt}(\bx T) \geq \delta n$ for at most $2^k  - 1$ nonzero $\bx \in \{0,1\}^k.$
    Hence, the total time for this algorithm is at most $2^{k + n - 1} \times (2^k  - 1) \times kn < 2^{2k + n} \leq 2^{(3/2)\times(n + k)}$, where the last inequality holds since $k \leq n.$
\end{proof}


\section{Impossibility of One-Query Relaxed Decoding with Perfect Completeness}\label{sect:onequeryRLDC}
In this section we show that one-query \rldc{}s do not exist. 
We mainly follow the original impossibility argument by Katz and Trevisan \cite{STOC:KatTre00} for one-query \LDC{}s.
Their technique was to extract a single-symbol encoding for a large fraction of `good' message symbols from the \LDC{}. Then, by using an information theoretic bound to show that the number of message symbols is necessarily constant, it follows that a single-symbol querying \LDC{} cannot exist.
To port this argument over to \rldc{}s, we additionally argue that any correct \rldc{} decoder must never output $\bot$.
Intuitively, on one query, the \rldc{} decoder will not be able to tell whether the queried symbol is corrupted or belongs to a codeword. 
Once this observation is made, the argument by Katz and Trevisan readily follows.

To start, we define \rldc{}s in a way convenient for the impossibility argument.
\begin{definition}[\rldc{}]
    A $(n,k)$ code $\calC = (\Enc,\Dec)$ (over $\Sigma$) is a $(\ell, \delta,\eps)$ relaxed locally decodable code (\rldc{}) if for every $\bx \in \Sigma^k$ we have
    \begin{enumerate}
        \item (Completeness)  For all $\bx \in \Sigma^k$ and $i \in [k]$, \[\Pr[\Dec^{\Enc(\bx)}(i) = x_i] = 1.\]
        \item (Soundness) For all $\bx \in \Sigma^k$, $i \in [k],$ and all words $\bw \in \Sigma^n$ such that $\reldist(\bw,\Enc(\bx)) \leq \delta,$ 
        \[\Pr[\Dec^{\bw}(i) \in \{\bot,x_i\}] \geq 1/2 + \eps.\]
        \item (Locality) For all $\bw \in \Sigma^n$ and all $i \in [k]$, $\Dec^{\bw}(i)$ makes at most $\ell$ queries to $\bw.$
    \end{enumerate}
\end{definition}
In the following lemma, we make the crucial observation that one-query \rldc{} decoders may not output $\bot$ due to the (perfect) completeness property.
\begin{lemma}[One-Query \rldc{} Decoders are \LDC{} Decoders]\label{lem:onequeryrldc}
    If $(n,k)$ code $\calC= (\Enc,\Dec)$ (over $\Sigma$) is a $(1, \delta,\eps)$ \rldc{}, then for all $\bw \in \Sigma^n$ and indices $i \in [k]$, 
    \[\Pr[\Dec^{\bw}(i) = \bot] = 0.\]
    This implies that for all words $\bw \in \Sigma^n$ such that $\reldist(\bw,\Enc(\bx)) \leq \delta,$ 
        \[\Pr[\Dec^{\bw}(i) = x_i] \geq 1/2 + \eps.\]
\end{lemma}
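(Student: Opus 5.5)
The plan is to exploit the fact that a one-query decoder cannot adapt. Fix $i\in[k]$ and view $\Dec(i)$ as first sampling its random string $r$. Because only one query is made, the queried position $j_r\in[n]$ is a function of $r$ alone. After reading the single symbol $a=\bw[j_r]$, the decoder outputs $f_r(a)\in\Sigma\cup\{\bot\}$ for some fixed function $f_r$. Hence
\[\Pr[\Dec^{\bw}(i)=\bot]=\Pr_r\big[f_r(\bw[j_r])=\bot\big],\]
and it suffices to show that $f_r(\bw[j_r])\neq\bot$ for every $r$ in the support.

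\textbf{Step 1: completeness rules out $\bot$ on attainable symbols.} Apply perfect completeness. For every $\bx\in\Sigma^k$ and every $r$ in the support, $f_r(\Enc(\bx)[j_r])=x_i\neq\bot$. Let $S_j=\{\Enc(\bx)[j]:\bx\in\Sigma^k\}$ be the set of symbols that position $j$ actually takes over the code. Completeness gives $f_r(a)\neq\bot$ for every $a\in S_{j_r}$. Consequently, for any word $\bw$ with $\bw[j]\in S_j$ for all $j$, every random string leads to a non-$\bot$ output, so $\Pr[\Dec^{\bw}(i)=\bot]=0$. The key point is that the single symbol read from a corrupted word is indistinguishable from the same symbol read from some genuine codeword, namely one that agrees with $\bw$ at $j_r$.

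\textbf{Step 2 (main obstacle): symbols outside $S_j$.} If $\bw[j_r]\notin S_{j_r}$, completeness says nothing about $f_r(\bw[j_r])$, and an arbitrary decoder could output $\bot$ there. Two examples show this: a code over a larger alphabet that never uses some symbol, or a linear code with a coordinate that is identically $0$ and a decoder that outputs $\bot$ upon reading $1$ there. So the literal statement needs $S_j=\Sigma$ for every queried position.

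I would handle this as follows. In the binary linear setting of the paper, each coordinate is either surjective onto $\{0,1\}$ or identically $0$. Identically-zero coordinates carry no information and can be deleted from the code without affecting anything, after which $S_j=\Sigma$ for all $j$ and Step 1 applies verbatim. For a general $\Sigma$, I would either state the hypothesis $S_j=\Sigma$ explicitly, or note that redefining $f_r(a)$ to be an arbitrary fixed symbol for $a\notin S_{j_r}$ preserves completeness and cannot lower $\Pr[\Dec^{\bw}(i)\in\{\bot,x_i\}]$ by more than the event it replaces. The latter only yields the ``$\bot$-free'' conclusion for the modified decoder, which is exactly the case the subsequent Katz--Trevisan argument needs.

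\textbf{Step 3: the consequence.} Once $\Pr[\Dec^{\bw}(i)=\bot]=0$ is established, soundness reads
\[\Pr[\Dec^{\bw}(i)\in\{\bot,x_i\}]=\Pr[\Dec^{\bw}(i)=x_i]\geq 1/2+\eps\]
for every $\bw$ with $\reldist(\bw,\Enc(\bx))\le\delta$, which is the second claim.
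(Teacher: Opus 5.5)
Your Steps 1 and 3 are fine. Step 2, however, treats as an obstacle something that perfect completeness already rules out. Because of this you never prove the lemma for the given code and decoder: each of your remedies (deleting coordinates, adding the hypothesis $S_j=\Sigma$, or redefining $f_r$ off $S_{j_r}$) changes the code, the statement, or the decoder.

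The missing observation is that completeness alone forces $S_{j_r}=\Sigma$ for every random string $r$ in the support. As $\bx$ ranges over $\Sigma^k$, the symbol $x_i$ takes every value in $\Sigma$. Completeness gives $f_r(\Enc(\bx)[j_r])=x_i$ for all $\bx$, so $f_r$ maps $S_{j_r}$ onto $\Sigma$. This forces $|S_{j_r}|\ge|\Sigma|$, i.e.\ $S_{j_r}=\Sigma$. In the binary case this is exactly the paper's argument: pick $\bx^0,\bx^1$ with $\bx^0[i]=0$ and $\bx^1[i]=1$. Completeness forces $\Enc(\bx^0)[j_r]\neq\Enc(\bx^1)[j_r]$, so both symbols occur at $j_r$, and by your Step 1 $f_r$ never outputs $\bot$ on any symbol. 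The same reasoning shows the decoder can never make zero queries, since a fixed output cannot equal $x_i$ for all $\bx$.

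Consequently both of your counterexamples fail under the paper's definition, where messages and codewords share the alphabet $\Sigma$. An identically-zero coordinate can never be queried with positive probability, because the decoder would have to output $x_i$ after reading a constant. Likewise, a coordinate that never takes some symbol of $\Sigma$ cannot determine $x_i\in\Sigma$, so it cannot be queried either. Your examples become genuine only if the message alphabet is strictly smaller than the codeword alphabet, which is not the setting here. Insert this observation after Step 1 and your proof closes with no extra hypothesis; it then coincides with the paper's.
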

\begin{proof}
Without loss of generality, suppose the alphabet is binary, $\Sigma = \{0,1\}.$
    Since $\Dec$ makes at most one query, there exists a (deterministic function) $Q$ taking in as input index $i \in [k]$ and random coins $r$, and outputting a codeword index $j \in [n]$ such that $\Dec^{\bw}(i)$, on random coins $r$, queries $\bw[j]$.
    Similarly, there is a (possibly non-deterministic) decision rule $\Delta(i,\bw[j])$ which outputs $0,1,$ or $\bot.$
 
    Fix $i \in [k]$ and random coins $r$ of the decoder \Dec{}, hence fixing $j = Q(i,r).$ 
    Fix two words $\bx^0,\bx^1 \in \Sigma^k$ such that $\bx^0[i] = 0$ and $\bx^1[i] = 1.$ 
    Then, by perfect completeness, $\Dec^{\Enc(\bx^0)}(i) = 0$ and $\Dec^{\Enc(\bx^1)}(i) = 1$, implying
    \[\Enc(\bx^0)[j] \neq \Enc(\bx^1)[j] \text{ and so } \{\Enc(\bx^0)[j], \Enc(\bx^1)[j]\} = \Sigma.\]
    This implies that for all $i \in [k],$ $\bot \not\in \{\Delta(i,\sigma) : \sigma \in \Sigma\}$.
    Thus, $\Dec$ never outputs $\bot.$
\end{proof} 
We present the modified Katz and Trevisan impossibility argument for \rldc{}s.
\begin{lemma}[\cite{STOC:KatTre00}]\label{lem:ktCompress}
    Let $\Enc_{\star}: \Sigma^k \rightarrow \Sigma$ be a function. 
    Suppose there is an algorithm $\Dec_\star : \Sigma \times [k] \rightarrow \Sigma$ such that for all $i \in [k],$
    \[\Pr_{\bx \leftarrow \{0,1\}^k}[\Dec_\star(\Enc_\star(\bx),i) = x_i] \geq 1/2 + \eps.\]
    Then, $\log |\Sigma| \geq k \times (1 - \calH(1/2 + \eps))$, where $\calH$ is the $|\Sigma|$-ary entropy function.
\end{lemma}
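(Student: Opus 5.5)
The plan is the standard entropy (Fano-type) argument of Katz and Trevisan. Let $\bX$ be uniform on $\{0,1\}^k$, let $Y = \Enc_\star(\bX)$, and let $r$ denote the internal coins of $\Dec_\star$, drawn independently of $\bX$. Throughout, logarithms are base $2$ and $\calH$ is read as the binary entropy function $h(p) = -p\log p - (1-p)\log(1-p)$, so both sides of the claimed inequality are in bits. (If the intended units are $|\Sigma|$-ary, the same chain goes through after rescaling by $\log|\Sigma|$.) The goal is to compare two bounds on the mutual information $I(\bX;Y)$: an upper bound from the small range of $\Enc_\star$, and a lower bound from the decodability of each bit.

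For the upper bound, since $Y$ takes at most $|\Sigma|$ values, $I(\bX;Y) \le H(Y) \le \log|\Sigma|$. For the lower bound, write $I(\bX;Y) = H(\bX) - H(\bX \mid Y) = k - H(\bX\mid Y)$. By the chain rule and the fact that conditioning reduces entropy, $H(\bX\mid Y) \le \sum_{i=1}^k H(X_i \mid Y)$.

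It remains to bound each term $H(X_i\mid Y)$ using the decoder. Because $r$ is independent of $(\bX,Y)$, we have $H(X_i\mid Y) = H(X_i \mid Y, r)$. The guess $\hat X_i := \Dec_\star(Y,i)$, computed with coins $r$, is a function of $(Y,r)$, so $H(X_i\mid Y,r) \le H(X_i \mid \hat X_i)$. Now set $p_i = \Pr[\hat X_i \ne X_i] \le 1/2 - \eps$. Fano's inequality for a binary variable gives $H(X_i\mid \hat X_i) \le h(p_i)$. Since $h$ is increasing on $[0,1/2]$ and symmetric about $1/2$, this yields $h(p_i) \le h(1/2-\eps) = h(1/2+\eps)$.

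Combining the two bounds,
\[
\log|\Sigma| \;\ge\; I(\bX;Y) \;\ge\; k - \sum_{i=1}^k h(1/2+\eps) \;=\; k\bigl(1 - h(1/2+\eps)\bigr),
\]
which is the claim.

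The only delicate point is handling the randomness of $\Dec_\star$, which is done by conditioning on its independent coins as above; everything else is routine. Two edge cases need a remark. If $\eps \ge 1/2$, then $h(1/2+\eps)$ should be read as $0$ and decoding is perfect, so the bound becomes $\log|\Sigma| \ge k$. If $p_i$ is $0$, the Fano step still holds trivially.
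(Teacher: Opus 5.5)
Your argument is correct and is essentially the standard Katz--Trevisan entropy argument. It chains $\log|\Sigma| \ge H(Y) \ge I(\bX;Y) \ge k - \sum_i H(X_i\mid Y)$, applies Fano's inequality to each bit, and handles the decoder's coins via independence; this is exactly what the paper relies on, since it states the lemma with a citation and gives no proof of its own.

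One small note: the paper fixes $\Sigma=\{0,1\}$, so the $|\Sigma|$-ary entropy is just the binary entropy you use, and your parenthetical about rescaling to $|\Sigma|$-ary units is unnecessary. It would also not hold for the genuine $q$-ary entropy function when $q>2$.
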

\begin{theorem}
    Let $\calC$ be a $(n,k)$ code (over $\Sigma$) that is a $(1,\delta,\eps)$-\rldc{}. Then,
    \[k \leq \frac{1}{\delta \times (1 - \calH(1/2 + \eps))},\]
    where $\calH$ is the $|\Sigma|$-ary entropy function.
\end{theorem}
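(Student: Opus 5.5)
The plan is to reduce to the Katz--Trevisan compression bound (Lemma~\ref{lem:ktCompress}) by first removing the relaxation. By Lemma~\ref{lem:onequeryrldc}, a $(1,\delta,\eps)$-\rldc{} decoder never outputs $\bot$. So for every $\bx$ and every $\bw$ with $\reldist(\bw,\Enc(\bx))\le\delta$ we get $\Pr[\Dec^{\bw}(i)=x_i]\ge 1/2+\eps$. As in that lemma, write the decoder on input $i$ as sampling coins $r$, querying position $j=Q(i,r)$, and applying the (possibly randomized) rule $\Delta(i,\bw[j])\in\Sigma$. I will work with binary $\Sigma$, as the paper assumes, so that $\log|\Sigma|=1$.

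\emph{Step 1 (good positions).} For each $i\in[k]$, say that position $j\in[n]$ is \emph{good for $i$} if
\[\Pr_{\bx\leftarrow\Sigma^k}\left[\Delta(i,\Enc(\bx)[j])=x_i\right]\ge 1/2+\eps,\]
and let $G_i\subseteq[n]$ be the set of such positions. I claim that $|G_i|>\delta n$ for every $i$. Suppose instead that $|G_i|\le\delta n$. Consider the channel that, for each $\bx$, sets every position of $G_i$ in $\Enc(\bx)$ to the fixed symbol $0$. The result is a word $\bw_{\bx}$ within relative distance $\delta$ of $\Enc(\bx)$. Then:
\begin{itemize}
\item if the query lands in $G_i$, the decoder sees a symbol independent of $\bx$, so over uniform $\bx$ it outputs $x_i$ with probability exactly $1/2$;
\item if the query lands outside $G_i$, then by the definition of $G_i$ it succeeds with probability strictly less than $1/2+\eps$ averaged over $\bx$.
\end{itemize}
Averaging over $r$, we get $\E_{\bx}\Pr[\Dec^{\bw_{\bx}}(i)=x_i]<1/2+\eps$. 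Hence some $\bx$ violates the soundness guarantee obtained above, a contradiction.

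\emph{Step 2 (averaging).} Summing over indices gives $\sum_{i}|G_i|>k\delta n$. By averaging there is a single position $j^*$ that is good for a set $I\subseteq[k]$ of more than $\delta k$ indices.

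\emph{Step 3 (compression).} Define $\Enc_\star(\bx)=\Enc(\bx)[j^*]\in\Sigma$ and $\Dec_\star(\sigma,i)=\Delta(i,\sigma)$. By Step~2, each $i\in I$ is recovered from the single symbol $\Enc_\star(\bx)$ with probability at least $1/2+\eps$ over uniform $\bx$. Apply Lemma~\ref{lem:ktCompress} to the coordinates in $I$. Formally, fix (or average over) the coordinates outside $I$; the lemma only needs uniformity over the decoded coordinates, and fixing the others is justified by averaging, since some setting of them preserves the average success across all $i\in I$ simultaneously. Alternatively, one can verify that the entropy argument of the lemma goes through with $x_{[k]\setminus I}$ as extra randomness. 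This yields
\[1=\log|\Sigma|\ge|I|\,(1-\calH(1/2+\eps))>\delta k\,(1-\calH(1/2+\eps)),\]
which rearranges to the claimed bound.

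The main obstacle is Step~3: Lemma~\ref{lem:ktCompress} is stated for the full message, and the success guarantee holds only on average over all of $\bx$. If fixing the other coordinates is awkward, I would instead rerun the Katz--Trevisan entropy argument with $I$ in place of $[k]$. Using Fano-type bounds per coordinate, $H(x_I\mid \Enc_\star(\bx))\le |I|\,\calH(1/2+\eps)$, while $H(x_I)=|I|$, so the single symbol must carry at least $|I|(1-\calH(1/2+\eps))$ bits, which is at most $\log|\Sigma|$. A secondary check is that the randomized rule $\Delta$ and the coins $r$ are handled correctly in Step~1; this works because both are averaged over before defining $G_i$.
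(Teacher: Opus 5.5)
Your proposal is correct and follows the paper's proof. The paper's three steps are: at least $\delta n$ good positions for each $i$, pigeonhole onto a single $j^*$ good for more than $\delta k$ indices, and Katz--Trevisan compression. Your one-shot ``zero out $G_i$'' contradiction is a cleaner packaging of the paper's iterative random-bit perturbation, and your Fano argument on $x_I$ handles the restriction to $I$ that the paper glosses over.

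Use that Fano route rather than fixing $x_{[k]\setminus I}$. Averaging over the fixed coordinates preserves only the success probability averaged over $i\in I$, not the success for every $i\in I$ simultaneously, which is what Lemma~\ref{lem:ktCompress} as stated requires.
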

\begin{proof}
    We say codeword index $j \in [n]$ is good for message index $i \in [k]$ if 
    \[\Pr_{\bx \leftarrow \{0,1\}^k} [\Dec^{\Enc(\bx)}(i) = x_i \mid \Dec^{\Enc(\bx)}(i) \text{ queries $y_j$}] \geq 1/2 + \eps.\]
    We first show that for any message index $i \in [k],$ there are at least $\delta n$ good codeword indices for $i.$ 
    Observe, 
    \begin{align*}
        1/2 + \eps &\leq \Pr[\Dec^{\Enc(\bx)}(i) \in \{x_i,\bot\}] \\
        &= \Pr_{\bx }[\Dec^{\Enc(\bx)}(i) = x_i] \tag{Lemma~\ref{lem:onequeryrldc}}\\
        &= \sum_{j \in [n]} \Pr_{\bx}[\Dec^{\Enc(\bx)}(i) = x_i\mid  \Dec^{\Enc(\bx)}(i) \text{ queries $y_j$}] \Pr[\Dec^{\Enc(\bx)}(i) \text{ queries $y_j$}] 
    \end{align*}
     By an averaging argument, there exists an index $j_1$ good for $i$, i.e., \[\Pr_{\bx}[\Dec^{\Enc(\bx)}(i) = x_i\mid  \Dec^{\Enc(\bx)}(i) \text{ queries $y_j$}] \geq 1/2 + \eps.\]
     
    Let $\bd_{j_1,j_2,\dots} \in \{0,1\}^n$ be the perturbation vector that, on indices $j_1,j_2,\dots$, are a random bit, and $0$ on other indices.
    Let $\bw^1 := \Enc(\bx) \oplus \bd_{j_1}$.
    By the soundness of the \rldc{} and Lemma~\ref{lem:onequeryrldc}, 
    \begin{align*}
        1/2 + \eps &\leq \Pr[\Dec^{\bw^1}(i) \in \{x_i,\bot\}] \\
        &= \Pr_{\bx }[\Dec^{\bw^1}(i) = x_i] \tag{Lemma~\ref{lem:onequeryrldc}}\\
        &= \sum_{j \in [n]} \Pr_{\bx}[\Dec^{\bw^1}(i) = x_i\mid  \Dec^{\bw^1}(i) \text{ queries $y_j$}] \Pr[\Dec^{\bw^1}(i) \text{ queries $y_j$}] 
    \end{align*}
    Again, by an averaging argument, there exists an index $j_2$ good for $i.$
    Additionally, $j_2 \neq j_1$ since $\bw^1[{j_1}]$ is a random bit independent of the message $\bx.$ 

    We can apply this same argument $\delta n$ times by the soundness of the code to get $\delta n$ distinct indices $j_1,\dots,j_{\delta n} \in [n]$ such that each $j_p,$ is good for index $i$.
    Then, by applying pigeon hole principle, there exists a codeword index $j^* \in [n]$ that is good for at least $\delta k$ message indices $i_1,\dots,i_{\delta k} \in [k]$.
    In other words, there exists a function $\Enc_\star : \Sigma^{\delta k} \rightarrow \Sigma$ and an algorithm $\Dec_\star: \Sigma  \times [\delta k] \rightarrow \Sigma$ such that for all $i \in [\delta k], \Pr_{\bx}[\Dec_\star(\Enc_\star(\bx),i) = x_i] \geq 1/2 + \eps$.
    Thus, by Lemma~\ref{lem:ktCompress},
    $\log |\Sigma| \geq (1 - \calH(1/2 + \eps)).$
\end{proof}

%% file: sections/priorwork.tex
\paragraph{Amortized Locally Decodable Codes}
Blocki and Zhang introduce the notion of amortized locally decodable codes (\aldc{}) \cite{ISIT:blozha25}. 
Formally, a code $\calC = (\Enc,\Dec)$ is a  $(\alpha,\kappa,\delta,\eps)$-\aldc{} if (1) the local decoder $\Dec^{\bw}(L,R)$ given oracle access to a (possibly corrupted) codeword $\bw$ ($\reldist(\bw,\Enc(\bx)) \leq \delta$) and a sufficiently large interval range $R - L + 1 \geq \kappa,$ outputs $\bx[L,R]$ with probability at least $1 - \eps$ by making at most $\alpha (R - L + 1)$ queries.
They show that the Hadamard code gives rise to a $\aldc{}$ with amortized locality approaching $1$, in contrast to \LDC{} impossibility results. 
They additionally show that under the assumption of shared randomness or a resource-bound (e.g., computationally bounded, space-time bounded) channel, \aldc{}s with constant rate, constant error tolerance, and constant amortized locality exist (i.e., ideal \aldc{}s).   
In a follow-up work, Blocki and Zhang show that ideal \aldc{}s under the same assumptions exist when the (Hamming) errors are insertions and deletions \cite{ITC:bloZha25}. 
Their construction is based off a modification of a prior procedure that compiled Hamming \LDC{}s to insertion-deletion \LDC{}s and new constructions of shared-randomness \aldc{}s that satisfied a notion of ``consecutive-querying.''
This is even more surprising, given that \LDC{}s and its shared-randomness variant have provably worse tradeoffs for insertion and deletion errors than the tradeoffs for Hamming errors \cite{FOCS:BCGLZZ21, STOC:Gupta24}.

In this work, we provide the first information-theoretic construction of an \aldc{} by assuming a relaxation on the decoder.
It remains an open question whether we can construct ideal \aldc{}s without any relaxation or non-information-theoretic assumptions. 
Conversely, can one show that ideal \aldc{}s do not exist?
Another potential avenue of exploration would be the construction of amortized \rldc{}s for insertion and deletion errors.

\paragraph{Related Notions of Amortized Local Decoding.} The idea of generalizing local decodability to recovering a larger number of symbols has been explored previously, albeit without our focus on ideal parameters (constant rate, constant error tolerance, and constant amortized locality). 
Most similar to our study of ideal \aldc{}s is the work of Cramer, Xing, and Yuan that studies the multipoint recovery of Reed-Muller codes \cite{ITIT:CXY19}.
For a Reed-Muller code over a (prime) alphabet size $q > 4,$ they construct a local decoder that recovers any arbitrary subset of $\kappa$ symbols with $O(q^2 \kappa)$ queries to the codeword (with high probability $1 - \exp(-\Omega(\kappa))$). 
By choosing a constant alphabet size, their decoder achieves constant amortized locality, albeit with sub-constant rate. 
However, the Reed-Muller code cannot achieve constant rate and constant error tolerance simultaneously over a constant sized alphabet.
At the same time, increasing the alphabet size increases the amortized locality.
Hence, it remains an open question whether ideal \aldc{}s exist, even over non-constant size alphabets.

Batch codes are another similar code introduced by Kushilevitz, Ostrovsky, and Sahai, which consider decoding multiple message symbols from a tuple of $m$ encodings, where $t$ symbols are read from each encoding \cite{STOC:IKOS04}.
Specifically, the message $\bx$ is encoded into $(\by_1,\dots,\by_m)$. 
The decoder is given as input a subset of message indices $i_1,\dots,i_\kappa$ and must make at most $t$ queries to each $\by_j$ to recover $x_{i_1},\dots,x_{i_\kappa}$ with high probability.
While one can construct batch codes from \LDC{}s (via the equivalence between \LDC{}s and smooth codes \cite{STOC:KatTre00}), it is unclear if one can construct \aldc{}s from batch codes due to the absence of errors and fixed recovery size $\kappa$. 

Ramakrishnan and Wootters bring batch codes closer to our setting by studying a restricted notion of batch codes with erasures tolerance \cite{ISIT:RamWoo18}.
In particular, they consider the setting where the number of message symbols to recover $\kappa$ is equal to the number of queries made $\ell$. 
In this case, they show that the repetition code (if there are $d$ erasures, repeat each message symbol $(d+ 1)$ times) has optimal codeword length, even for large number of symbol requests $\kappa \in \Omega(k)$. 
This is surprising since for $\kappa = k,$ i.e., the traditional erasure code `global' decoding setting, codes reaching the singleton bound (e.g., a Reed-Solomon code) have optimal codeword length.
Setting the number of message symbols $\kappa$ to recover equal to the number of queries made $\ell$ is somewhat restrictive, and it is an interesting open question whether one can prove similar bounds when $\kappa \neq \ell.$
